\newcommand*{\addFileDependency}[1]{
  \typeout{(#1)}
  \@addtofilelist{#1}
  \IfFileExists{#1}{}{\typeout{No file #1.}}
}
\newtheorem{prop}{Proposition}
\newcommand\ccell[1]{\multicolumn{1}{c}{#1}}
\newcommand{\indep}{\perp \!\!\! \perp}
\newcommand{\pvalf}[1]{%
\ifcase\pdfstrcmp{p-value}{#1}
#1
\or
\ifcase\pdfstrcmp{HRT}{#1}
#1
\or
\ifcase\pdfstrcmp{CRT}{#1}
#1
\or
  \ifnum\fpeval{#1 <= 0.05} = 1
   \textcolor{Green}{$#1$}
   \else 
    \ifnum\fpeval{#1 > 0.05} = 1
        \textcolor{red}{$#1$}%
    \else
        \textcolor{black}{#1}%
   \fi
  \fi
  \fi
  \fi
  \fi
}
\newcommand{\evalf}[1]{
\ifcase\pdfstrcmp{test martingale}{#1}
#1
\or
  \ifnum\fpeval{#1 > 20} = 1
   \textcolor{Green}{$#1$}%
  \else 
    \ifnum\fpeval{#1 > 10} = 1
        \textcolor{Emerald}{$#1$}%
    \else 
     \ifnum\fpeval{#1 > 3.16} = 1
        \textcolor{cyan}{$#1$}%
     \else 
      \ifnum\fpeval{#1 > 1} = 1
        \textcolor{YellowOrange}{$#1$}%
      \else 
        \ifnum\fpeval{#1 <= 1} = 1
            \textcolor{red}{$#1$}
        \else
            \textcolor{black}{#1}%
        \fi
     \fi
    \fi
   \fi
  \fi
  \fi
}
\newcommand{\neff}[1]{%
\ifcase\pdfstrcmp{stopping time}{#1}
#1
\or
  \ifnum\fpeval{#1 = 1555} = 1
   \textcolor{Gray}{#1}
  \else
    \textcolor{black}{#1}%
  \fi
  \fi
}
\newcommand{\neffstock}[1]{%
\ifcase\pdfstrcmp{stopping time}{#1}
#1
\or
  \ifnum\fpeval{#1 = 2421} = 1
   \textcolor{Gray}{$#1$}
  \else
    \textcolor{black}{#1}%
  \fi
  \fi
}
\newcommand{\sectortech}[1]{%
  \ifstrequal{#1}{Information Technology}{\textbf{#1}}{#1}
}
\newtheorem{theorem}{Theorem}
\newtheorem{lemma}{Lemma}
\newtheorem{definition}{Definition}%
\begin{document}

%
\runningtitle{Model-X Sequential Testing for Conditional Independence via Testing by Betting}

%
\runningauthor{Shalev Shaer, Gal Maman, Yaniv Romano}

\twocolumn[

\aistatstitle{Model-X Sequential Testing for Conditional Independence \\ via Testing by Betting}

\aistatsauthor{ Shalev Shaer$^\ast$$^{,1}$ \And Gal Maman$^\ast$$^{,1}$ \And  Yaniv Romano$^1$$^{,2}$}
\aistatsaddress{$^{1}$Department of Electrical and Computer Engineering, Technion--Israel Institute of Technology \\
$^{2}$Department of Computer Science, Technion--Israel Institute of Technology}]

\begin{abstract}
This paper develops a model-free sequential test for conditional independence. The proposed test allows researchers to analyze an incoming i.i.d. data stream with any arbitrary dependency structure, and safely conclude whether a feature is conditionally associated with the response under study. We allow the processing of data points online, as soon as they arrive, and stop data acquisition once significant results are detected, rigorously controlling the type-I error rate. Our test can work with any sophisticated machine learning algorithm to enhance data efficiency to the extent possible. The developed method is inspired by two statistical frameworks. The first is the model-X conditional randomization test, a test for conditional independence that is valid in offline settings where the sample size is fixed in advance. The second is testing by betting, a ``game-theoretic'' approach for sequential hypothesis testing. We conduct synthetic experiments to demonstrate the advantage of our test over out-of-the-box sequential tests that account for the multiplicity of tests in the time horizon, and demonstrate the practicality of our proposal by applying it to real-world tasks.
\end{abstract}

\section{INTRODUCTION}
\label{sec:intro}

A central problem in data analysis is to rigorously find conditional associations in complex data sets with nonlinear dependencies. This problem lies at the heart of causal discovery \citep{pearl2000models,peters2017elements}, variable selection \citep{barber2015controlling,candes2018panning},
machine learning interpretability \citep{burns2020interpreting,lu2018deeppink}, economics \citep{angrist2011causal,wang2018characteristic}, and genetics studies \citep{sesia2019gene,bates2020causal}, to name a few. In such applications, the data are often collected online,
and, naturally, researchers are interested in analyzing the data points immediately after they are observed so that further data acquisition can be terminated as soon as significant results are detected. This experimental setting, for example, is typical in decision-making \citep{nikolakopoulou2018continuously,bhui2019testing} and clinical trials \citep{park2018critical}, where the need for additional samples to obtain accurate statistical inference must frequently be balanced with experimental costs.

To formalize the problem, suppose we are given a stream of data points $(X_t,Y_t,Z_t)$ for $t\in\mathbb{N} =1,2,\dots$, where each triplet contains a response $Y_t \in \mathbb{R}$, a feature $X_t \in \mathbb{R}$, and a vector of covariates $Z_t \in \mathbb{R}^d$. We assume the observations are sampled i.i.d. from $P_{YXZ}=P_{Y \mid XZ}P_{XZ}$, where $P_{Y \mid XZ}$ is unknown. Given such an online data stream, our goal is to test for \emph{conditional independence} (CI), where the null hypothesis is given by
\begin{equation}
\label{eq:hypo}
    H_0: X_t \indep Y_t \mid Z_t \ \ \text{for all} \ t \in \mathbb{N}.
\end{equation}
In words, we say that $H_0$ is true if $X_t$ is independent of the response $Y_t$ after accounting for the effect of the covariates $Z_t$, \emph{simultaneously for all time steps} $t$. We refer to $X_t$ that satisfies $H_0$ as an `unimportant' feature. Analogously, the alternative  hypothesis implies that $X_t$ carries new information on the response $Y_t$ beyond what is already contained in $Z_t$, i.e., $X_t \not \indep Y_t \mid Z_t$. Therefore, we say that such a feature $X_t$ is `important'. 

The goal of sequential hypothesis testing is to formulate a concrete decision rule on whether we can confidently reject the null at each time step $t$, by monitoring and accumulating the evidence collected at each step against the null using past data $\{(X_{s},Y_{s},Z_{s})\}_{s=1}^t$ \citep{wald1945sequential}.
This allows the analyst great flexibility, as she can decide, at each step, whether new data should be collected to support the question under study. Key to this setting is the need to provide the analyst with a tool that rigorously controls the type-I error rate---defined as the probability of rejecting the null when it is in fact true---at any given desired level $\alpha$, simultaneously for all time steps $t$. This requirement should not be confused with the premise of classic offline tests for CI that attain type-I error rate control \emph{only when the sample size is fixed in advance}. We refer to these as offline tests, emphasizing that one cannot naively monitor the outcome of a classic test---a p-value---and reject the null at an optional time step $t$ without accounting for the multiplicity of the tests across the time horizon; this strategy would result in inflation of the type-I error rate.  
Beyond online type-I error rate control, ideally, we wish to have a powerful test that would reject the null when it is false, and we want it to do so as early as possible.

\subsection*{Our contribution}

In this paper, we develop a novel sequential test for CI. Our proposal takes inspiration from two powerful and attractive statistical tools that are gaining increasing attention in recent years. The first is the model-X conditional randomization test (CRT) by \cite{candes2018panning}, an \emph{offline} test for CI.
The second is testing by betting \citep{shafer2019book,grunwald2020safe}, a ``game-theoretic'' approach for sequential hypothesis testing, where our proposal is very much inspired by the line of work reported in \citep{lady,ramdas2022testing}. The method we introduce in this paper, presented in Section~\ref{sec:proposed}, generalizes the offline CRT to the challenging online setting, resulting in a new test with the following features.

\textbf{Safe testing with early stopping}: building on recent advances in sequential testing using e-values and martingales, detailed in Section~\ref{sec:testing_by_betting}, the proposed CI test is guaranteed to control the type-I error rate at any time step. In particular, the analyst is allowed to track the outcome of the test over time, and safely reject the null if it exceeds a user-defined significance level, preventing a wasteful collection of unnecessary new data points.

\textbf{Model-X setting}: similar to the offline CRT method, described in Section \ref{sec:model_x}, the online test we propose does not make any assumptions on the conditional distribution of $Y \mid X, Z$. For instance, we do not make unrealistic assumptions that the relationship between $Y$ and $(X,Z)$ is linear, or that $Y \mid X,Z$ is Gaussian. However, this advantage comes at the cost of assuming that the distribution of $X \mid Z$ is known. This assumption is common to all tests belonging to the family of model-X knockoffs, including the CRT, and it is manageable when (i) large unlabeled data are available in contrast to labeled data, or (ii) when we have good prior knowledge about the distribution of $X \mid Z$ \citep{candes2018panning,sesia2019gene,romano2020deep}. We discuss this in more detail in Section~\ref{sec:practical}.

\textbf{Online learning from past experience}: the proposed test can leverage any machine learning algorithm to powerfully discover violations of the CI null. In particular, when a new triplet $(Y_t, X_t, Z_t)$ is observed, we use online learning techniques to efficiently update the running predictive model, instead of fitting a new model from scratch. This way, the whole data stream is used for training in a computationally efficient manner. 
The proposed framework also falls under the umbrella of interactive tests \citep{lei2018adapt,lei2021general,ibet}, providing the analyst the liberty to look at past data and decide how to modify the learning algorithm at any time step---e.g., to switch to a model that is more robust to outliers---to better discriminate the null and alternative hypotheses when applied to future test points.
    
\textbf{Optimized software package}: we provide a python code that implements our testing framework, is available at \url{https://github.com/shaersh/ecrt}.
The package includes important design principles: an automatic hyper-parameter tuning that does not require fitting the machine learning model from scratch (Supplementary Section~\ref{supp:online}); an ensemble procedure for improving the power of the test by averaging multiple martingales (Section~\ref{sec:practical}); and a de-randomization procedure that also improves power by reducing inherent algorithmic randomness due to a sampling mechanism that is necessary to formulate the test (Section~\ref{sec:practical}).

\section{MODEL-X CI TESTING}
\label{sec:model_x}

The CRT, developed by \cite{candes2018panning}, is an \emph{offline} test for CI that we build upon in this work. A key advantage of the CRT is that it assumes nothing on the conditional distributions of $Y \mid X,Z$ and $Y\mid Z$. This test, however, assumes that the conditional distribution of $X \mid Z$ is known. The CRT procedure, described in Algorithm~\ref{alg:CRT} in Supplementary Section~\ref{supp:crt}, resembles classic permutation tests and has two key components: a test statistic function $T(\cdot)$ and a function that samples  dummy features $\tilde{X}$ from $P_{X \mid Z}$. Since $\tilde{X}$ is sampled without looking at $Y$, the dummy triplets $(\tilde{X},Y,Z)$ satisfy $\tilde{X} \indep Y \mid Z$ by construction. Hence, by comparing the test statistic evaluated on the original $\{(X_i,Y_i,Z_i)\}_{i=1}^n$ and dummy $\{(\tilde{X}_i, Y_i, Z_i)\}\}_{i=1}^n$ triplets, the CRT generates a valid p-value $p_n$, controlling the CI null at level $\alpha$ when the sample size $n$ is fixed in advance \citep{candes2018panning}, i.e., 
\begin{align}\label{eq:p-value}
    \mathbb{P}[p_n \leq \alpha \mid \text{the null is true}] \leq \alpha \ \text{for a fixed} \ n.
\end{align}
Put differently, when all $n$ observations are available before testing, one can use $p_n$ to rigorously control the type-I error. However, future observations cannot be utilized to generate a new p-value (e.g., in cases where the null is not rejected) without a proper correction that ensures the validity of the sequential test. To see this, suppose for simplicity that under the null $p_n \sim \text{Uniform}(0,1)$ is distributed uniformly over the $[0,1]$ interval for any fixed $n$, satisfying~\eqref{eq:p-value}. Next, let $\tau$ be a data-dependent stopping time, given by
\begin{align}
    \tau = \{\min n : p_{n} \leq \alpha, n \in \mathbb{N} \}.
\end{align}
Now, observe that with this choice of stopping time, $\mathbb{P}[p_\tau \leq \alpha \mid \text{the null is true}]$ cannot be bounded by $\alpha$ anymore: there exists $\tau$ such that a rejection rule $p_\tau \leq \alpha$ would result in an invalid $\alpha$-level test.

In many applications, however, one is interested in applying the test online to obtain reliable data-driven conclusions as soon as possible.
This motivates us to adopt a fresh statistical approach for hypothesis testing, called testing by betting, briefly described in the next section.
\section{TESTING BY BETTING}
\label{sec:testing_by_betting}

Before diving into the mathematical principles of the testing by betting approach, we follow \cite{shafer2019book} and \cite{shafer2021testing} and present an intuitive interpretation of this framework. Imagine we are playing a game, in which we start with initial toy money. At each time step, we place a bet against the null hypothesis, and then reality reveals the truth. If this bet turns out to be correct, our wealth is increased by the money we risk in the bet; otherwise, we lose and the wealth is decreased accordingly. If our wealth at time $t$ is at least $1/\alpha$ times as large as the initial toy money we started with (e.g., we have managed to multiply our initial money by a factor of $1/0.05 = 20$ for $\alpha=0.05$) we can confidently reject the null, knowing that the type-I error is guaranteed to be controlled at level~$\alpha$. A property important to the formulation of the above game is this: if the null is true, the game must be fair in the sense that it is unlikely we will be able to significantly increase our initial toy money, no matter how sophisticated our betting strategy is. 

A mathematical object that is crucial to formalize a fair game is a \textit{test martingale}, defined below.
\begin{definition}
\label{def:test_martingale}
A random process $\{S_t : {t \in \mathbb{N}_0}\}$ is a test martingale for a given null hypothesis $H_0$ if it satisfies the following conditions: (i) $S_0=1$, (ii)  $S_t \geq 0,\ \forall t \in \mathbb{N}_0$, and (iii) $\{S_t : {t \in \mathbb{N}_0}\}$ is a supermartingale under $H_0$.
\end{definition}
\noindent In the view of testing by betting, the initial value of the test martingale $S_0$ represents the initial toy money in the game, and $S_t$ corresponds to our wealth at time $t$. Now, suppose we are handed a valid test martingale $\{S_t : {t \in \mathbb{N}_0}\}$, and let $\tau \geq 1$ be a data-dependent optional stopping time. By invoking the \textit{optional stopping theorem} we get 
\begin{equation}
    \label{eq:stopping}
    \mathbb{E}_{H_0}[S_{\tau}] \leq  \mathbb{E}_{H_0}[S_0]=1,
\end{equation}
\noindent meaning that $S_\tau$ is a non-negative random variable whose expected value is bounded by one for any stopping time $\tau \geq 1$. In the literature on testing by betting, $S_\tau$ is often referred to as an \textit{e-value} \citep{evalues,ebh,grunwald2020safe}. Importantly, the consequence of~\eqref{eq:stopping} is that, under the null, the game is fair since the expected value of our wealth $S_t$ at any time step $t$ is bounded by the initial toy money $S_0$. Moreover, since $\{S_t : {t \in \mathbb{N}_0}\}$ is a non-negative supermartingale under $H_0$, we can apply  Ville's inequality~\citep{ville} and get
\begin{equation}
    \label{eq:ville}
    \mathbb{P}_{H_0}(\exists t\geq 1 : S_t \geq 1/\alpha) \leq \alpha \mathbb{E}_{H_0}[S_0] = \alpha,
\end{equation}
for any $\alpha \in (0,1)$. Therefore, the ability to form a valid test martingale allows us to rigorously test for $H_0$ and reject the null if $S_t \geq 1/\alpha$ at any time step, with the premise that the type-I error would not exceed the level $\alpha$. Crucially, when the null is false, $S_t$ can largely grow depending on how successful our betting strategy is. In Section~\ref{sec:proposed} we formulate a valid test martingale and design a powerful betting strategy.


\paragraph{Related work.} Sequential testing has a long standing history~\citep{wald1945sequential,lai1984incorporating,naghshvar2013active,lheritier2018sequential}, where the sequential probability ratio test of \cite{wald1945sequential} is perhaps
one of the first sequential hypothesis tests. More recently, the \emph{testing by betting} methodology \citep{shafer2019book,shafer2021testing} has led to the design of new powerful nonparametric approaches for constructing confidence sequences, e.g., \citep{jun2019parameter,waudby2020estimating}, for testing a single hypothesis, as well as for testing multiple hypotheses; see \citep[Section 6]{waudby2020estimating} for a detailed summary.

Related methods to our proposal are offline and online two-sample tests that are based on martingales \citep{balsubramani2015sequential,two_sample_grunwald,two_sample_aaditya,ibet}. Specifically, \cite{two_sample_aaditya} studied the problem of designing martingale-based sequential nonparametric one- and two-sample tests that are consistent, i.e., these sequential tests can attain \emph{power one} under certain conditions. In our work, we build on the foundations of~\cite{two_sample_aaditya}, and extend this framework to CI testing. 
Recently, \cite{ren2022derandomized} suggested using e-values to de-randomize the outcome of the knockoff filter---a sister method to the CRT that focuses on false discovery rate control (FDR) in an offline setting. In our work, we aggregate e-values to de-randomize our test, where the e-values we define take a different form than those proposed by \cite{ren2022derandomized}, as we focus on sequential testing of a single feature. Lastly, independent work by \cite{grunwald2022anytime}, which has been developed and posted in parallel to ours, also offers a martingale-based sequential test under the model-X setting, although suggesting a different test martingale. In Supplementary Section \ref{supp:grunwald} we provide a more detailed discussion about the relation of our proposal to that of \cite{grunwald2022anytime}, along with empirical comparisons.


\section{THE PROPOSED e-CRT}
\label{sec:proposed}
In this section, we introduce e-CRT: a sequential test for CI based on martingales and e-values. Suppose we are given a machine learning model $\hat{f}_t$, fitted on an initial batch of labeled data $\{(X_s,Y_s,Z_s) : s \leq t-1\}$ to provide an estimate of $Y$ given $(X, Z)$. At a high level, the test is initialized with toy money $S_0=1$ and proceeds as follows.
\begin{enumerate}[leftmargin=*,nolistsep]
    \item \textbf{Collect} a fresh test triplet $(X_t, Y_t, Z_t)$.
    \item \textbf{Generate a dummy feature} $\tilde{X}_t \sim P_{X\mid Z}(X_t \mid Z_t)$, and form the dummy triplet $(\tilde{X}_t, Y_t, Z_t)$.
    \item \textbf{Compute a betting score} $W_t$. Use $\hat{f}_t$ to bet against the null, where the bet is that the prediction error of $\hat{f}_t$ (or any other test statistic), evaluated on the dummy triplet $(\tilde{X}_t, Y_t, Z_t)$, would be higher than that of the original triplet $(X_t, Y_t, Z_t)$. A positive (resp. negative) score indicates that our bet is successful (resp. unsuccessful).
    \item \textbf{Update the current wealth (test martingale)} $S_{t}$: 
    if the betting score is positive, the previous $S_{t-1}$ is increased by the money we risked on placing the bet; otherwise, the previous wealth $S_{t-1}$ is decreased analogously.
    \item \label{item:preview_model} \textbf{Update the predictive model} $\hat{f}_t$ and get $\hat{f}_{t+1}$, e.g., by taking one (or more) gradient steps to minimize a loss evaluated on $\{(X_s,Y_s,Z_s) : s \leq t\}$.
    \item If $S_t \geq 1/\alpha$ reject $H_0$ and stop. Otherwise, increase $t$ and return to step (1). 
\end{enumerate}

\begin{figure}[t]
\begin{centering}
\includegraphics[width=0.9\columnwidth]{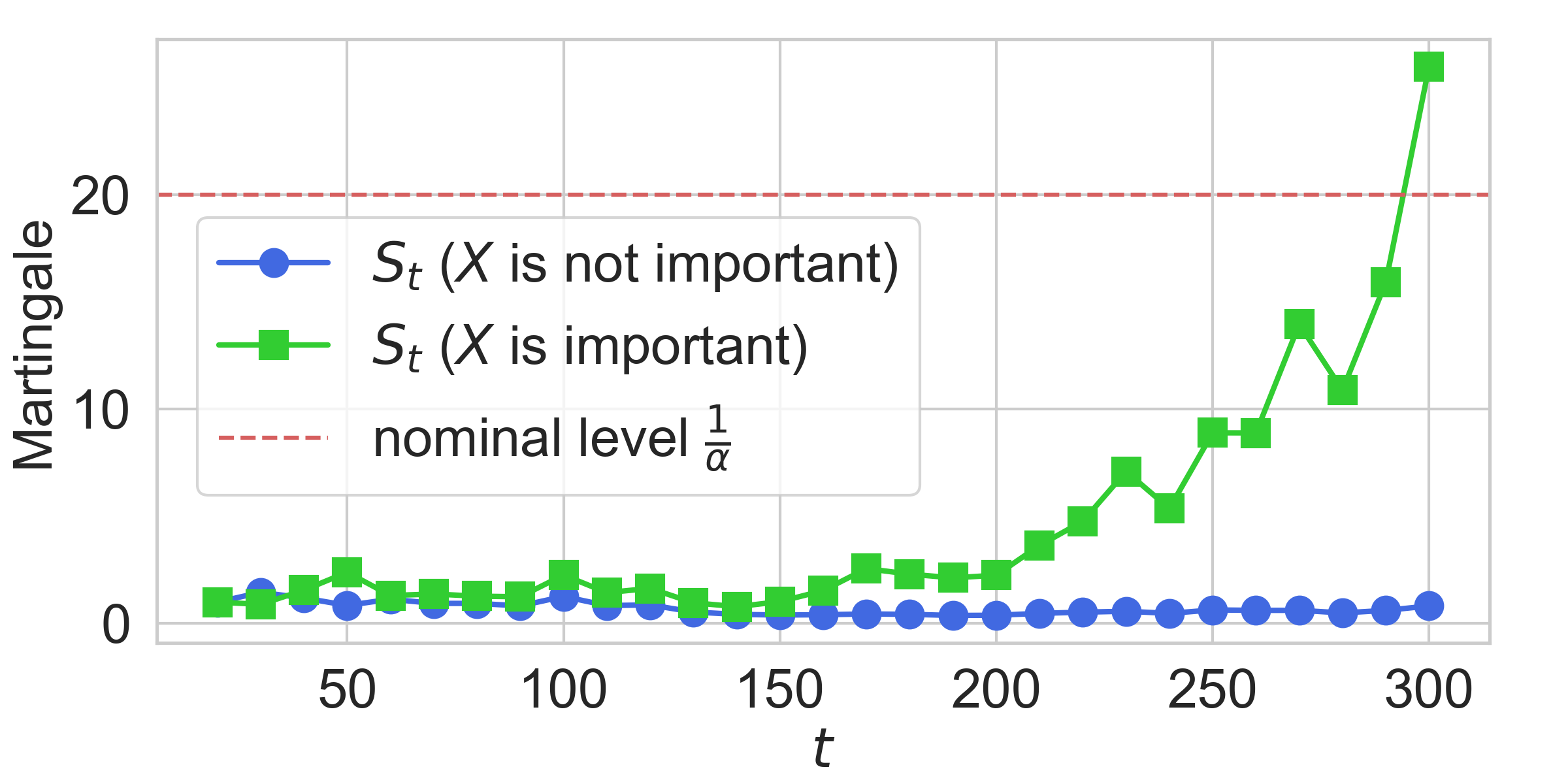}
\par\end{centering}
\vspace{-2mm}
\caption{\textbf{Illustration of the test martingale (wealth) $S_t$ as a function of $t$}. The blue (resp. green) curve represents the test martingale evaluated on simulated \texttt{null data} (resp. \texttt{non-null data}).
\label{fig:fig_size_legend}}
\vspace{-0.2cm}
\end{figure}

In what follows, we describe each of the above components in depth, define the proposed test martingale, and prove its validity. Later, in Section~\ref{sec:practical}, we provide additional design principles that improve the power of the test.

Before doing so, we pause to provide a small synthetic experiment that showcases how the wealth process $S_t$ behaves under the null and the alternative. To this end, we generate two different data sets. The first satisfies $H_0$, which we refer to as \texttt{null data} in which $X$ is \emph{un}important. The second satisfies the alternative, which we call \texttt{non-null data} in which $X$ is important. The data generation process for each case and the implementation details are described in Section~\ref{sec:exp_setup}. Next, we apply e-CRT on each data set, and present in Figure~\ref{fig:fig_size_legend} the wealth process $S_t$ as a function of $t$. When the test is applied to the \texttt{null data}, the value of $S_t$ remains close to the initial wealth $S_0=1$ for all presented time steps $t$. In particular, $S_t$ does not exceed the value $1/\alpha=20$, and thus $H_0$ cannot be rejected. By contrast, when the test is applied to the \texttt{non-null data}, the wealth process grows as the testing procedure proceeds, until reaching a target value of $1/\alpha = 20$. In this case, we reject the null and report that $X$ is indeed important. This experiment illustrates the advantage of monitoring the value of $S_t$ over time: we can safely terminate the test after collecting 300 samples and avoid a wasteful collection of new data.

\subsection{Formulating the Test Martingale}
 \label{sec:formulating}

Our procedure exploits the dummy feature $\tilde{X}_t$, sampled from the conditional distribution of $X \mid Z$ to form a fair game. In the sequel, we state key properties of the dummies, which we will use to define our test. The proofs of all statements given in this section are provided in Supplementary Section~\ref{supp:proofs}. We start by emphasizing that we sample $\tilde{X}_t \sim P_{X\mid Z}(X_t \mid Z_t)$ without looking at $Y_t$, and so $\tilde{X}_t \indep X_t \mid Z_t$ for all $t\in\mathbb{N}$ by construction. Therefore, $X_t$ and its dummy $\tilde{X}_t$ are exchangeable conditional on $Z_t$; that is, $(X_t,\tilde{X}_t) \mid Z_t \overset{d}{=} (\tilde{X}_t,X_t) \mid Z_t$, where $\overset{d}{=}$ reads as `equal in distribution'. This implies that it is impossible to distinguish between $X_t$ and its dummy $\tilde{X}_t$ when viewing $Z_t$, for any time step $t$. Furthermore, under the null, this exchangeability property holds not only conditionally on $Z_t$ but also on $Y_t$.
\begin{lemma}
Take $(X_t,Y_t,Z_t) \sim P_{XYZ}$, and let $\tilde{X}_t$ be drawn independently from $P_{X \mid Z}$ without looking at $Y_t$. If $Y_t \indep X_t \mid Z_t$, then
$(X_t,\tilde{X}_t,Y_t,Z_t) \overset{d}{=} (\tilde{X}_t,X_t,Y_t,Z_t)$.
\label{lemma:swap_ZY}
\end{lemma}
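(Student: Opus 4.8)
The plan is to reduce everything to the conditional law given $Z_t$ and to exploit two conditional-independence facts: one that holds by construction of the dummy, and one supplied by the null. First I would condition on $Z_t=z$ and record the structural facts available. By construction, $\tilde{X}_t$ is drawn from $P_{X\mid Z}(\cdot\mid z)$ using only $Z_t$, so $\tilde{X}_t \indep (X_t,Y_t)\mid Z_t$; crucially, this also means $\tilde{X}_t$ shares the conditional law of $X_t$ given $Z_t$, since both have regular conditional distribution $P_{X\mid Z}(\cdot\mid z)$. The null hypothesis supplies the second fact, $X_t \indep Y_t \mid Z_t$.

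Next I would combine these two facts to obtain mutual conditional independence of the triple $(X_t,\tilde{X}_t,Y_t)$ given $Z_t=z$. Writing the conditional density (or, more generally, the conditional distribution) of $(X_t,\tilde{X}_t,Y_t)$ at $z$, the construction gives $p(x,\tilde{x},y\mid z)=p(\tilde{x}\mid z)\,p(x,y\mid z)$, and the null gives $p(x,y\mid z)=p_{X\mid Z}(x\mid z)\,p_{Y\mid Z}(y\mid z)$. Since $p(\tilde{x}\mid z)=p_{X\mid Z}(\tilde{x}\mid z)$, substitution yields
\[
p(x,\tilde{x},y\mid z)=p_{X\mid Z}(x\mid z)\,p_{X\mid Z}(\tilde{x}\mid z)\,p_{Y\mid Z}(y\mid z),
\]
which is manifestly symmetric under interchanging $x$ and $\tilde{x}$.

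Finally I would conclude the exchangeability. The symmetry of the conditional law in its two $X$-slots means that, given $Z_t=z$, we have $(X_t,\tilde{X}_t,Y_t)\overset{d}{=}(\tilde{X}_t,X_t,Y_t)$ for every $z$; integrating against the marginal law of $Z_t$ then promotes this to the stated equality of the full joint distributions, including $Z_t$ in the last coordinate. I do not expect a genuine obstacle here; the only point requiring a little care is to phrase the argument measure-theoretically when densities need not exist. In that case I would replace the density factorization by the corresponding factorization of the conditional kernel—using that both $X_t$ and $\tilde{X}_t$ have regular conditional law $P_{X\mid Z}(\cdot\mid z)$ and that, under the null, $Y_t$ is conditionally independent of the pair—so that the symmetric product form of the kernel, and hence invariance under the swap, continues to hold. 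This symmetry is exactly the exchangeability $(X_t,\tilde{X}_t)\mid Z_t \overset{d}{=} (\tilde{X}_t,X_t)\mid Z_t$ noted before the lemma, now strengthened to hold jointly with $Y_t$ under $H_0$.
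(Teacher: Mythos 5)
Your proof is correct and takes essentially the same route as the paper's: condition on $Z_t$, factorize the conditional law using the construction of the dummy together with the null, and conclude by symmetry under swapping the two $X$-slots, with the same density-for-simplicity caveat handled by conditional kernels in general. The only cosmetic difference is that you push the factorization one step further, to the fully symmetric product $p_{X\mid Z}(x\mid z)\,p_{X\mid Z}(\tilde{x}\mid z)\,p_{Y\mid Z}(y\mid z)$, whereas the paper stops at the two-factor form $p(y\mid x,\tilde{x},z)\,p(x,\tilde{x}\mid z)$, takes the symmetry of the second factor as given by construction, and reduces the first factor to $p(y\mid z)$ via $Y_t \indep (X_t,\tilde{X}_t)\mid Z_t$ --- the same conditional-independence ingredients in both cases.
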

\noindent
The above result lies at the heart of the knockoff and CRT frameworks, and its proof follows \citep[Lemma 3.2]{candes2018panning}, \citep[Lemma 1]{barber2020robust}. Lemma~\ref{lemma:swap_ZY} implies that, if the null is true, it is impossible to tell which is the original feature and which is the dummy when viewing the full observation, at any time step $t$. This result is essential for proving the validity of the CRT p-value, as well as for formulating our test martingale, as we do next.

Denote by $\mathcal{F}_{t} = \sigma(\{X_s,Y_s,Z_s\}_{s=1}^{t})$ the sigma-algebra generated by observations collected up to time $t$, where $\mathcal{F}_0$ is the trivial sigma-algebra. Let $q_t = T(X_t,Y_t,Z_t;\hat{f}_t) \in \mathbb{R}$ and $\tilde{q}_t = T(\tilde{X}_t,Y_t,Z_t;\hat{f}_t) \in \mathbb{R}$
be the test statistics evaluated on the original and dummy triplets, respectively. Importantly, $T(\cdot;\hat{f}_t)$ can be any function, and its choice may affect the power of the test. For instance, one can define $T(\cdot;\hat{f}_t)$ as the squared prediction error evaluated on the current sample $T(x,y,z;\hat{f}_t) = (\hat{f}_t(x,z) - y)^2$ using a model $\hat{f}_t$ trained on past data $\{X_s,Y_s,Z_s\}_{s=1}^{t-1}$. Observe that $\hat{f}_t$ is not fitted on the new triplet $(X_t,Y_t,Z_t)$, thus it is considered as a fixed function once conditioning on $\mathcal{F}_{t-1}$.

We then proceed by evaluating a \textbf{betting score} 
\begin{equation}
\label{eq:lambda_t}
 W_t = g(q_t,\tilde{q}_t),
\end{equation}
where the function $g(a,b) \in [-m,m]$ is antisymmetric $g(a,b) = - g(b,a)$, satisfying $g(a,b) > 0$ if $b>a$ and $g(a,b_1) \geq  g(a,b_2)$ for $b_1 \geq b_2$. For example, $g(a,b)=m \cdot \textrm{sign}(b - a)$.
The hyper-parameter $0 < m \leq 1$ controls the magnitude of the score. As in the knockoff filter, our design of $g$ ensures it follows the \emph{flip sign property}, requiring that a swap of the original feature $X_t$ and its dummy $\tilde{X}_t$ will flip the sign of $W_t$~\citep{candes2018panning}.  

Under the alternative, one should interpret a strictly positive betting score $W_t>0$ as a successful bet, which will increase our wealth. This means that we gain some evidence that ${X}_t$ carries extra predictive power about $Y_t$ beyond what is already known in $Z_t$. Analogously, a strictly negative $W_t<0$ indicates an erroneous bet, which will reduce our wealth even though the null is false. 
Crucially, under the null, $W_t$ will be zero on average, no matter how accurate $\hat{f}_t$ is. In other words, it is impossible to have a systematically positive $W_t$ when $H_0$ is true.
\begin{lemma}
\label{lemma:R_bet}
Under the same conditions as in Lemma~\ref{lemma:swap_ZY}, if $H_0$ is true then $\mathbb{E}_{H_0}[W_t\mid \mathcal{F}_{t-1}] = 0$ for all $t \in \mathbb{N}$.
\end{lemma}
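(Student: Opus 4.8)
The plan is to exploit the conditional exchangeability from Lemma~\ref{lemma:swap_ZY} together with the antisymmetry of $g$ to show that the conditional expectation equals its own negation, which forces it to be zero. First I would observe that $\hat{f}_t$ is fitted only on $\{(X_s,Y_s,Z_s) : s \leq t-1\}$ and is therefore $\mathcal{F}_{t-1}$-measurable; hence, once we condition on $\mathcal{F}_{t-1}$, the statistic map $T(\cdot\,;\hat{f}_t)$ may be treated as a fixed deterministic function. This reduces the problem to analyzing a fixed measurable functional of the fresh tuple $(X_t,\tilde{X}_t,Y_t,Z_t)$.

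Next I would argue that, conditional on $\mathcal{F}_{t-1}$, the swap invariance of Lemma~\ref{lemma:swap_ZY} still applies. Since the data are i.i.d., the current triplet $(X_t,Y_t,Z_t)$ is independent of the past $\mathcal{F}_{t-1}$, and $\tilde{X}_t$ is drawn from $P_{X\mid Z}(\cdot\mid Z_t)$ using fresh randomness, so the joint law of $(X_t,\tilde{X}_t,Y_t,Z_t)$ is unaffected by the conditioning. Under $H_0$, Lemma~\ref{lemma:swap_ZY} then yields $(X_t,\tilde{X}_t,Y_t,Z_t) \overset{d}{=} (\tilde{X}_t,X_t,Y_t,Z_t)$ conditionally on $\mathcal{F}_{t-1}$.

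With these two facts in hand, the computation is short. Writing $W_t = g(q_t,\tilde{q}_t)$ with $q_t = T(X_t,Y_t,Z_t;\hat{f}_t)$ and $\tilde{q}_t = T(\tilde{X}_t,Y_t,Z_t;\hat{f}_t)$, I would apply the distributional swap to exchange $(X_t,\tilde{X}_t)$ for $(\tilde{X}_t,X_t)$ inside the conditional expectation, obtaining $\mathbb{E}_{H_0}[g(q_t,\tilde{q}_t)\mid\mathcal{F}_{t-1}] = \mathbb{E}_{H_0}[g(\tilde{q}_t,q_t)\mid\mathcal{F}_{t-1}]$. Invoking the antisymmetry $g(\tilde{q}_t,q_t) = -g(q_t,\tilde{q}_t)$ turns the right-hand side into $-\mathbb{E}_{H_0}[W_t\mid\mathcal{F}_{t-1}]$, so that $\mathbb{E}_{H_0}[W_t\mid\mathcal{F}_{t-1}] = -\mathbb{E}_{H_0}[W_t\mid\mathcal{F}_{t-1}]$, which can only hold if the quantity vanishes. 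The boundedness $g \in [-m,m]$ guarantees the conditional expectation is finite throughout, so this rearrangement is legitimate.

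The main obstacle I anticipate is not the algebra but the careful handling of the conditioning: one must justify that Lemma~\ref{lemma:swap_ZY}, stated unconditionally, transfers verbatim to the conditional law given $\mathcal{F}_{t-1}$, and that fixing $\hat{f}_t$ as an $\mathcal{F}_{t-1}$-measurable function is compatible with the swap. The latter holds because swapping only exchanges the arguments $X_t \leftrightarrow \tilde{X}_t$ while leaving the $\mathcal{F}_{t-1}$-measurable function $T(\cdot\,;\hat{f}_t)$ untouched. Once the independence of the current observation from the past is used to push the conditioning through the distributional equality, the remainder follows directly from symmetry.
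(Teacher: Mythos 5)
Your proposal is correct and follows essentially the same route as the paper's proof: treating $\hat{f}_t$ as fixed given $\mathcal{F}_{t-1}$, applying the swap from Lemma~\ref{lemma:swap_ZY} conditionally under $H_0$, and combining it with the antisymmetry of $g$ to force $\mathbb{E}_{H_0}[W_t\mid\mathcal{F}_{t-1}]$ to equal its own negation. Your added justification that the i.i.d.\ structure and fresh dummy randomness let the unconditional swap transfer to the conditional law, together with the remark that boundedness of $g$ ensures finiteness, is a careful elaboration of details the paper leaves implicit rather than a different argument.
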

\noindent The core idea behind the proof of the above lemma is that, under the null, $W_t$
has a symmetric distribution about zero conditional on $\mathcal{F}_{t-1}$, and thereby its expected value is zero; see \citep{ramdas2020admissible} for a related property of symmetric distributions. In particular, $W_t$ is equally likely to have positive and negative values, which is a well-known result in the knockoff literature with the important difference that in our case we show it holds conditionally on $\mathcal{F}_{t-1}$.

Armed with the betting score $W_t$ at time $t$, we turn to define a test martingale $\{S_t : {t \in \mathbb{N}_0}\}$ for $H_0$. The martingale can be thought of as the wealth process, initialized by toy money $S_0=1$, and our ultimate goal is to maximize it. We begin with defining the \emph{base martingale} as follows:
\begin{equation}
    \label{eq:test_martingale_v}
    S_t^v :=\prod_{j=1}^t (1 + v \cdot W_j),
\end{equation}
where $v\in[0,1]$ is a fixed amount of toy money that we are willing to risk at step $t$.\footnote{We can set a different $v_t$ for each time step, yet $v_t$ must be chosen without looking at the current $(X_t,Y_t,Z_t)$ as otherwise the test will cease to be valid. Intuitively, in such a case one can always set $v_t=0$ when $W_t$ is negative and $v_t=1$ otherwise, and increase the wealth regardless on whether the null is true or false.}
Proposition~\ref{prop:valid_stv} in Supplementary Section~\ref{supp:stv} shows that $\{S_t^v : {t \in \mathbb{N}_0}\}$ in~\eqref{eq:test_martingale_v} is a valid test martingale.
As a result, following Ville's inequality in~\eqref{eq:ville}, one can monitor $S_{t}^v$ and control the type-I error for any choice of $v\in[0,1]$. Importantly, the amount of toy money $v$ that we risk when placing the bet affects the power. 

The above immediately raises the question of how should we choose $v$? Ideally, we want to set the best constant $v^*$ so that $S_t^{v^*}$ is maximized under the alternative. The problem is that we are not allowed to look at the current betting score $W_t$, so it is impossible to find such an ideal data-dependent $v^*$ in foresight. As a thought experiment, consider the simplest choice for $g$ as the sign function for which $W_t\in\{+1,-1\}$, and suppose we adopt an aggressive betting strategy with $v~=~1$. With this choice, when we win a bet we will increase $S_t^v$ by the maximal amount possible at step $t$. However, if we lose a bet even once, we will have $S_t^v~=~0$, resulting in a powerless test; to see this, assign $W_t = -1$ in~\eqref{eq:test_martingale_v}. We give a concrete example that visualizes this discussion in Supplementary Section~\ref{supp:imp_fig2}.


As a way out, we formulate a powerful betting strategy using the mixture-method of~\cite{two_sample_aaditya}, which is intimately connected to universal portfolio optimization~\citep{cover2011universal}. The mixture-method is defined as the average over $S_t^v$ for all $v \in [0,1]$:
\begin{equation}
    \label{eq:test_martingale}
    S_t = \int_0^1 S_t^v \cdot h(v) dv,
\end{equation}
where $h(v)$ is a probability density function (pdf) whose support is on the $[0,1]$ interval, e.g., a uniform distribution. We adopt the mixture method betting strategy to formulate our test martingale since it has appealing power properties, which we discuss soon. Before doing so, however, we shall first prove that the test martingale in~\eqref{eq:test_martingale} is valid. The theorem presented below states that by monitoring $S_t$ one can safely reject the null the first time $S_t$ exceeds $1/\alpha$, while rigorously controlling the type-I error simultaneously for all optional stopping times. This result holds in finite samples, without making any modeling assumptions on the conditional distribution of $Y \mid X$, and for any machine learning model $\hat{f}_t$, which we use to bet against the null. 
The proof follows \cite[Section~2.2]{two_sample_aaditya}.
\begin{theorem}
\label{th:validity}
    Under the same conditions as in Lemma~\ref{lemma:swap_ZY}, if the null hypothesis $H_0$ is true then for any $\alpha \in (0,1)$,
    $$ \mathbb{P}_{H_0}(\exists t : S_t \geq 1/\alpha) \leq \alpha.$$
\end{theorem}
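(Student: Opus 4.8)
The plan is to show that the mixture process $\{S_t\}$ defined in~\eqref{eq:test_martingale} is itself a valid test martingale in the sense of Definition~\ref{def:test_martingale}, and then to invoke Ville's inequality~\eqref{eq:ville} verbatim, exactly as in the generic argument of Section~\ref{sec:testing_by_betting}. The work is essentially to reduce the mixture to the base case: Proposition~\ref{prop:valid_stv} already guarantees that each $\{S_t^v\}$ is a valid test martingale for every fixed $v\in[0,1]$, so I only need to verify that averaging over $v$ against the density $h$ preserves the three defining properties.

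First I would check the easy conditions. Since the product in~\eqref{eq:test_martingale_v} is empty at $t=0$, we have $S_0^v=1$ for all $v$, and because $h$ is a probability density on $[0,1]$, $S_0=\int_0^1 h(v)\,dv=1$. For nonnegativity, note that for $v\in[0,1]$ and $W_j\in[-m,m]$ with $m\le 1$ each factor obeys $1+vW_j\ge 1-vm\ge 0$, hence $S_t^v\ge 0$ pointwise, and integrating against the nonnegative $h$ yields $S_t\ge 0$ for all $t$. The substantive step is the supermartingale property under $H_0$, where I would compute
\[
\mathbb{E}_{H_0}[S_t\mid\mathcal{F}_{t-1}]=\mathbb{E}_{H_0}\!\left[\int_0^1 S_t^v h(v)\,dv\,\Big|\,\mathcal{F}_{t-1}\right]=\int_0^1 \mathbb{E}_{H_0}[S_t^v\mid\mathcal{F}_{t-1}]\,h(v)\,dv.
\]
Invoking Proposition~\ref{prop:valid_stv} (whose martingale step rests on $\mathbb{E}_{H_0}[W_t\mid\mathcal{F}_{t-1}]=0$ from Lemma~\ref{lemma:R_bet}), each base process satisfies $\mathbb{E}_{H_0}[S_t^v\mid\mathcal{F}_{t-1}]=S_{t-1}^v$, so the right-hand side equals $\int_0^1 S_{t-1}^v h(v)\,dv=S_{t-1}$. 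Thus $\{S_t\}$ is in fact a martingale, and a fortiori a supermartingale, under $H_0$. Having established that $\{S_t\}$ is a nonnegative supermartingale with $S_0=1$, the claim $\mathbb{P}_{H_0}(\exists t:S_t\ge 1/\alpha)\le\alpha$ is immediate from~\eqref{eq:ville}.

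The only genuinely delicate point is the interchange of the conditional expectation and the Lebesgue integral over $v$ in the display above. Because every term is nonnegative, this is justified by Tonelli's theorem with no integrability caveats; one may additionally note that each $S_t^v\le\prod_{j=1}^t(1+vm)\le 2^t$ is uniformly bounded in $v$, so the measurability and finiteness needed to apply the theorem hold trivially. Everything else is bookkeeping that transfers the already-proven base-case properties through the averaging, so I expect no further obstacle.
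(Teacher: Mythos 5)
Your proof is correct and follows essentially the same route as the paper's: verify $S_0=1$ and nonnegativity, use Tonelli to interchange the conditional expectation with the mixture integral over $v$, reduce to $\mathbb{E}_{H_0}[W_t\mid\mathcal{F}_{t-1}]=0$ (Lemma~\ref{lemma:R_bet}, which you access via Proposition~\ref{prop:valid_stv} while the paper inlines the identical one-step computation), and conclude with Ville's inequality. The only cosmetic difference is this packaging through the base-martingale proposition; the substance, including the Tonelli justification, matches the paper's proof.
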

Having established the validity of the test, we turn to discuss the key advantage of the mixture method betting strategy. The idea behind this approach is that one of the base martingales $S_t^v$ in~\eqref{eq:test_martingale} must hit the best constant $v^*$, which, in turn, drives the average martingale $S_t$ upwards. We demonstrate this visually in Supplementary Section~\ref{supp:imp_fig2}.
In fact, \cite{two_sample_aaditya} proved that $S_t$
is not only dominated by $S_t^{v^*}$, but can also provably form a consistent test that achieves \emph{power one} in the limit of infinite data. 

\begin{prop} [\cite{two_sample_aaditya}]
\label{prop:power_1}
If $\liminf_{t \to \infty}\frac{1}{t}\sum_{s=1}^t W_s > 0$ under the alternative $H_1$. Then, $ \mathbb{P}_{H_1}(\exists t : S_t \geq 1/\alpha)=1 $ for any $\alpha \in (0,1)$. 
\end{prop}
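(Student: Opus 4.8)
The plan is to convert the probability-one statement into a pathwise (deterministic) one. Let $E$ denote the almost-sure event $\{\liminf_{t\to\infty}\tfrac1t\sum_{s=1}^t W_s>0\}$, which under $H_1$ has probability one by hypothesis. I would show that on \emph{every} realization in $E$ the wealth process diverges, $S_t\to\infty$, so that in particular $S_t\ge 1/\alpha$ is eventually attained. Since this inclusion of events holds pathwise, $E\subseteq\{\exists t:S_t\ge 1/\alpha\}$, and taking probabilities gives $\mathbb{P}_{H_1}(\exists t:S_t\ge 1/\alpha)=1$.

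The core estimate is a second-order lower bound on the log of a base martingale. Fix a path in $E$, set $c=\liminf_{t}\tfrac1t\sum_{s\le t}W_s>0$ and write $A_t=\tfrac1t\sum_{s=1}^t W_s$. Using $|W_j|\le m\le 1$ together with the elementary inequality $\log(1+x)\ge x-x^2$, valid for all $x\ge-1/2$, I obtain for every $v\in[0,1/(2m)]$ (so that $vW_j\ge-1/2$, which also keeps $1+vW_j\ge 1/2>0$)
\[
\frac1t\log S_t^v=\frac1t\sum_{j=1}^t\log(1+vW_j)\ge v\,A_t-v^2 m^2 .
\]
The essential feature is that this bound holds \emph{simultaneously} for all admissible $v$, built from the single empirical average $A_t$.

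Next I would pass from a single $v$ to the mixture $S_t=\int_0^1 S_t^v\,h(v)\,dv$. Choose $c'\in(0,c)$; by definition of $\liminf$ there is a $T$ with $A_t\ge c'$ for all $t\ge T$. Pick any interval $[a,b]\subset\bigl(0,\min\{c'/m^2,\,1/(2m),\,1\}\bigr)$, which is nonempty since all three endpoints are strictly positive (here $m\le 1$ and $c'>0$ are used). On $[a,b]$ the continuous function $v\mapsto vc'-v^2m^2$ is strictly positive, so it attains a minimum $\delta>0$. Hence for all $t\ge T$ and all $v\in[a,b]$ we have $S_t^v\ge e^{\delta t}$, and therefore
\[
S_t=\int_0^1 S_t^v\,h(v)\,dv\ge\int_a^b S_t^v\,h(v)\,dv\ge e^{\delta t}\int_a^b h(v)\,dv .
\]
Because $h$ has full support on $[0,1]$ (e.g.\ uniform), $\int_a^b h(v)\,dv>0$, so $S_t\to\infty$ along this path, completing the pathwise claim.

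The step I expect to be the main obstacle is precisely the \emph{uniformity in $v$}. Divergence of a single base martingale $S_t^v$ does not suffice, because $S_t$ is an average (integral) of the $S_t^v$; pointwise-in-$v$ blow-up with $v$-dependent onset times need not force the integral to diverge. The resolution is the observation that the quadratic lower bound $vA_t-v^2m^2$ is uniform over $v$ once $A_t$ is controlled, which lets me certify a common exponential growth rate $\delta$ on a whole interval $[a,b]$ carrying positive $h$-mass. I would take care to verify that $[a,b]$ can be placed inside $(0,1/(2m))$ so the logarithmic inequality applies, and inside $(0,c'/m^2)$ so the exponent stays positive; both are guaranteed by $m\le 1$ and $c'>0$.
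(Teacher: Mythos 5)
Your proof is correct. A small audit of the key steps: the inequality $\log(1+x)\ge x-x^2$ does hold on $[-1/2,\infty)$ (the derivative of $\log(1+x)-x+x^2$ is $x(1+2x)/(1+x)$, nonpositive on $[-1/2,0]$ and nonnegative on $[0,\infty)$), your restriction $v\le 1/(2m)$ together with $|W_j|\le m$ keeps $vW_j\ge -1/2$, and the resulting bound $\tfrac1t\log S_t^v\ge vA_t-v^2m^2$ is indeed uniform in $v$, which is exactly what licenses passing from a single base martingale to the mixture by integrating over a fixed interval $[a,b]$ of positive $h$-mass. The pathwise logic ($E\subseteq\{\exists t: S_t\ge 1/\alpha\}$ because $S_t\to\infty$ on each path in $E$, with $T$, $c'$, $\delta$ allowed to be path-dependent) is sound.

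Note that the paper itself gives no proof of this proposition; it is imported from the cited work of Shekhar and Ramdas, whose mixture-method analysis is commonly phrased through universal-portfolio machinery: a regret bound of the form $\log S_t\ge \sup_v \log S_t^v - O(\log t)$, after which the linear-in-$t$ growth of $\log S_t^{v^*}$ dominates the logarithmic loss. Your interval-integration argument reaches the same conclusion more elementarily --- you never need to track the best constant $v^*$ or prove any regret bound, only that the quadratic lower bound $vc'-v^2m^2$ is bounded away from zero on some compact subinterval of $\bigl(0,\min\{c'/m^2,\,1/(2m),\,1\}\bigr)$, which is automatic since $c'>0$ and $m\le 1$. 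The price is a slightly worse implicit growth exponent ($\delta$ rather than the optimal rate at $v^*$), which is irrelevant for a power-one statement. One hypothesis you correctly surfaced, and which is genuinely needed: the mixing density $h$ must place positive mass on the chosen interval. The paper only says $h$ is a pdf supported on $[0,1]$, so strictly speaking the proposition requires something like full support (or positivity near small $v$); since the paper always instantiates $h$ as uniform, your assumption matches the intended setting, but it is right to flag it as an assumption rather than a consequence.
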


The condition of \mbox{$\liminf_{t \to \infty}\frac{1}{t}\sum_{s=1}^t W_s > 0$} implies that it suffices that only on average the predictive model will be able to tell apart the original and dummy triplets, so at the limit of infinite data we will attain a consistent test.

\subsection{Practical Considerations}
\label{sec:practical}

In this section, we provide design principles that improve the power of the test while maintaining its validity. For ease of reference, we provide a pseudo code that implements the following ideas in Algorithm~\ref{alg:practical_eCRT}.

\textbf{De-randomization.} To reduce the algorithmic randomness induced by the generated dummy feature $\tilde{X}_t$, we (i) sample $K>1$ independent dummy copies of $X_t$; (ii) compute the corresponding betting scores $W_t^{(k)}$, $k=1,\dots,K$; and (iii) evaluate the average betting score $W_t = \frac{1}{K}\sum_{k=1}^K W_t^{(k)}$. We refer to $K$ as the de-randomization hyper-parameter. Importantly, this strategy preserves the validity of the test martingale $S_t$ in~\eqref{eq:test_martingale}, since the expected value of the average betting score is also equal to zero under the null, i.e., $\frac{1}{K}\sum_{k=1}^K \mathbb{E}_{H_0}[W_t^{(k)}\mid \mathcal{F}_{t-1}] = 0$.
The ablation study presented in Section~\ref{sec:ablation} demonstrates that the above de-randomization procedure improves the power of the test, even for a moderate choice of $K$.

\begin{algorithm}[t]
\textbf{Input}: Data batch $\{(X_s,Y_s,Z_s)\}_{s=1}^{b}$; conditional distribution ${P}_{X \mid Z}$; test statistic $T(\cdot)$; fixed predictive model $\hat{f}$; de-randomization parameter $K$; betting score function $g(\cdot)$.

\begin{algorithmic}[1]
\STATE Compute $q \leftarrow T(\{({X}_s,Y_s,Z_s)\}_{s=1}^{b};\hat{f})$
\FOR{$k=1,\dots,K$}
\STATE  Sample $\tilde{X}_s \sim {P}_{X \mid Z}(X_s \mid Z_s)$ for $s=1,\dots,b$
\STATE Compute $\tilde{q} \leftarrow T(\{(\tilde{X}_s,Y_s,Z_s)\}_{s=1}^{b};\hat{f})$
\STATE Compute a betting score $W^{(k)} \leftarrow g(q,\tilde{q})$
\ENDFOR
\end{algorithmic}
\textbf{Output} An average betting score $W \leftarrow \frac{1}{K}\sum_{k=1}^K W^{(k)}$
\vspace{0.1cm}
\caption{Betting score evaluation}
\label{alg:payoff}
\end{algorithm}

\begin{algorithm}[t]
\textbf{Input}: Data stream $(X_t,Y_t,Z_t), t=1,\dots$; test level $\alpha \in (0,1)$; set of batch sizes $\mathcal{B}$.

\begin{algorithmic}[1]
\STATE Train a predictive model $\hat{f}_{1}$ on an initial batch of data $\{(X_{t'},Y_{t'},Z_{t'})\}_{t'=1}^{n_\text{init}}$
\STATE Set $S_{0,b}\leftarrow 1$ for all $b \in \mathcal{B}$
\FOR{$t=1,2,\dots$}
\FOR{$b$ in $\mathcal{B}$}
\STATE Set $t' \leftarrow \lfloor{t/b}\rfloor$
\STATE Set $\mathcal{D}_{t,b} = \{(X_j,Y_j,Z_j)\}_{j=(t'-1) \cdot b+1}^{t' \cdot b}$
\STATE {Compute the average betting score $W_{t',b}$ by applying Algorithm~\ref{alg:payoff} to $\mathcal{D}_{t,b}$ with $\hat{f}_{(t'-1)\cdot b+1}$}
\STATE Update $S_{t,b} \leftarrow \int_0^1 \prod_{s=1}^{t'} (1 + v \cdot W_{s,b}) \cdot h(v) dv$
\ENDFOR
\STATE Compute the ensemble-over-batches martingale $S_t \leftarrow \frac{1}{|\mathcal{B}|} \sum_{b \in \mathcal{B}} S_{t,b}$
\IF{$S_{t} \geq 1/\alpha$}
\STATE Reject the null hypothesis $H_0$ and stop
\ELSE
\STATE Obtain $\hat{f}_{t+1}$ by online updating $\hat{f}_{t}$ after adding the most recent point $(X_t,Y_t,Z_t)$ to the train set
\ENDIF
\ENDFOR
\end{algorithmic}

\vspace{0.05cm}
\caption{e-CRT: sequential test for CI}
\label{alg:practical_eCRT}
\end{algorithm}

\begin{figure*}
     \centering
     \begin{subfigure}[b]{0.33\textwidth}
         \centering
         \includegraphics[width=0.99\textwidth]{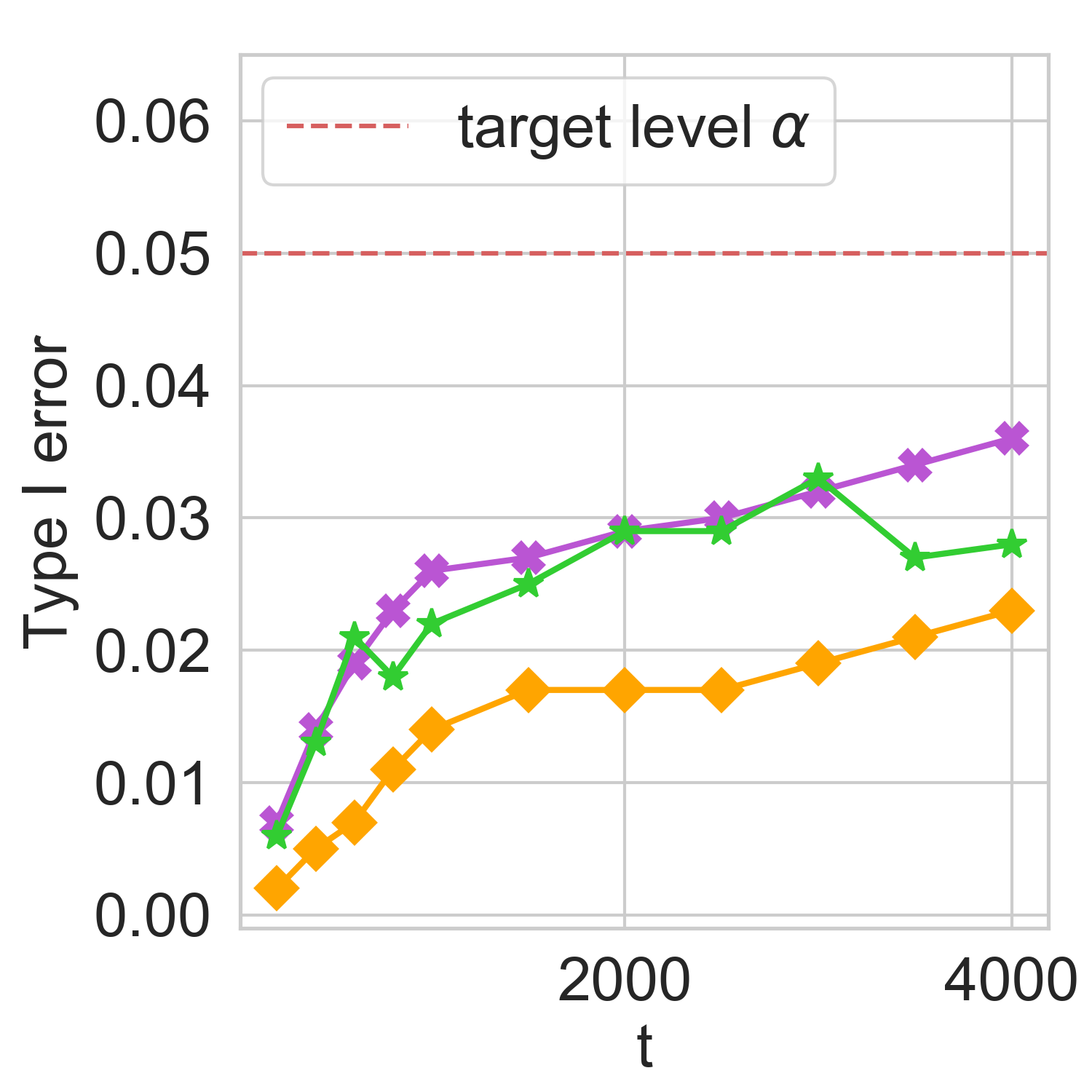}
         \label{fig:error_vs_n}
     \end{subfigure}
     \begin{subfigure}[b]{0.33\textwidth}
         \centering
         \includegraphics[width=0.99\textwidth]{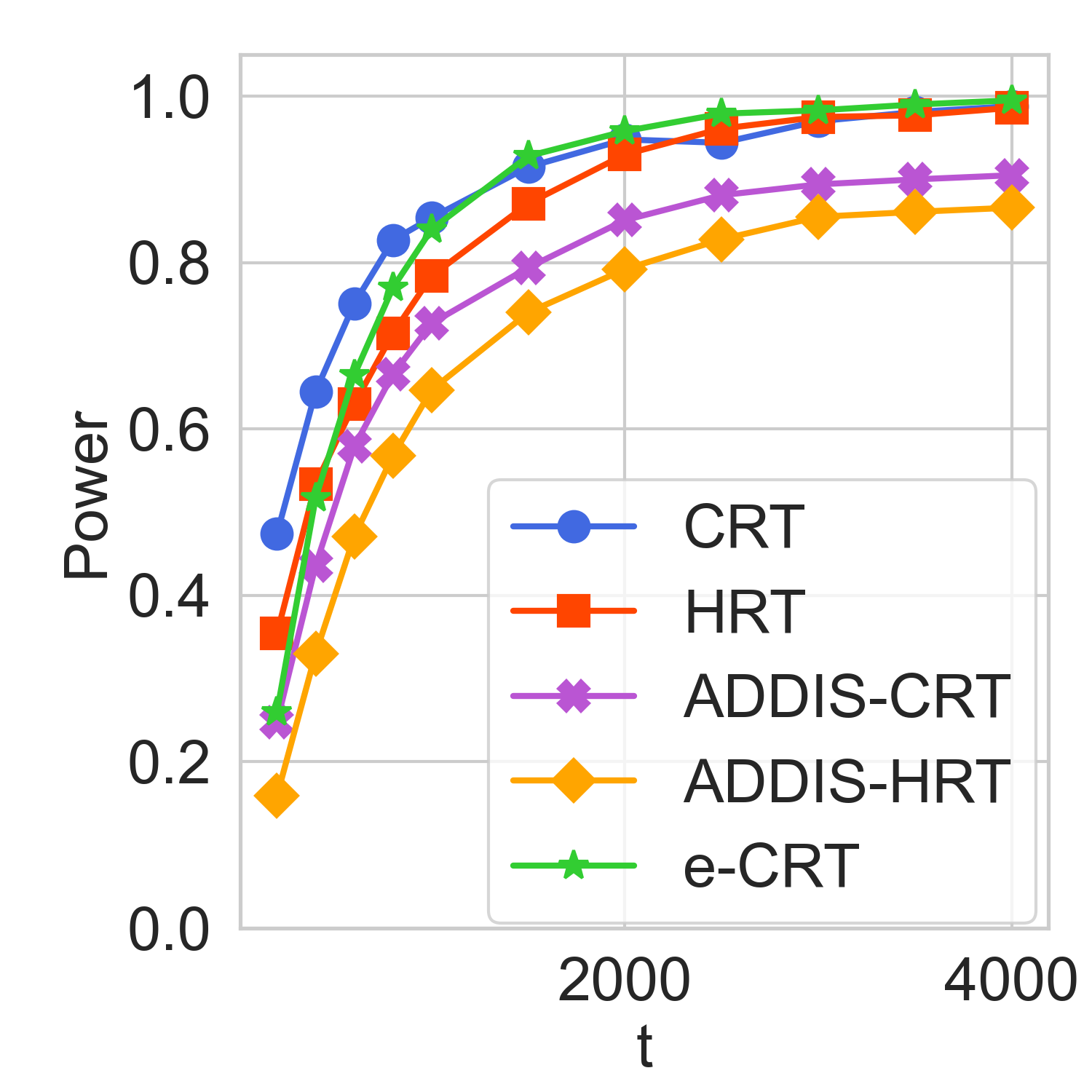}
         \label{fig:power_vs_n}
     \end{subfigure}
     \begin{subfigure}[b]{0.33\textwidth}
         \centering
        \includegraphics[width=0.99\textwidth]{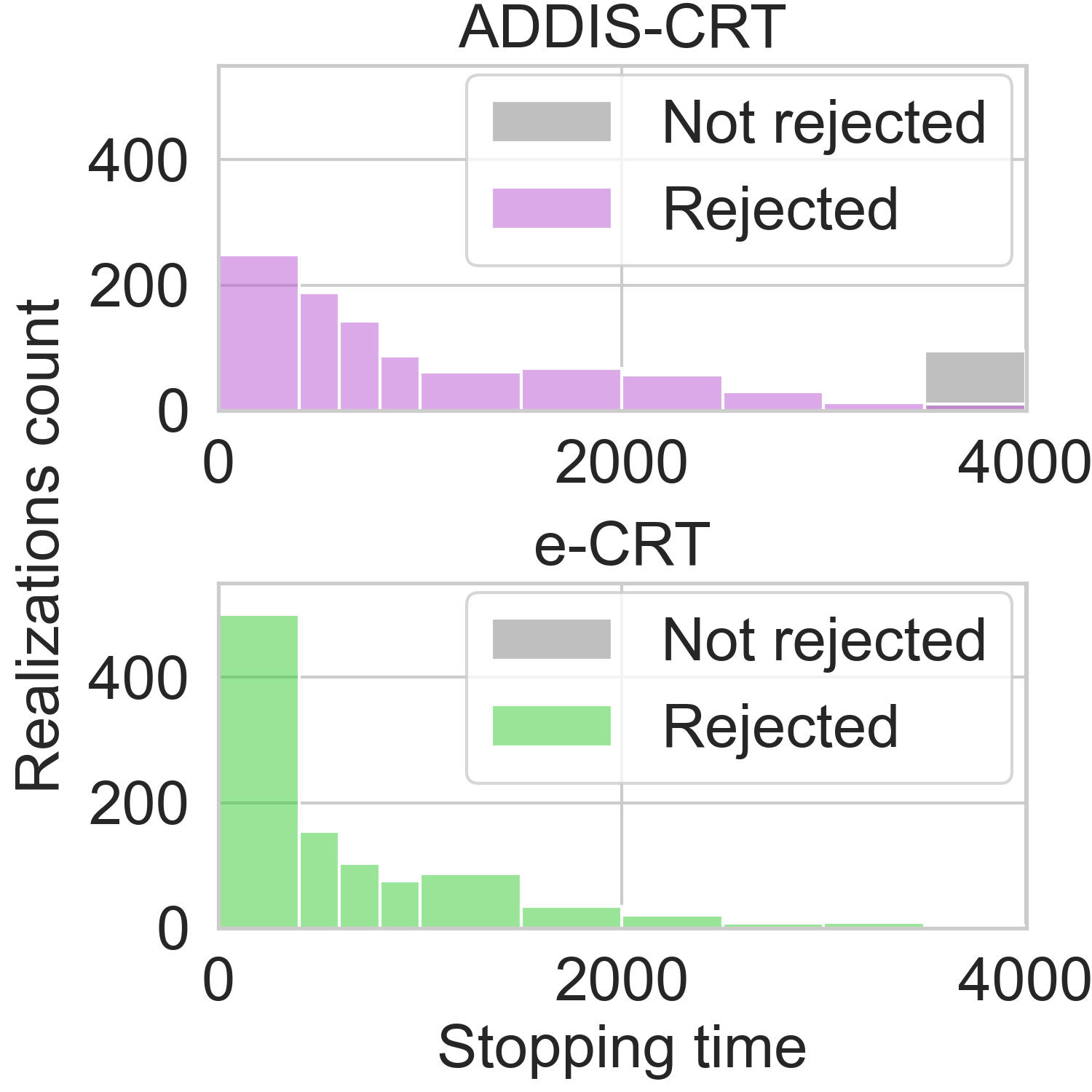}
         \label{fig:neff_vs_n}
     \end{subfigure}
        \caption{\textbf{Performance evaluation with simulated data}, evaluated over 1000 independent trials. Left: type-I error. Middle: empirical power. Right: histogram of the stopping times evaluated on the \texttt{non-null data}. We set $\alpha=0.05$.}
        \label{fig:comparison_graphs}
    \vspace{-0.25cm}
\end{figure*}

\textbf{Ensemble over batches.} The betting score $W_t$ presented in~\eqref{eq:lambda_t} is evaluated on a \emph{single} data point. This choice, however, might be inferior to evaluating $W_t$ on a \emph{batch} of several data points, as working with a batch is often less sensitive to the randomness in the data. On the other hand, a larger batch size reduces the total number of updates of the wealth process $S_t$ that we can make, and this may result in slower growth of the total wealth for a given number of data points. To mitigate the above trade-off, we suggest an ensemble approach by formulating the test martingale
\begin{equation}
    \label{eq:batch_ensemble}
    S_{t} = \frac{1}{|\mathcal{B}|}\sum_{b \in \mathcal{B}}{S_{t,b}}, \ \ \ t \in \mathbb{N}
\end{equation}
as an average of $|\mathcal{B}|$ test martingales $S_{t,b}$, where $\mathcal{B}$ is a set containing the batch-sizes and $|\cdot|$ returns the set size. Each of the batch martingales $S_{t,b}$ is evaluated analogously to~\eqref{eq:test_martingale}, however on a batch of size $b$ instead of a single data point. We refer the reader to Supplementary Section~\ref{supp:ensemble_batches} for more details on the ensemble approach, where we rigorously explain why~\eqref{eq:batch_ensemble} is a valid test martingale. This procedure is summarized in Algorithm~\ref{alg:practical_eCRT}.
The ablation study in Section~\ref{sec:ablation} demonstrates the above trade-off and the advantage of our ensemble procedure. 

\textbf{Unknown conditional.}
As in the CRT, to generate the dummy $\tilde{X}$ we assume that we have access to $P_{X|Z}$, however, in many real applications this distribution may not be known precisely. Here, we briefly discuss several use-cases where it is sensible to assume we have reasonable knowledge about $X \mid Z$, and we refer the reader to~\cite{candes2018panning} for a more detailed discussion. One such use-case is controlled experiments, e.g., genetic crossing experiments~\citep{haldane1931inbreeding}, sensitivity analysis of numerical models~\citep{saltelli2008global}, and gene knockout experiments~\citep{cong2013multiplex}. Another important use-case is when we have a large number of unlabeled observations of $(X, Z)$, so we can estimate $P_{X \mid Z}$ before applying the test. This is a reasonable assumption in various economic or genetic applications as we can collect covariates from different populations, or leverage previous studies that have acquired the same $(X,Z)$ however with different response variables. In such situations, we can utilize powerful machine learning techniques to estimate $P_{X\mid Z}$ using the available data, as suggested in previous work on model-X tests \citep{HRT,gimenez2019knockoffs,GANCRT,romano2020deep,sesia2019gene}. We will rely on such ideas in our experiments with real data. Importantly, the above line of research demonstrates the robustness of the CRT and knockoffs to errors in the estimation of $P_{X\mid Z}$. We believe it will be striking to provide a rigorous robustness theory for our e-CRT, possibly by following the approach presented in \citep{barber2020robust,berrett2020conditional}.

\section{SYNTHETIC EXPERIMENTS}
\label{sec:syn_experiments}
\subsection{Experimental Setup}
\label{sec:exp_setup}

In this section, we evaluate the performance of e-CRT in a controlled synthetic setting, where $P_{X \mid Z}$ and $P_{Y \mid XZ}$ are known. We generate a sequence of i.i.d. data points $\{(X_t,Y_t,Z_t)\}_{t=1}^n$, where $n$ is the maximal number of samples that can be collected. The features ${X_t,Z_t}$ are sampled as follows: $ Z_t \sim \mathcal{N}(0,I_{d})$ and $ X_t \mid Z_t \sim \mathcal{N}(u^\top Z_t,1)$ where $u \sim \mathcal{N}(0,I_d)$, 
so as $X_t$ and $Z_t$ are dependent by design~\citep{PCR}. Unless specified otherwise, we fix the number of covariates to $d=19$, so in total we have $20$ features.
Next, we consider two different conditional models for $P_{Y \mid X Z}$. The first model is used to examine the validity of our test by evaluating the type-I error rate, which we refer to as \texttt{null data} model. Specifically, we sample $Y_t$ such that $Y_t \indep X_t \mid Z_t$ as follows: \mbox{$Y_t \mid X_t,Z_t \sim \mathcal{N}((w^\top Z_t)^2,1)$}, where $w \sim \mathcal{N}(0,1)$.
To evaluate the empirical power---i.e., the rate we reject $H_0$ when applying a test with significance level $\alpha$---we define the following \texttt{non-null data} model in which \mbox{$Y_t \not\indep X_t \mid Z_t$} by construction: $Y_t \mid X_t,Z_t \sim \mathcal{N}((w^\top Z_t)^2+3X_t,1)$, where $w \sim \mathcal{N}(0,1)$.
We compare the performance of our e-CRT to offline~CRT~\citep{candes2018panning}, detailed in Section~\ref{sec:model_x}, as well as to the holdout randomization test (HRT)~\citep{HRT}, which is a computationally efficient variant of the offline CRT that often comes at the price of reduced power due to data splitting. All the methods use lasso regression model to compute the test statistics, whereas in e-CRT we fit the model online as described in Supplementary Section~\ref{supp:online}.
Additional implementation details on all methods are provided in Supplementary Section~\ref{supp:imp_details}.

We also compare e-CRT to out-of-the-box sequential versions of the offline CRT/HRT that allows monitoring the p-value $p_t$ over time. Towards that end, we apply the state-of-the-art ADDIS-spending approach \citep{tian2021online}, which rigorously adjusts the p-value at time $t$ by accounting for multiple comparisons in the time horizon, controlling $H_0$. We refer to the ADDIS-spending version of the CRT and HRT as ADDIS-CRT and ADDIS-HRT; see Supplementary Section~\ref{supp:imp_details} for implementation details. Unfortunately, we find it infeasible to generate a p-value $p_t$ for each time step $t$ due to the high computational complexity of the CRT: it requires fitting $M$ predictive models from scratch for each $t$, where we set $M=1000$ in our experiments to have a reasonable resolution for the p-value corrected by ADDIS-spending. Therefore, we apply ADDIS on p-values evaluated using CRT/HRT over a grid of 11 times steps in total.


\subsection{Type-I Error, Power, and Early Stopping}
\label{sec:exp_power_err}



\textbf{Type-I error.} Recall Figure~\ref{fig:fig_size_legend} from Section~\ref{sec:proposed}, illustrating that the test martingale $S_t$ does not grow significantly over time for a single realization of the \texttt{null data}. Here, we expand this experiment by reporting the type-I error rates of all the sequential tests as a function of $t$, evaluated on 1000 independent realizations of the \texttt{null data} model. Following Figure~\ref{fig:comparison_graphs} (left), we can see how the type-I error of our e-CRT is controlled and falls below the level $\alpha=0.05$ for all time steps $t$, as expected. The same conclusion holds for the sequential tests ADDIS-CRT and ADDIS-HRT. We also present in Supplementary Figure~\ref{fig:type_1_crt_hrt} the type-I error of the offline CRT and offline HRT, showing these also control the type-I error rate, however for a fixed sample size $t$.

\textbf{Robustness.} In practice, we do not have access to the sampling distribution of $X\mid Z$, and thus it is important to study the robustness of the test to approximation error in the sampling of $\tilde{X}$. In Supplementary Section~\ref{supp:parametric_est} we conduct such an experiment, showing that inflation in the type-I error occurs only when the estimation of $P_{X \mid Z}$ is far from the true distribution. Further, in Supplementary Section~\ref{supp:non_parametric_est} we consider a more challenging $X \mid Z$ that follows a student-$t$ distribution and show how a recent non-parametric method \citep{rosenberg2022fast} can be used to effectively estimate the conditional distribution. There, we demonstrate how the type-I error is controlled, and also that the power grows with $t$. A related experiment is given in Section~\ref{sec:real_data}, where we apply our method to real data for which $P_{X \mid Z}$ is unknown and thus must be estimated from data.

\textbf{Power.} The middle panel in Figure~\ref{fig:comparison_graphs} presents the empirical power as a function of $t$, evaluated on 1000 independent realizations of the \texttt{non-null data} model. Observe how the offline tests outperform the sequential ones when the sample size $t$ is relatively small.
Yet, for a larger number of samples with $t>500$, the e-CRT tends to outperform the offline HRT; this may be due to the sample inefficiency of the HRT as it involves data splitting. Interestingly, the e-CRT has comparable performance to CRT for $t>1000$, and the three tests nearly achieve power one when the sample size is large. Importantly, the e-CRT outperforms both ADDIS-CRT and ADDIS-HRT for all $t$, highlighting the advantages of our specialized test for CI compared to more out-of-the-box sequential solutions.

\textbf{Early stopping.} The strength of any sequential test---including our e-CRT---is the ability to monitor the outcome of the test and reject the null as soon as it exceeds a pre-defined threshold. Of course, the earlier the rejection happens the better the sample efficiency of the test. The right panel in Figure~\ref{fig:comparison_graphs} presents the histogram of the stopping times of ADDIS-CRT and e-CRT over the 1000 realizations of the \texttt{non-null data} used in the power experiments. As can be seen, e-CRT tends to reject the null earlier than ADDIS-CRT, indicating superior sample efficiency.


\textbf{Additional experiments.} 
Supplementary Section~\ref{supp:varying_d} studies the effect of the number of covariates $d$ on the performance of e-CRT, where the overall trend is that the power is decreased as $d$ is increased, and the type-I error is controlled. We also examine the impact of the dependency strength between $X$ and $Z$ on the performance of our e-CRT in Supplementary Section~\ref{supp:corr}. There, we observe a decrease in power as the correlation increases, while controlling the type-I error. This trend is in line with previous studies that focus on the offline setting, see, e.g., \cite{candes2018panning,dCRT,MRD}.

\begin{figure}[ht!]
     \centering
     \begin{subfigure}[b]{0.495\textwidth}
         \centering
         \includegraphics[width=0.6\textwidth]{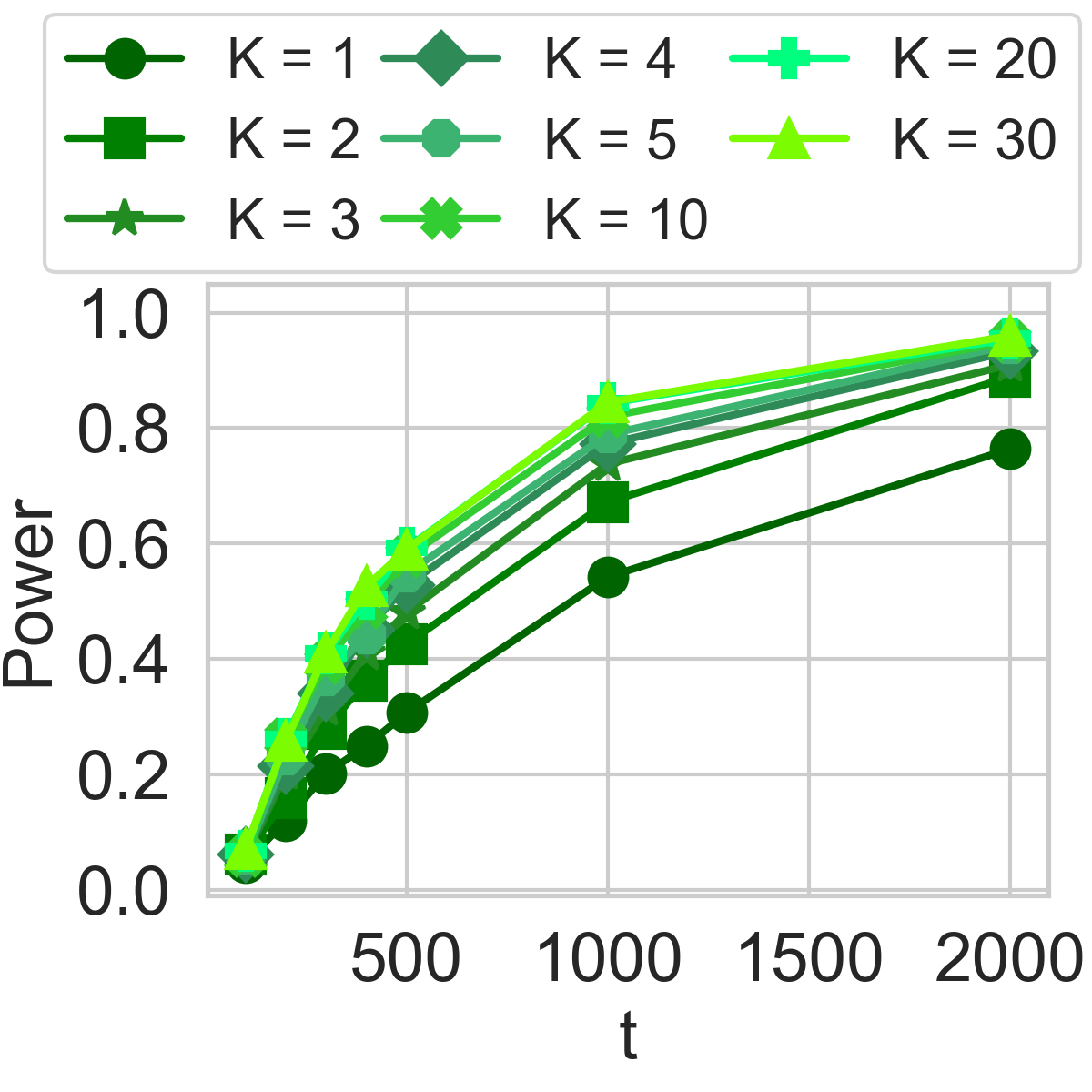}
         \label{fig:power_vs_n_for_k}
     \end{subfigure}
     \begin{subfigure}[b]{0.495\textwidth}
         \centering
         \includegraphics[width=0.6\textwidth]{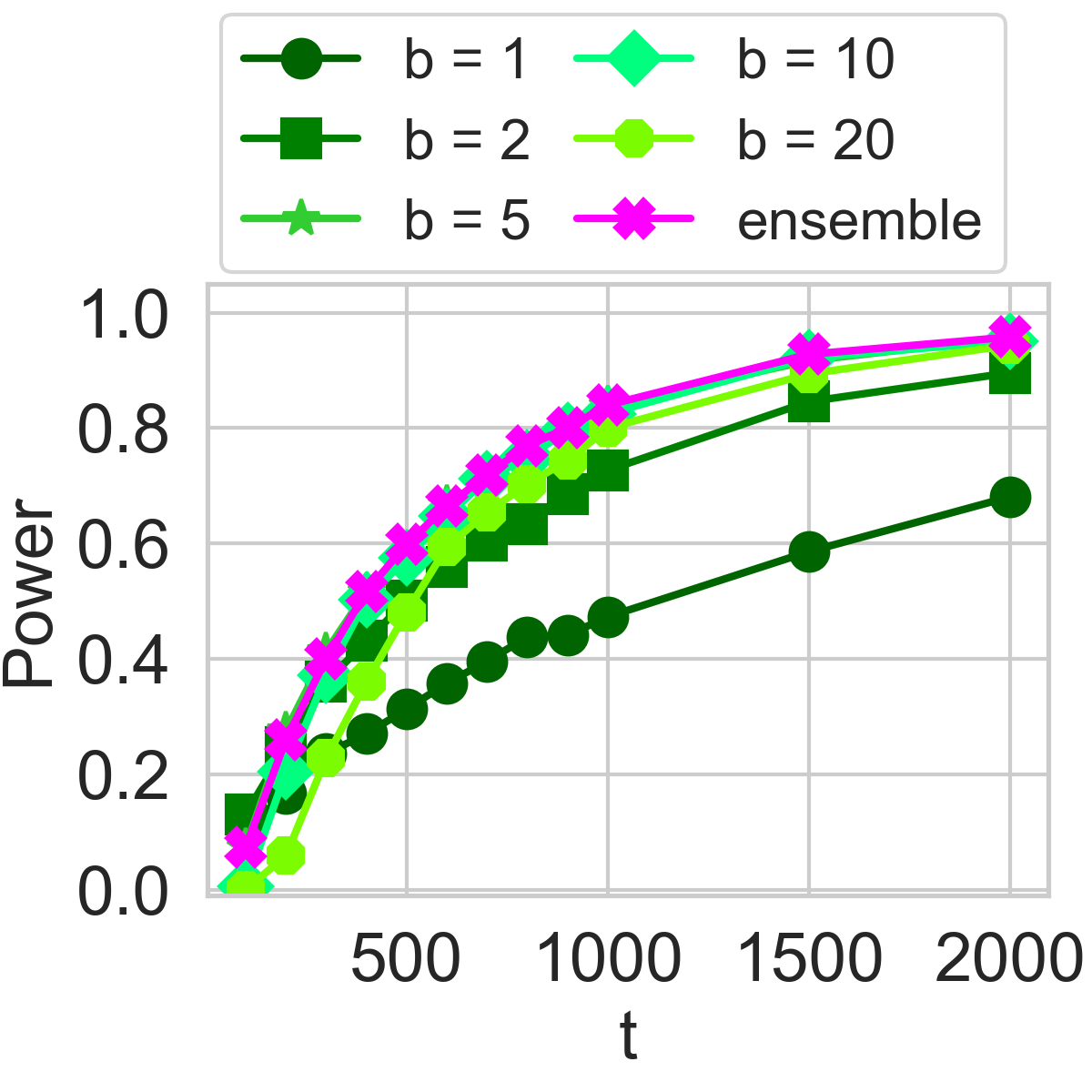}
         \label{fig:power_vs_n_for_b}
     \end{subfigure}
        \caption{\textbf{Ablation study.} Empirical power of e-CRT ($\alpha=0.05$), evaluated on $1000$ realizations of the \texttt{non-null data}. Top: the effect of the de-randomization parameter $K$ on power.
        Bottom: the effect of the batch size on power, in comparison to the batch-ensemble approach in~\eqref{eq:batch_ensemble}.
        }
        \label{fig:k_and_b_plots}
\end{figure}

\subsection{Ablation Study}
\label{sec:ablation}

\textbf{The effect of the de-randomization parameter $K$.}
In Section~\ref{sec:practical} we suggest using an average betting score over $K$ realizations of $W_t$ to reduce the randomization induced by the generation of the dummy features.  To study the effect of the de-randomization parameter $K$ on the power of the e-CRT, we generate \texttt{non-null data} stream of length $n=2000$ and present in Figure~\ref{fig:k_and_b_plots} (top) the power of e-CRT as a function of $t$ for several values of $K$. It is evident that the power of the test is improved as $K$ increases, with a maximal absolute improvement of about $20\%$.

\textbf{The effect of the batch size.} In Section~\ref{sec:practical} we discuss how different batch sizes might affect the performance of the e-CRT, and describe the trade-off between using small and large batch sizes.  To visualize this trade-off, we apply the e-CRT on \texttt{non-null data} with and without the ensemble-over-batches approach, for several choices of batch sizes. Following Figure~\ref{fig:k_and_b_plots}~(bottom), we can see that, for smaller sample sizes, e-CRT with small batches performs better than with large batches in terms of power. However, the opposite is true for larger sample sizes, in which larger batches are favored. Our proposed ensemble approach achieves a remarkable performance: it tends to follow the leading choice for all sample sizes tested.


\section{REAL DATA EXPERIMENTS}
\label{sec:real_data}

\subsection{Fund Replication}
\label{sec:fund_rep}
We begin with the task of identifying which stocks contribute to the performance of known index funds.
Our study follows \cite{challet2021financial,spectorasymptotically} who deployed the knockoff filter in this context. Although the samples are not i.i.d. and thus not satisfying the model-X assumptions, we present this application to illustrate how the e-CRT can be applied to real data. In our experiments, we focus on a technology sector index fund named XLK, and follow the data collection procedure of \cite{spectorasymptotically}. The data consists of the daily log returns of each stock in the S\&P 500 since 2013 and the corresponding daily log return of XLK. We exclude samples and features with missing data, resulting in $n=2421$ and $d=457$. More details are in Supplementary Section~\ref{supp:fund_imp}.

Table~\ref{tab:fund_rep} in Supplementary Section~\ref{supp:fund_table} summarizes the results obtained by applying e-CRT, CRT and HRT on each of the $d = 457$ stocks. We classify a stock as important if it currently belongs to the technology sector XLK. We report the p-values obtained by CRT and HRT for each stock, and the test martingale  $S_{t_\text{stop}}$ for e-CRT, where $t_\text{stop}$ is the stopping time for a test level of $\alpha=0.05$.
Following Supplementary Table~\ref{tab:fund_rep}, we can see that the e-CRT tends to reject the null for stocks that are currently in XLK, and avoids rejecting stocks that are currently not in XLK. Observe the advantage of early stopping: the e-CRT rejects some of the important stocks with a relatively small sample size $t_\text{stop}$.

\subsection{HIV Drug Resistance}
\label{sec:hiv}
Herein, we consider the task of detecting genetic mutations in human immunodeficiency virus (HIV) of type-I that are associated with drug resistance~\citep{HIV}. We follow \cite{romano2020deep} and study the resistance to the Lopinavir protease inhibitor drug, applying the same pre-processing steps to the raw data. Consequently, this data set consists of $n=1555$ samples of $d=150$ features. We consider the data points as if they arrive sequentially and apply the test on each feature. See Supplementary Section~\ref{supp:hiv_imp} for details on the data and the tests we apply.

Table~\ref{tab:real_full_results} in Supplementary Section~\ref{supp:hiv_res} summarizes the results obtained by running e-CRT, CRT, and HRT on each of the $d=150$ mutations, analogously to the fund replication experiment. We classify each mutation by its effect on drug resistance, as reported in previous studies.
Following Supplementary Table~\ref{tab:real_full_results}, we can see that our e-CRT tends to reject the null for mutations that have been previously reported to have a `major' or `accessory' effect and avoid the `unknown' ones. The e-CRT rejects some of the `major'/`accessory' mutations with a relatively small sample size $t_\text{stop}$, demonstrating the advantage of early stopping. Figure~\ref{fig:real_data_martingales} in Supplementary Section~\ref{supp:hiv_res} portrays three test martingales for representative mutations. Figure~\ref{fig:real_data_ecrt_12s} corresponds to a mutation that has not been reported in previous studies to have an effect on drug resistance. Indeed, the test martingale does not grow significantly. By contrast, Figure~\ref{fig:real_data_ecrt_47v} corresponds to a mutation that has been reported to have a major effect on drug resistance, for which $S_t$ grows fast and reaches $1/\alpha=20$ using only $240$ samples.
Lastly, Figure~\ref{fig:real_data_ecrt_48m} corresponds to a mutation that has been also reported to have a major effect. Here, $S_t$ grows at a slower rate and does not reach the nominal level of $1/\alpha=20$. Yet, it achieves a final value of 6.5, which provides substantial evidence against the null~\citep{evalues}.

\section{CONCLUSION}
\label{sec:conclusions}
In this paper, we develop e-CRT---a sequential CI test that allows processing each data point as soon as it arrives, supporting early stopping while controlling the type-I error for all $t$ simultaneously. Our proposed test is inspired by the model-X randomization test \cite{candes2018panning}, and testing by betting~\cite{shafer2019book,shafer2021testing}. We prove the validity of e-CRT and propose several design choices to improve its power, including de-randomization and batch ensemble. Numerical experiments demonstrate the validity of our e-CRT, its superiority over existing out-of-the-box sequential tests for CI, and the impact of the design choices we make on the power of the test.

One important future direction is to theoretically analyze the robustness of the e-CRT to errors in the estimation of $P_{X\mid Z}$, rigorously characterizing the potential inflation in type-I error rate \cite{barber2020robust,grunwald2022anytime}. From a practical perspective, it would be illuminating to develop more robust betting scores, for example, by taking into account the error in the estimation of $P_{X\mid Z}$ or by borrowing ideas from the doubly robust literature \cite{shah2020hardness,shi2021double,niu2022reconciling}. Another direction is to explore new ways to fit predictive models and form more powerful betting scores. In our experiments, we train the predictive model on the original data to minimize the MSE while ignoring the dummy features. Recently, in the context of HRT, \cite{MRD} developed a new loss function that takes into account the dummy features during training to improve the power of the test. We believe such an approach can be used in our context as well.

\section*{Acknowledgements}

This research was supported by the ISRAEL SCIENCE FOUNDATION (grant No. 729/21). Y.R. also thanks the Career Advancement Fellowship, Technion, for providing additional research support.

\bibliography{biblio}
\medskip

\clearpage
\begin{appendix}
\onecolumn
\aistatstitle{Supplementary Material: Model-X Sequential Testing for Conditional Independence via Testing by Betting
}
\section{MATHEMATICAL PROOFS}
\label{supp:proofs}

\begin{proof}[Proof of Lemma~\ref{lemma:swap_ZY}]
Observe that it is equivalent to showing that
\begin{align}
\label{eq:XXtY_Z}
(X_t,\tilde{X}_t,Y_t) \mid Z_t \overset{d}{=} (\tilde{X}_t,X_t,Y_t) \mid Z_t,
\end{align}
since the marginal distribution 
$P_Z$ is identical on both sides of \eqref{eq:XXtY_Z}. Below, we use discrete random variables for simplicity, as the continuous case can be proved analogously. From the law of total probability, we can write relation \eqref{eq:XXtY_Z} as follows:
\begin{align}
\label{eq:law_of_tot}
    \mathbb{P}_{Y\mid X \tilde{X} Z}(y \mid a, b, z) \cdot \mathbb{P}_{X \tilde{X} \mid Z}(a, b \mid z) &= \mathbb{P}_{Y\mid \tilde{X} X Z}(y \mid a, b,z) \cdot \mathbb{P}_{\tilde{X} X \mid Z}(a, b \mid z).
\end{align}
Now, recall that $(X_t,\tilde{X}_t) \mid Z_t \overset{d}{=} (\tilde{X}_t,X_t) \mid Z_t$ by construction, therefore the conditional distributions $\mathbb{P}_{\tilde{X} X \mid Z}$ and $\mathbb{P}_{ X \tilde{X}\mid Z}$ on both sides of \eqref{eq:law_of_tot} are the same. As a result, it suffices to show that 
\begin{equation}
    \label{eq:Y_XX}
    Y_t \mid (X_t,\tilde{X}_t,Z_t) \overset{d}{=} Y_t \mid (\tilde{X}_t,X_t,Z_t).
\end{equation}
The above relation holds once observing that
\begin{align}
    \mathbb{P}_{Y \mid X \tilde{X} Z }(y \mid a,b,z) &= \mathbb{P}_{Y \mid Z }(y \mid z) \\
    &= \mathbb{P}_{Y \mid \tilde{X} {X} Z }(y \mid a,b,z), 
\end{align}
where the first and second equality hold since $Y_t \indep X_t \mid Z_t$, $Y_t \indep \tilde{X}_t \mid Z_t$, and $X_t \indep \tilde{X}_t \mid Z_t$, implying that $Y_t \indep (X_t, \tilde{X}_t) \mid Z_t$. This completes the proof.
\end{proof}

\begin{proof}[Proof of Lemma~\ref{lemma:R_bet}]
Observe that the predictive model $\hat{f}_t$ is a fixed function given $\mathcal{F}_{t-1}$, as it is fitted to $\{(X_s,Y_s,Z_s)\}_{s=1}^{t-1}$.
Thus, we can invoke Lemma~\ref{lemma:swap_ZY}, implying that under the null
\begin{equation}
    \label{eq:swap_g}
    g(q_t, \tilde{q}_t) \mid \mathcal{F}_{t-1} \overset{d}{=} g( \tilde{q}_t, q_t) \mid \mathcal{F}_{t-1},
\end{equation}
as $\hat{f}_t$ is a fixed function.
Now, recall that $g(\cdot)$ is an antisymmetric function, i.e.,
\begin{equation}
    \label{eq:anti_sym}
    g(q_t,\tilde{q}_t) = -g(\tilde{q}_t,q_t),
\end{equation} 
and observe that by combining~\eqref{eq:swap_g} and~\eqref{eq:anti_sym} we get the following 
\begin{equation}
    \label{eq:sym_pdf}
        g(q_t, \tilde{q}_t) \mid \mathcal{F}_{t-1} \overset{d}{=} -g(q_t,\tilde{q}_t) \mid \mathcal{F}_{t-1}.
\end{equation}
This implies that, under the null, the density function of $g(q_t, \tilde{q}_t) \mid \mathcal{F}_{t-1}$ is symmetric about $0$, and therefore
\begin{equation}
    \label{eq:expect_0}
    \mathbb{E}_{H_0}[g(q_t,\tilde{q}_t)\mid \mathcal{F}_{t-1}] = \mathbb{E}_{H_0}[W_t\mid \mathcal{F}_{t-1}] = 0.
\end{equation}

\end{proof}

\begin{proof}[Proof of Theorem~\ref{th:validity}]
Note that $S_0=1$ and $S_t$ in \eqref{eq:test_martingale} is non-negative for all $t=\mathbb{N}$ by construction. According to Ville's inequality \eqref{eq:ville}, it is enough to show that $\{S_t : {t \in \mathbb{N}_0}\}$ is a supermartingale under $H_0$ with respect to the filtration $\{\mathcal{F}_{t-1} : {t \in \mathbb{N}}\}$. This statement holds true since
\begin{align}
    \mathbb{E}_{H_0}[S_t \mid \mathcal{F}_{t-1}] &= \mathbb{E}_{H_0}\bigg[\int_0^1 \prod_{s=1}^t (1 + v \cdot W_j) \cdot h(v) dv \mid \mathcal{F}_{t-1}\bigg] \\
    &= \int_0^1 \prod_{j=1}^{t-1} (1 + v \cdot W_j)\cdot  \mathbb{E}_{H_0}[1 + v \cdot W_t\mid \mathcal{F}_{t-1}] \cdot h(v) dv \\
    &= \int_0^1 \prod_{j=1}^{t-1} (1 + v \cdot W_j)\cdot  (1+v\cdot\mathbb{E}_{H_0}[W_t \mid \mathcal{F}_{t-1}]) \cdot h(v) dv\\
&= \int_0^1 \prod_{j=1}^{t-1} (1 + v \cdot W_j) \cdot h(v) dv =S_{t-1}.
\end{align}
The second equality is due Tonelli's theorem~\cite{tonelli}, as $\prod_{j=1}^t (1 + v \cdot W_j) \cdot h(v)$ is non-negative for all $t \in \mathbb{N}$, and $h(v)$ is a probability density function. The last equality holds by invoking Lemma~\ref{lemma:R_bet}.
\end{proof}

\section{THE OFFLINE CONDITIONAL RANDOMIZATION TEST}
\label{supp:crt}
\begin{algorithm}[H]
\textbf{Input}: Data $\{(X_i,Y_i,Z_i)\}_{i=1}^n$; test statistic $T(\cdot)$; number of iterations $M$.

\begin{algorithmic}[1]
\STATE Set $t \leftarrow T(\{(X_i,Y_i,Z_i)\}_{i=1}^n)$
\FOR{$m = 1,\dots,M$}
\STATE  Sample dummy variables $\tilde{X}_i \sim P_{X \mid Z}(X_i \mid Z_i)$ for $i=1,\dots,n$
\STATE  Set $\tilde{t}^{(m)} \leftarrow T(\{\tilde{X}_i,Y_i,Z_i)\}_{i=1}^n)$ 
\ENDFOR
\end{algorithmic}
\textbf{Output}: A p-value $p_n = \frac{1}{1+M}\big({1 + \sum_{m=1}^M{\mathds{1}\{\tilde{t}^{(m)} \leq t\}}}\big)$.
\vspace{0.1cm}
\caption{Offline Conditional Randomization Test}
\label{alg:CRT}
\end{algorithm}

\section{SUPPLEMENTARY DETAILS ON THE PROPOSED METHOD}
\label{supp:proposed}

\subsection{Validity of the Base Martingale}
\label{supp:stv}
In Section~\ref{sec:formulating} we formulate the base martingale $S_t^v$ in~\eqref{eq:test_martingale_v}. Here, we prove in Proposition~\ref{prop:valid_stv} that $\{S_t^v : t \in \mathbb{N}_0\}$ is a valid test martingale, according to Definition~\ref{def:test_martingale}, for any $v \in [0,1]$.

\begin{prop}
\label{prop:valid_stv}
The base martingale $\{S_t^v : t \in \mathbb{N}_0\}$~\eqref{eq:test_martingale_v} is a valid test martingale w.r.t the filtration $\{\mathcal{F}_{t-1} : {t \in \mathbb{N}}\}$, i.e., satisfying Definition~\ref{def:test_martingale}, for any constant $v \in [0,1]$.
\end{prop}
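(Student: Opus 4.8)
The plan is to verify directly the three defining properties of a test martingale from Definition~\ref{def:test_martingale} for $S_t^v = \prod_{j=1}^t(1+v\cdot W_j)$, treating this as the single-parameter special case of the computation that later drives the proof of Theorem~\ref{th:validity}. Property (i) is immediate: $S_0^v$ is an empty product and hence equals $1$ by convention.

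For property (ii), non-negativity, I would invoke the prescribed ranges of the quantities involved. The betting score satisfies $W_j \in [-m,m]$ with $0 < m \leq 1$, and $v \in [0,1]$; therefore $v\cdot W_j \geq -vm \geq -m \geq -1$, so that each factor obeys $1 + v\cdot W_j \geq 1 - m \geq 0$. A product of non-negative terms is non-negative, giving $S_t^v \geq 0$ for all $t$. This is the one step that genuinely exploits the constraints $m \leq 1$ and $v \in [0,1]$, and I expect it to be the place where the most care is needed, since dropping either hypothesis would allow a factor to go negative and break both non-negativity and the intended game-theoretic interpretation.

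Finally, for property (iii), the supermartingale condition under $H_0$ with respect to $\{\mathcal{F}_{t-1}\}$, I would write $S_t^v = S_{t-1}^v \cdot (1 + v\cdot W_t)$ and condition on $\mathcal{F}_{t-1}$. Since $W_1,\dots,W_{t-1}$ — and hence $S_{t-1}^v$ — are $\mathcal{F}_{t-1}$-measurable, and since $\hat{f}_t$ is a fixed function once we condition on $\mathcal{F}_{t-1}$, the factor $S_{t-1}^v$ pulls out of the conditional expectation, yielding
\[
\mathbb{E}_{H_0}[S_t^v \mid \mathcal{F}_{t-1}] = S_{t-1}^v \cdot \big(1 + v\cdot \mathbb{E}_{H_0}[W_t \mid \mathcal{F}_{t-1}]\big).
\]
Invoking Lemma~\ref{lemma:R_bet}, which gives $\mathbb{E}_{H_0}[W_t \mid \mathcal{F}_{t-1}] = 0$, the right-hand side collapses to $S_{t-1}^v$, so $\{S_t^v\}$ is in fact a martingale under $H_0$, and a fortiori a supermartingale. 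Combining the three verifications establishes the claim for every $v \in [0,1]$. The argument is essentially routine once Lemma~\ref{lemma:R_bet} is in hand; the only non-mechanical point is the non-negativity bound above.
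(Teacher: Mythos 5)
Your proof is correct and follows essentially the same route as the paper: the identical decomposition $S_t^v = S_{t-1}^v\,(1+v\cdot W_t)$, pulling the $\mathcal{F}_{t-1}$-measurable prefix out of the conditional expectation, and invoking Lemma~\ref{lemma:R_bet} to get $\mathbb{E}_{H_0}[W_t \mid \mathcal{F}_{t-1}]=0$, so the process is in fact a martingale and a fortiori a supermartingale. If anything, your non-negativity step, $1+v\cdot W_j \geq 1-m \geq 0$ derived from $|W_j|\leq m \leq 1$ and $v\in[0,1]$, is stated more carefully than the paper's, which writes ``$v \cdot W_t \geq 0$ for any $t\geq 1$''---an apparent typo, since $W_t$ can be negative; the bound the paper intends is exactly the one you give.
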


\begin{proof}
Let $v \in [0,1]$. Note that $S_0^v=1$ and $v \cdot W_t \geq 0$ for any $t \geq 1$, hence $S_t^v \geq 0, \forall t \in \mathbb{N}_0$ by construction. To conclude the prof, we show that $\{S_t^v : t \in \mathbb{N}_0\}$ is a supermartingale under the null with respect to $\{\mathcal{F}_{t-1} : {t \in \mathbb{N}}\}$:
$$
\mathbb{E}_{H_0}[S_t^v \mid \mathcal{F}_{t-1}] = \prod_{j=1}^{t-1}(1 + v \cdot W_j) \cdot \mathbb{E}_{H_0}[1 + v \cdot W_t \mid \mathcal{F}_{t-1}]=\prod_{j=1}^{t-1}(1 + v \cdot W_j) \cdot (1 + v \cdot \mathbb{E}_{H_0}[W_t \mid \mathcal{F}_{t-1}]) = S_{t-1}^v.
$$
\end{proof}

\subsection{Further Details on the Synthetic Experiment from Section~\ref{sec:formulating}}
\label{supp:imp_fig2}



Here, we conduct an experiment that visualizes the advantage of the mixture-method $S_t$~\eqref{eq:test_martingale} over a constant $v$ in $S_t^v$~\eqref{eq:test_martingale_v}. To this end, we present in Figure~\ref{fig:density_plots} the wealth process $S_t^v$ obtained for several values of $v$ and the mixture-method test martingale $S_t$~\eqref{eq:test_martingale}, for two different realizations of the \texttt{non-null data} generating model from Figure~\ref{fig:fig_size_legend}. We construct $S_t$ by applying Algorithm~\ref{alg:practical_eCRT} to the generated data, with the choice of $h(v)$ as the pdf of the uniform distribution on $[0,1]$. The base test martingales $S_t^v$ are constructed in the same fashion but with~\eqref{eq:test_martingale_v} in line 8 of Algorithm~\ref{alg:practical_eCRT} instead of the mixture approach. Below, we provide the implementation details of Algorithm~\ref{alg:practical_eCRT} for this experiment.
\begin{itemize}
    \item We set the betting score function in~\eqref{eq:lambda_t} to be $g(a,b) = \textrm{sign}(b-a)$.
    \item The online learning model for $\hat{f}_t$ takes the form of lasso regression using the hyper-parameter tuning approach described in Section~\ref{sec:practical}; we trained $L=20$ models, each corresponds to a different $\eta$, where the number of samples for initial training is set to be $n_{\text{init}}=20$.
    \item  The test statistic function is the mean squared error of a given batch $T(\{(X_s,Y_s,Z_s)\}_{s=1}^b;\hat{f}) = \frac{1}{b} \sum_{s=1}^b (\hat{f}(X_s)-Y_s)^2$, where we use a batch size of $b=5$.
    \item  We set the de-randomization parameter $K$, described in Section~\ref{sec:practical}, to be equal to $20$.
\end{itemize}

Although the data distribution is identical in both cases, the wealth processes presented in Figure~\ref{fig:density_plots} behave very differently: the left panel portrays that the best-performing constant is $v^*=0.3$, whereas the right panel indicates that $v^*=0.7$. Importantly, observe how the martingale $S_t^{v^*}$ grows exponentially with $t$, and thus has a strong traction force on the average martingale $S_t$. Observe also how the effect of the growing base martingales on $S_t$ is stronger than that of the ones that do not grow with $t$. This demonstrates the advantage of the mixture-method $S_t$~\eqref{eq:test_martingale} over the base martingale with a constant $v$ in $S_t^v$~\eqref{eq:test_martingale_v}.

\begin{figure}
     \centering
     \begin{subfigure}[b]{0.49\columnwidth}
         \centering
         \includegraphics[width=\textwidth]{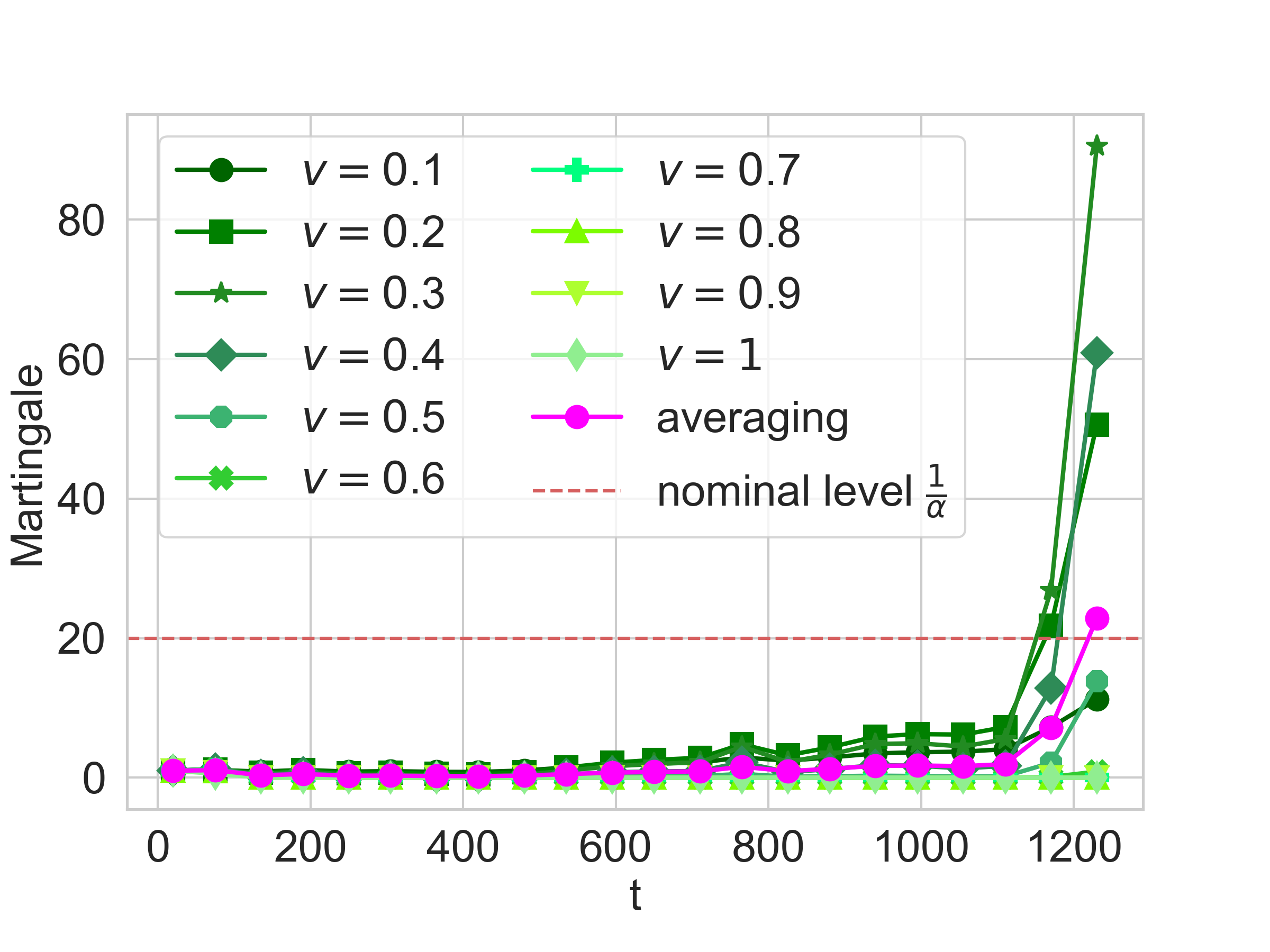}
         \label{fig:density_seed_4}
     \end{subfigure}
     \hfill
     \begin{subfigure}[b]{0.49\columnwidth}
         \centering
         \includegraphics[width=\textwidth]{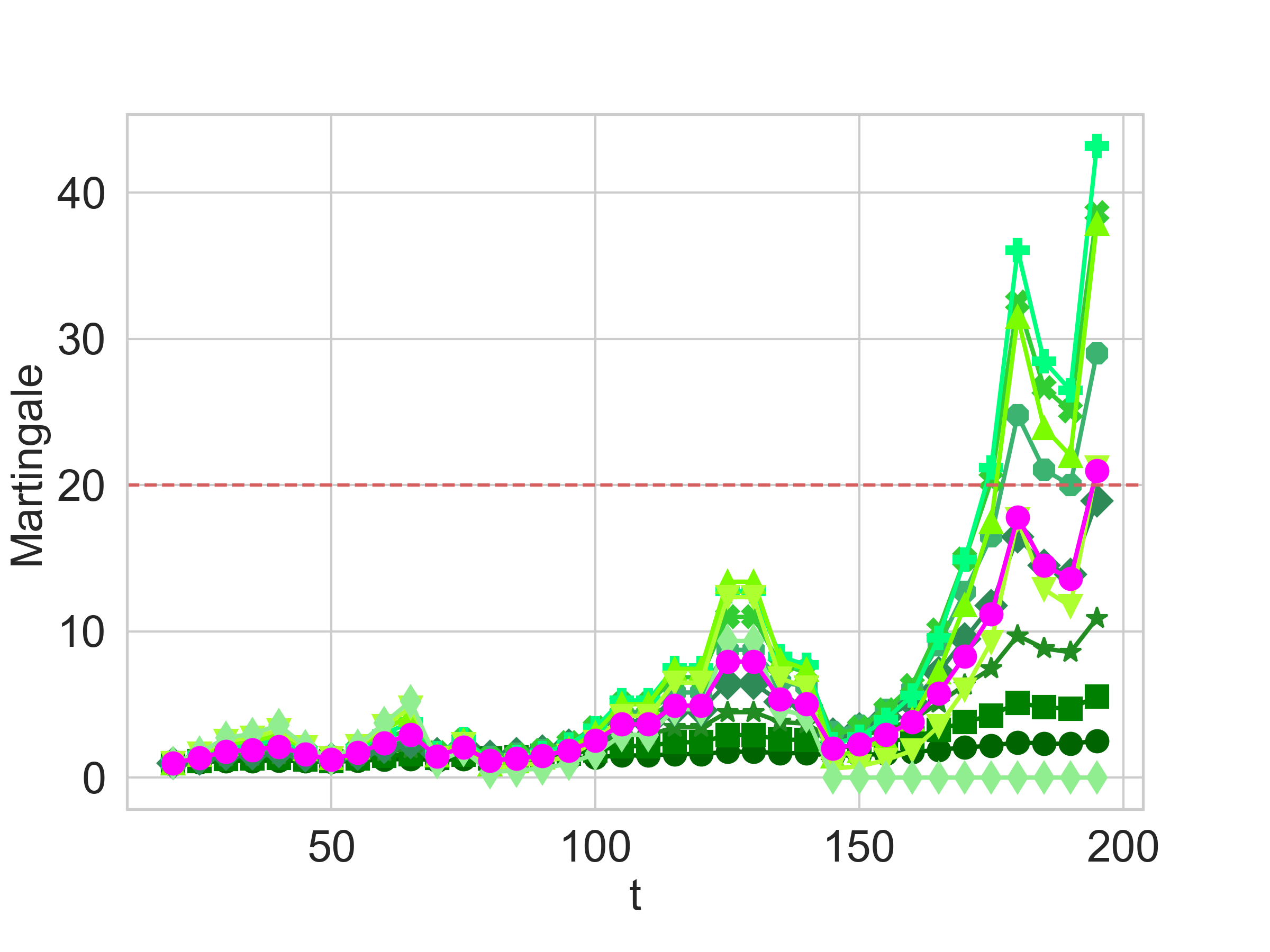}
         \label{fig:density_seed_5}
     \end{subfigure}
        \caption{\textbf{The effect of the amount of toy money $v$ we risk on the wealth, in comparison with the mixture-method martingale.} Each of the two panels corresponds to a different realization of the \texttt{non-null data}. The green curves represent $S_t^v$ in~\eqref{eq:test_martingale_v} with different choices of $v$. The magenta curve represents the average test martingale $S_t$ in~\eqref{eq:test_martingale}.}
        \label{fig:density_plots}
\end{figure}



\section{SUPPLEMENTARY DISCUSSION ON RELATED WORK}
\label{supp:grunwald}

The contemporary work by \cite{grunwald2022anytime} also offers an approach to test for CI sequentially based on test martingales.  Although \cite{grunwald2022anytime} test martingale shares similarities with the method proposed in this work, there are several key differences. \cite{grunwald2022anytime} martingale can be conceptualized as a likelihood ratio process and it involves integration over $P_{X \mid Z}$. By contrast, ours resembles the knockoffs approach which measures differences of a test statistic evaluated on the original and dummy triplets, and is valid due to the anti-symmetry of the betting score~\eqref{eq:lambda_t}. In terms of power, under the assumption of a fast converging estimator for $P_{Y \mid X, Z}$, \cite{grunwald2022anytime} prove their test martingale has a growth-rate optimality \citep{grunwald2020safe}. On the other hand, following Proposition~\ref{prop:power_1}, we merely assume a weaker assumption to achieve power one asymptotically (should not be confused with growth-rate optimality): the model should distinguish the original and dummy triplets \emph{on average}. Given the aforementioned variations, one may opt for the technique put forth by \cite{grunwald2022anytime} if there is a reasonable estimate for the conditional distribution of $Y \mid X, Z$. On the other hand, our method may be a more suitable choice if greater flexibility is desired for designing the machine learning model.

To support the above discussion, we compare the two methods based on simulated data. We focus on binary classification since \cite{grunwald2022anytime} present a concrete test martingale in this context. To this end, we
generate $X_t,Z_t$ akin to Section~\ref{sec:exp_setup}. The binary response $Y_t$ is generated from a Bernoulli distribution with a probability obtained by applying the sigmoid function to $c \cdot \varphi(X_t,Z_t)$, where we conduct two experiments. In the first, we choose $\varphi$ to be a linear function and $c=1$, whereas in the second we choose $\varphi$ to be a non-linear function and set $c=0.8$. We deploy the method of \citet{grunwald2022anytime} as described in \citep{grunwald2022anytime}[Section~3.3], and our e-CRT with $K=20$, $\mathcal{B}=\{2,5,10\}$, $g(a,b)=\textrm{tanh}(20 \cdot (b-a)/\textrm{max}\{a,b\})$, and $T(\cdot)$ be the binary cross entropy loss. We use a neural network classifier for both testing methods. 

We begin with a linear case, demonstrating a scenario where a reasonable estimation of $P_{Y \mid X,Z}$ can be achieved. In this experiment, we set $\varphi(X_t,Z_t) = \beta^\top Z_t + 3 \cdot X_t$ with $\beta \sim \mathcal{N}(0,I_d)$ and $d=5$. Figure~\ref{fig:grun_pow_lin} presents the empirical power of both methods evaluated on 100 realizations of the described data. As can be seen, the method of \cite{grunwald2022anytime} tends to be more powerful than ours, indicating the advantage of \citet{grunwald2022anytime} in cases where a good estimation of $P_{Y \mid X,Z}$ is attainable. 

Next, we consider a more complex non-linear interaction model, where $\varphi(X_t,Z_t) = \beta^\top Z_t + 6 \cdot X_t \cdot |Z_t^{(1)}|  \cdot 
|Z_t^{(2)}|$. Here, $\beta \sim \mathcal{N}(0,I_d)$ and $d=10$, where $Z_t^{(j)}$ represents the $j$th covariate of $Z_t$. The empirical power of the methods, evaluated on 100 realizations of the data, is depicted in Figure~\ref{fig:grun_pow}. As portrayed, our e-CRT performs better in this case, indicating the superiority of our e-CRT when it is harder to attain a fast converging estimator for $P_{Y \mid X, Z}$. 

Lastly, Figure~\ref{fig:grun_err} presents the type-I error of both e-CRT and the method proposed by \cite{grunwald2022anytime}, evaluated on 100 realizations of the data with $\varphi(X_t,Z_t) = \beta^\top Z_t$ and $c=0.8$ where $\beta \sim \mathcal{N}(0,I_d)$ and $d=10$. There, the type-I error is controlled for both methods.

\begin{figure}[t]
\centering
     \begin{subfigure}[b]{0.33\columnwidth}
         \centering
         \includegraphics[width=\textwidth]{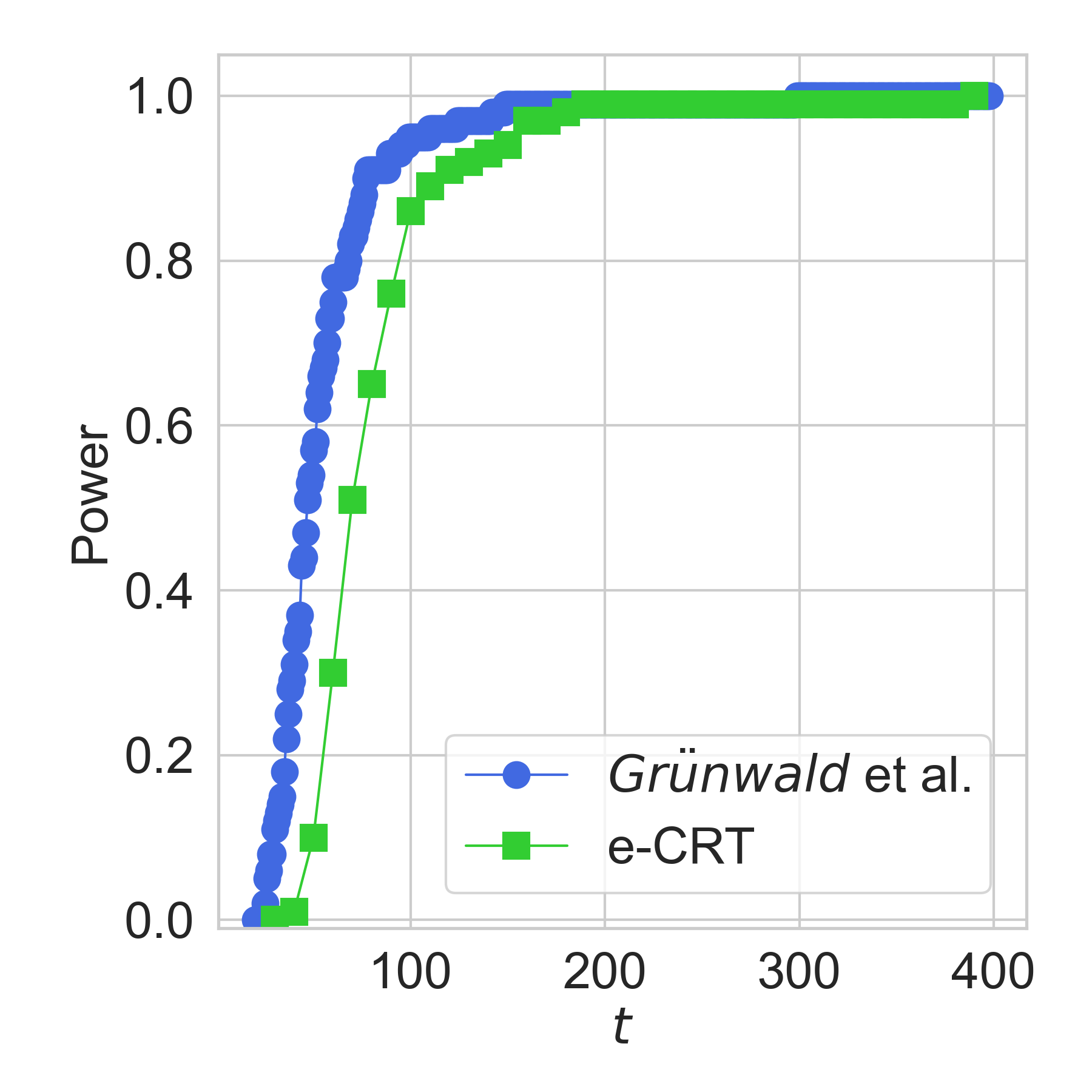}
         \caption{}
         \label{fig:grun_pow_lin}
     \end{subfigure}
     \hfill
     \begin{subfigure}[b]{0.33\columnwidth}
         \centering
         \includegraphics[width=\textwidth]{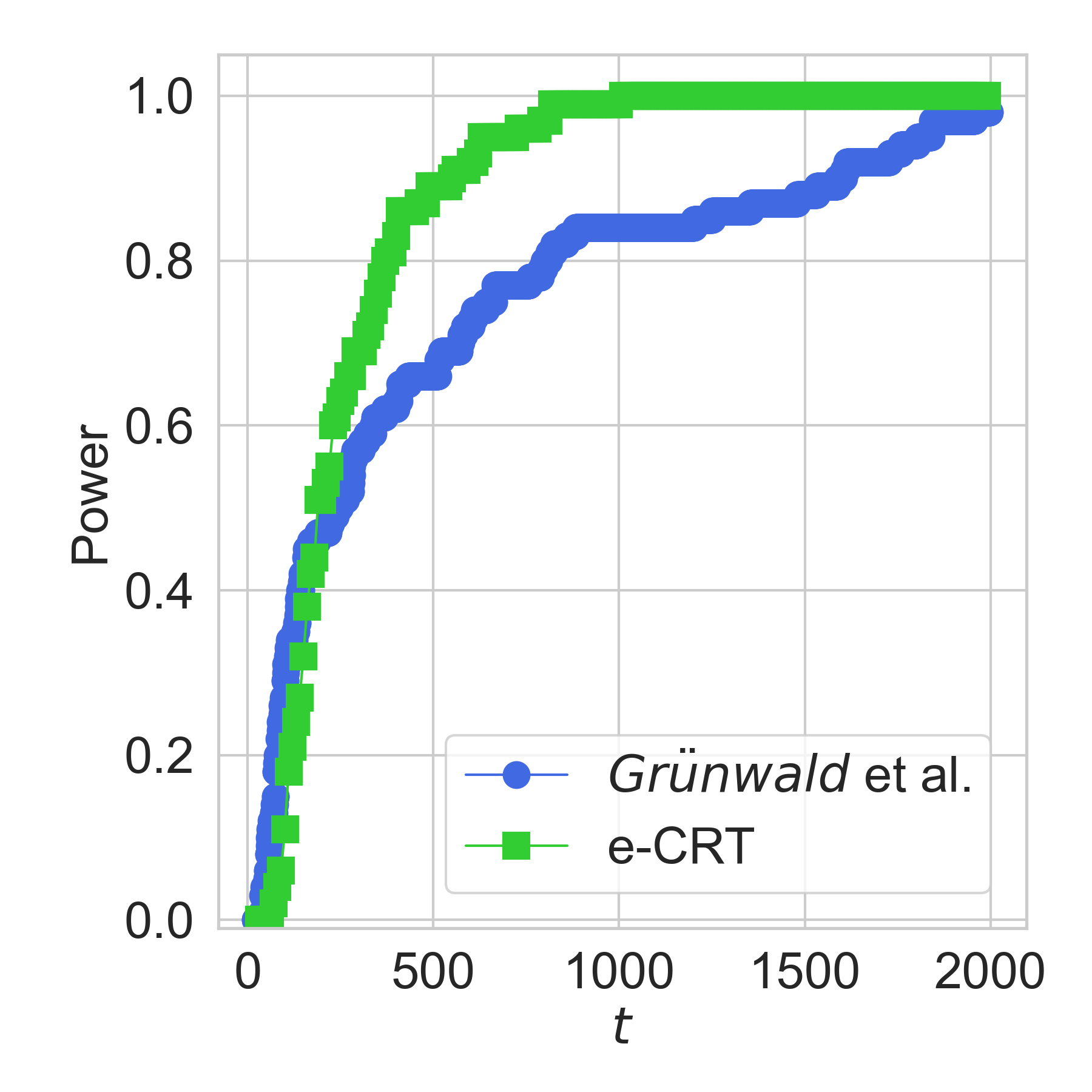}
         \caption{}
         \label{fig:grun_pow}
     \end{subfigure}
     \hfill
     \begin{subfigure}[b]{0.33\columnwidth}
         \centering
         \includegraphics[width=\textwidth]{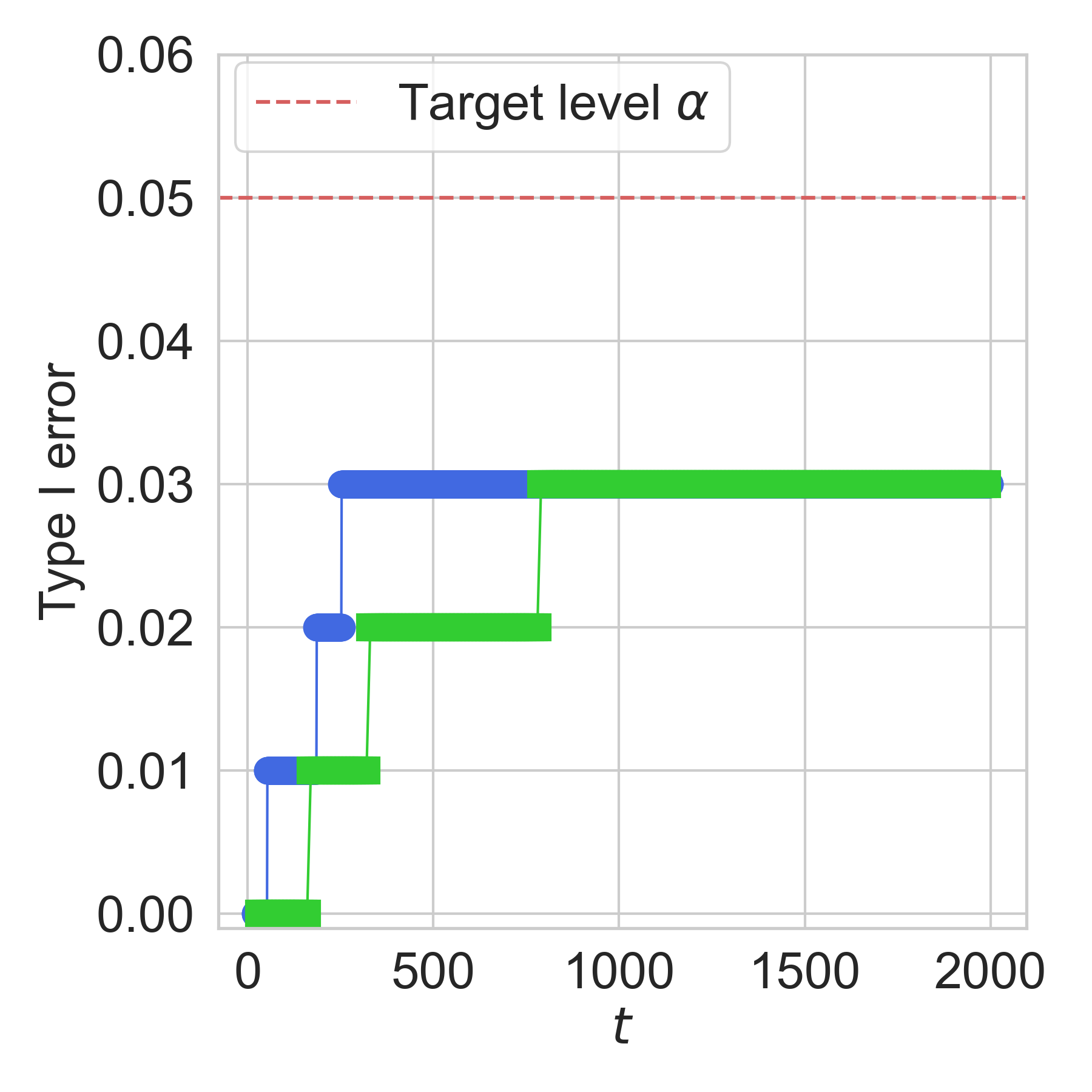}
         \caption{}
         \label{fig:grun_err}
     \end{subfigure}
\caption{\textbf{Empirical power and type-I error rate of e-CRT and the method proposed by \cite{grunwald2022anytime}}, evaluated on $100$ realizations of the data. (a) empirical power in a linear case, where $\varphi(X_t,Z_t) = \beta^\top Z_t + 3 \cdot X_t$. (b) empirical power in a non-linear case, where $\varphi(X_t,Z_t) = \beta^\top Z_t + 6 \cdot X_t \cdot |Z_t^{(1)}| \cdot |Z_t^{(2)}|$. (c) type-I error rate with $\varphi(X_t,Z_t) = \beta^\top Z_t$.}
\label{fig:grun_comp}
\end{figure}

\section{ONLINE LEARNING WITH AUTOMATIC HYPER-PARAMETER TUNING}
\label{supp:online}
Recall that after we place a bet for a new test point $(X_t,Y_t,Z_t)$, we update the predictive model $\hat{f}_t$ using $(X_t,Y_t,Z_t)$ by leveraging online learning techniques and get $\hat{f}_{t+1}$. The online update is done sequentially and thus computationally efficient. For example, in our experiments we use lasso regression model, minimizing
\begin{equation}
\label{eq:lasso}
    \hat{\beta}_t:=\arg\min_\beta \ \frac{1}{t}\sum_{s=1}^t (X_s^{\top}\beta - Y_s)^2 + \eta\|\beta\|_1,
\end{equation}
where $\eta$ is a hyper-parameter that controls the regularization strength. The above optimization problem is convex and it is minimized using an iterative solver \cite{coordinate_descent,ADMM}. To form a computationally efficient learning algorithm, at each step $t$ we initialize the iterative solver with the previous $\hat{\beta}_{t-1}$ and update the regression coefficients by applying a few additional steps with the squared error term in \eqref{eq:lasso} that includes the new observed point. To obtain a powerful predictive model we should tune the hyper-parameter $\eta$, but tuning this parameter via standard cross-validation may break the sequential update of $\hat{\beta}_t$. As a way out, we apply the following procedure for tuning $\eta$.
We train \emph{online} a series of $L$ models $\hat{f}^{l}_{t_{\text{tr}}}$ on $\{(X_s, Y_s, Z_s)\}_{s<t_{\text{tr}}}$ over a grid of possible values of $\eta_l$, $l=1\dots,L$, where $t_{\text{tr}} < t$, and evaluate the models on the $t - t_{\text{tr}}$ recent holdout points. Next, we update the running model $\hat{f}_t$ by minimizing \eqref{eq:lasso} with $\eta_{l^*}$, where $l^*$ is the index of the model $\hat{f}^{l}_{t_{\text{tr}}}$ that achieves the smallest prediction error; this is done by applying a few steps of any iterative solver, initialized with the previous $\hat{\beta}_{t-1}$.

\section{SUPPLEMENTARY DETAILS ON ENSEMBLE OVER BATCHES}
\label{supp:ensemble_batches}

In Section~\ref{sec:practical} we present our approach of ensemble of batch martingales $S_{t,b}$. Each of $S_{t,b}$ is evaluated on a batch of size $b$ instead of a single data point. In more detail,
\begin{equation}
    \label{eq:batch_martingale}
    S_{t,b}=\int_0^1 S_{t,b}^v \cdot h(v) dv, \ \text{with} \ \ S_{t,b}^{v} :=\prod_{s=1}^{\lfloor{t/b}\rfloor} (1 + v \cdot W_{s,b}),
\end{equation}
and $\lfloor{\cdot}\rfloor$ is the floor function. Above, we use the convention that $\prod_{s=1}^{0} (1 + v \cdot W_{s,b}) = 1$. The betting score $W_{s,b}$ in~\eqref{eq:batch_martingale} is evaluated similarly to \eqref{eq:lambda_t} but on a batch via the following test statistic function $$q_{s,b} = T(\{(X_j,Y_j,Z_j)\}_{j=(s-1)\cdot b+1}^{s \cdot b}; \hat{f}_{(s-1)\cdot b + 1}) \in \mathbb{R},$$
that operated on the original batch of triplets; analogously, $\tilde{q}_{s,b}$ is obtained by invoking the same $T$ function, however on the dummy triplets, resulting in $W_{s,b} = g({q}_{s,b},\tilde{q}_{s,b})$.
For example, $T$ can be a function returning the mean squared error of $\hat{f}_{(s-1)\cdot b + 1}$ evaluated on the observed data. Importantly, the test martingale in~\eqref{eq:batch_ensemble} is valid since
$$
\mathbb{E}_{H_0}\left[ S_t \mid \mathcal{F}_{t-1}\right]=\mathbb{E}_{H_0}\left[ \frac{1}{|\mathcal{B}|}\sum_{b\in \mathcal{B}}{S_{t,b}} \mid \mathcal{F}_{t-1}\right] =\frac{1}{|\mathcal{B}|}\sum_{b\in\mathcal{B}}{\mathbb{E}_{H_0}[S_{t,b} \mid \mathcal{F}_{t-1}]} = \frac{1}{|\mathcal{B}|}\sum_{b\in\mathcal{B}}{S_{t-1,b}} = S_{t-1},
$$
where the third equation is because each of the batch martingales $S_{t,b}$ is valid.

\section{SUPPLEMENTARY DETAILS ON SYNTHETIC EXPERIMENTS}
\label{supp:synth_exp}

\subsection{Implementation Details}
\label{supp:imp_details}
In Section~\ref{sec:exp_setup} we describe the experimental setup of our synthetic experiments. Here we provide the implementation details of e-CRT and the baseline methods---CRT and HRT.

We implement the e-CRT procedure as described in Algorithm~\ref{alg:practical_eCRT}, with the following choices. We set the betting score function~\eqref{eq:lambda_t} to be $g(a,b) = \textrm{sign}(b-a)$. The martingale in~\eqref{eq:test_martingale} is evaluated by choosing $h(v)$ as the pdf of the uniform distribution on $[0,1]$. The online learning model for $\hat{f}_t$ takes the form of lasso regression using the hyper-parameter tuning approach described in Section~\ref{sec:practical}; we trained $L=20$ models, each corresponds to a different $\eta$, where the number of samples for initial training is set to be $n_{\text{init}}=20$. The test statistic function is the mean squared error of a given batch, defined as $T(\{(X_s,Y_s,Z_s)\}_{s=1}^b;\hat{f}) = \frac{1}{b} \sum_{s=1}^b (\hat{f}(X_s)-Y_s)^2$, where we apply the batch-ensemble approach with $\mathcal{B} = \{2,5,10\}$ in~\eqref{eq:batch_ensemble}. Lastly, we set the de-randomization parameter $K$, described in Section~\ref{sec:practical}, to be equal to $20$.

The machine learning method we use in both CRT and HRT is a 5-fold cross-validated lasso regression algorithm. As for the ADDIS-Spending approach~\cite{tian2021online} for adjusting CRT and HRT, we use the software package available at \url{https://github.com/jinjint/onlineFWER}, with the default parameters.

\subsection{Type-I Error of the Offline CRT and HRT}
\label{supp:type_1}
In Section~\ref{sec:exp_power_err} we present the empirical power of e-CRT compared to CRT, HRT and out-of-the-box sequential versions of them ADDIS-CRT and ADDIS-HRT, evaluated on simulated data. There, we present the type-I error only for the sequential tests: e-CRT, ADDIS-CRT and ADDIS-HRT. Here, we present in Figure~\ref{fig:type_1_crt_hrt} the type-I error of CRT and HRT evaluated on the same data as in Section~\ref{sec:exp_power_err}. Importantly, the presented type-I error is evaluated by treating the data at each presented time step as a fixed size dataset.

\begin{figure}[H]
\begin{centering}
\includegraphics[width=0.35\textwidth]{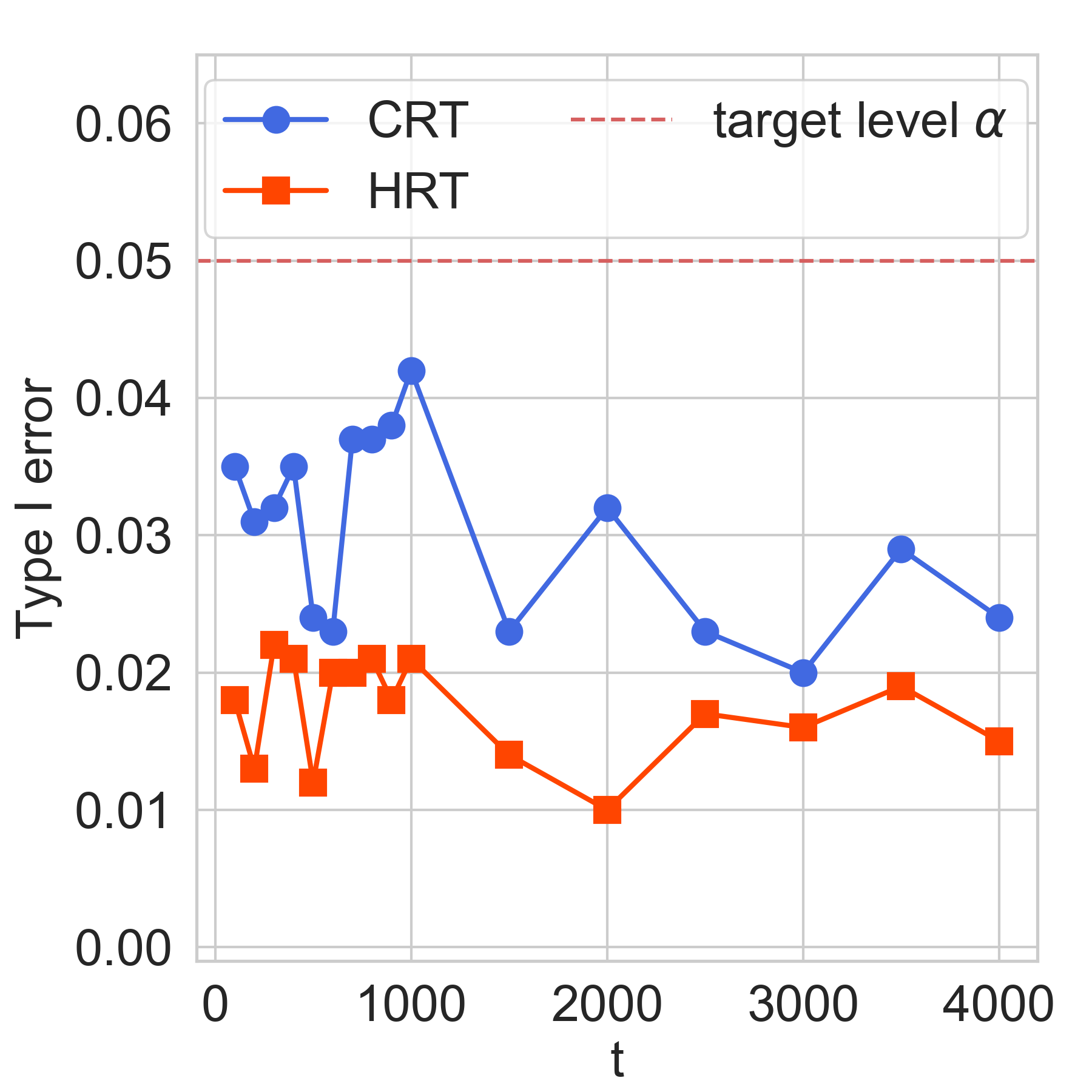}
\par\end{centering}
\vspace{-2mm}
\caption{Type-I error of CRT and HRT evaluated over 1000 realizations of the \texttt{null data} model. Other details are as in Figure~\ref{fig:comparison_graphs}.}
\label{fig:type_1_crt_hrt}
\vspace{-2mm}
\end{figure}

\subsection{Additional Synthetic Experiment with Varying Number of Covariates}
\label{supp:varying_d}
In this section we evaluate the performance of e-CRT as a function of the number of covariates $d$. To do so, we follow the data generation process described in Section~\ref{sec:exp_setup} and sample $n=1000$ data points of different dimensions $d$. Then, we apply the e-CRT to each data set and we also apply CRT and HRT on the whole generated data (i.e., only once) to serve as baseline for reference. Figure~\ref{fig:d_plots} presents the empirical power and the type-I error as a function of the number of covariates $d$. It can be seen that the type-I error is controlled for all $d$, and the empirical power is decreased as we increase the dimension $d$. 

\begin{figure}[H]
     \centering
     \begin{subfigure}[b]{0.49\textwidth}
         \centering
         \includegraphics[width=\textwidth]{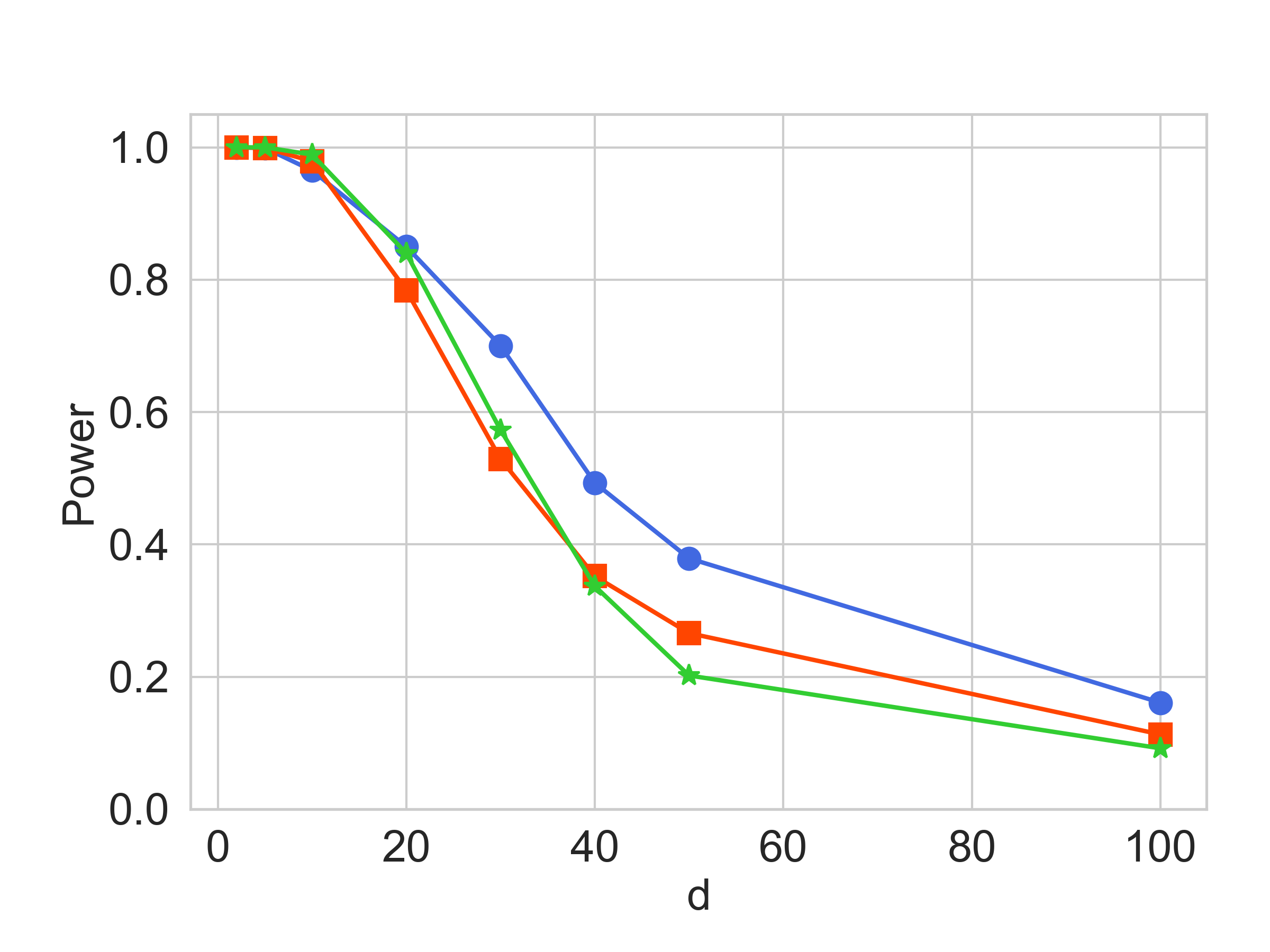}
         \caption{Power}
         \label{fig:power_vs_d}
     \end{subfigure}
     \hfill
     \begin{subfigure}[b]{0.49\textwidth}
         \centering
         \includegraphics[width=\textwidth]{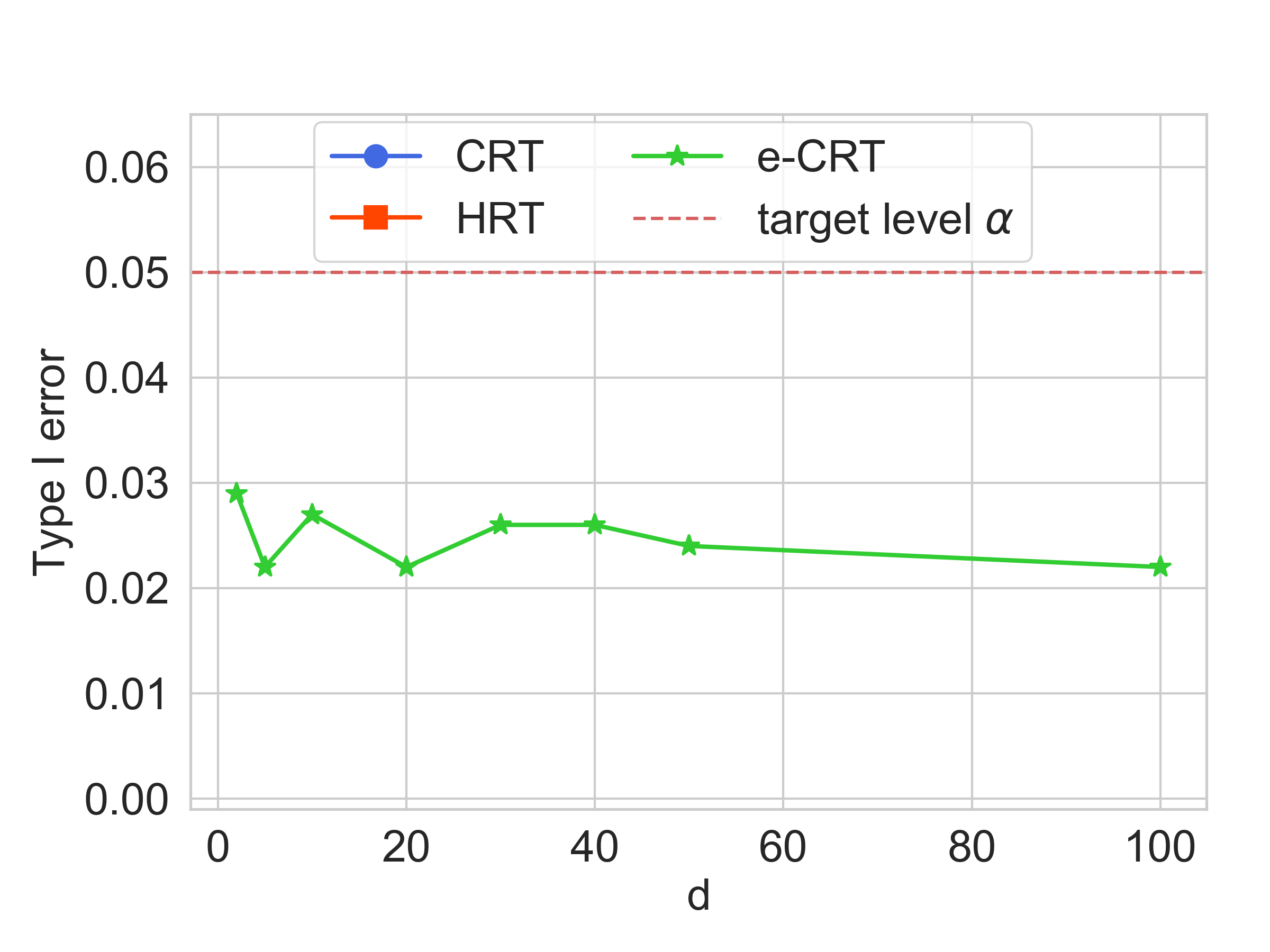}
         \caption{Type-I error}
         \label{fig:error_vs_d}
     \end{subfigure}
        \caption{\textbf{Empirical power and type-I error rate of e-CRT of level $\alpha=0.05$ as a function of number of covariates $d$.} Left: empirical power evaluated on $1000$ realizations of the \texttt{non-null data} model. Right: type-I error rate evaluated on $1000$ realizations of the \texttt{null data} model.}
        \label{fig:d_plots}
\end{figure}

\subsection{Additional Synthetic Experiment with Increasing Correlation Between the Features}
\label{supp:corr}

In this section, we study the effect of the dependency structure between $X$ and $Z$ on the performance of the proposed method. To this end, we sample $(X_t,Z_t) \in \mathbb{R}^{d}$ jointly from $\mathcal{N}(0,\Sigma)$, where $X_t$ is the first covariate of the generated $d$-dimension vector, and $Z_t$ are the rest $d-1$ covariates. We set the $(i,j)$ entry in the covariance matrix to be  $\Sigma_{i,j}=\rho^{|i-j|}$, where $\rho \in [0,1]$ is the auto-correlation parameter. The response $Y_t$ is generated the same way as in Section~\ref{sec:exp_setup}. To examine the impact of the dependency strength, we vary the auto-correlation parameter $\rho$ and apply the e-CRT to the generated data as described in Supplementary Section~\ref{supp:imp_details}. Figure~\ref{fig:corr} presents the empirical power and type-I error evaluated over 100 realizations of the data. Following that figure, one can see that the type-I error is controlled, as expected. One can also see that the power decreases as the correlation increases. This result aligns with previous analysis in the field of CI testing; see, for example, \cite{candes2018panning,dCRT,MRD}.

\begin{figure}
     \centering
     \begin{subfigure}[b]{0.49\textwidth}
         \centering
         \includegraphics[width=\textwidth]{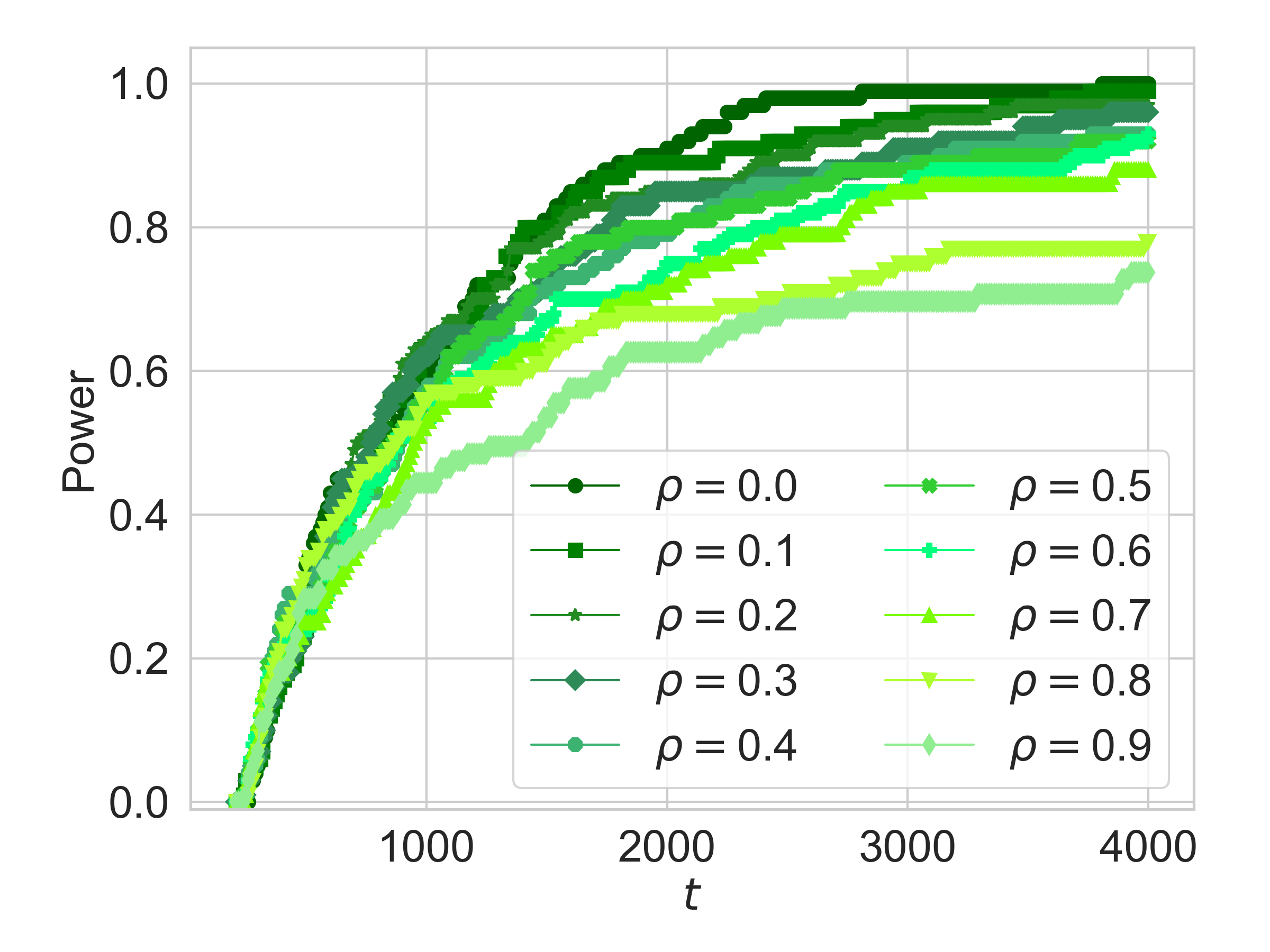}
         \caption{Empirical power}
         \label{fig:corr_pow}
     \end{subfigure}
     \hfill
     \begin{subfigure}[b]{0.49\textwidth}
         \centering
         \includegraphics[width=\textwidth]{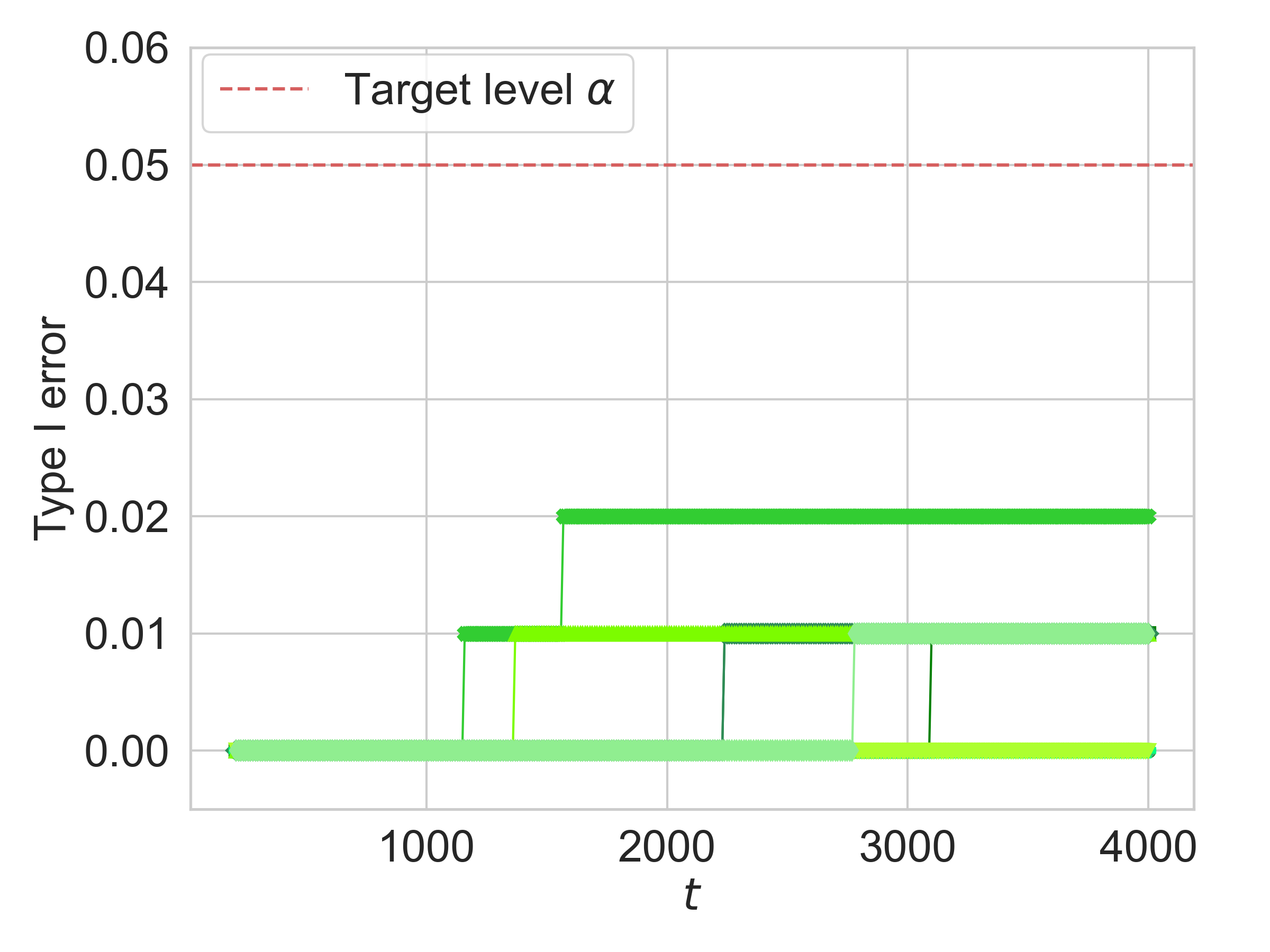}
         \caption{Type-I error}
         \label{fig:corr_err}
     \end{subfigure}
   \caption{\textbf{Empirical power and type-I error rate of e-CRT as a function of the auto-correlation parameter $\rho$.} The empirical power and type-I error evaluated over 100 realizations of the data.}
\label{fig:corr}
\end{figure}

\subsection{Additional Synthetic Experiment on the Robustness of e-CRT}
\label{supp:robustness}

To implement e-CRT, we must generate dummy features $\tilde{X}_t$ from $P_{X\mid Z}$. In practice, when this conditional distribution is unknown it should be estimated from data. In general, there is no formal type-I error control in this case. Therefore, it is important to study the robustness of e-CRT to errors in the estimation of $P_{X \mid Z}$.

\subsubsection{Parametric Estimation of $P_{X \mid Z}$}
\label{supp:parametric_est}
Here, we apply e-CRT to a sequence of data points generated from the \texttt{null data} model from Section~\ref{sec:exp_setup}, but instead of sampling $\tilde{X}_t$ from the true ${P}_{X\mid Z}$, we consider the following data distribution:
\begin{equation}
    \label{eq:X_Z_misspecify}
    X_t \mid Z_t \sim \mathcal{N}(u^\top Z_t,\tilde{\sigma}), \quad \text{where} \quad u \sim \mathcal{N}(0,I_d).
\end{equation}
When setting $\tilde{\sigma}=\sigma=1$ we recover the true ${P}_{X\mid Z}$, and by increasing (resp. decreasing) $\tilde{\sigma}$ we move further away from the true conditional distribution. Figure~\ref{fig:error_sigma_ecrt} presents the type-I error rate as a function of $\tilde{\sigma}$, where each curve corresponds to a different number of samples used for initial training $n_\text{init}$. The test is applied to $n=1000$ fresh samples, in addition to the $n_\text{init}$ ones. Interestingly, observe how the type-I error is conservatively controlled for small values of $\tilde{\sigma}<\sigma = 1$, whereas inflation in the type-I error is reported for larger values of $\tilde{\sigma}$. Observe also how this type-I error inflation is mitigated when using more samples for initial training.
To illustrate this behavior from a different angle, we present in Figure~\ref{fig:mse_sigma_ecrt} the difference $\tilde{q}_t-q_t$ as a function of $t$ for different values of $\tilde{\sigma}$, with the choice of $n_\text{init}=20$. As displayed, the difference $\tilde{q}_t-q_t$ that corresponds to $\tilde{\sigma}=0.1$ tends to be smaller than the one corresponding to the true $\tilde{\sigma} = \sigma = 1$, which is in line with the tendency of our method to construct conservative martingale when $\tilde{\sigma}$ is small. On the other hand, when $\tilde{\sigma}=3$ the difference $\tilde{q}_t-q_t$ tends to be larger than that of $\tilde{\sigma} = 1$, and this gap decreases as $t$ increases, where the difference is closer to zero for $t=150$. This shows that the predictive model we use (lasso) tends to ignore the null feature with the increase of the sample size, and thus the type-I error is moderated for larger $n_\text{init}$.

\begin{figure}
     \centering
     \begin{subfigure}[b]{0.49\textwidth}
         \centering
         \includegraphics[width=\textwidth]{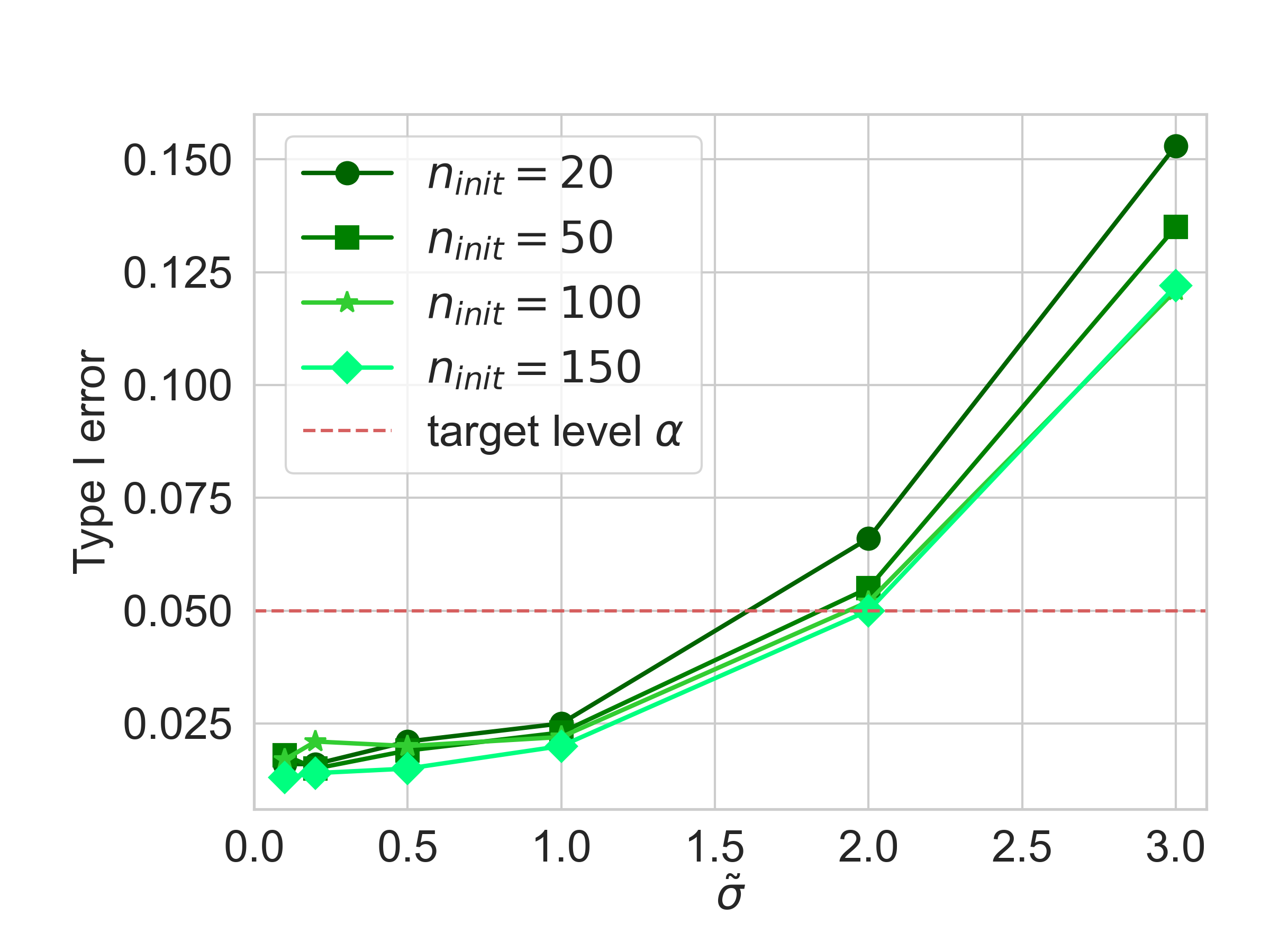}
         \caption{}
         \label{fig:error_sigma_ecrt}
     \end{subfigure}
     \hfill
     \begin{subfigure}[b]{0.49\textwidth}
         \centering
         \includegraphics[width=\textwidth]{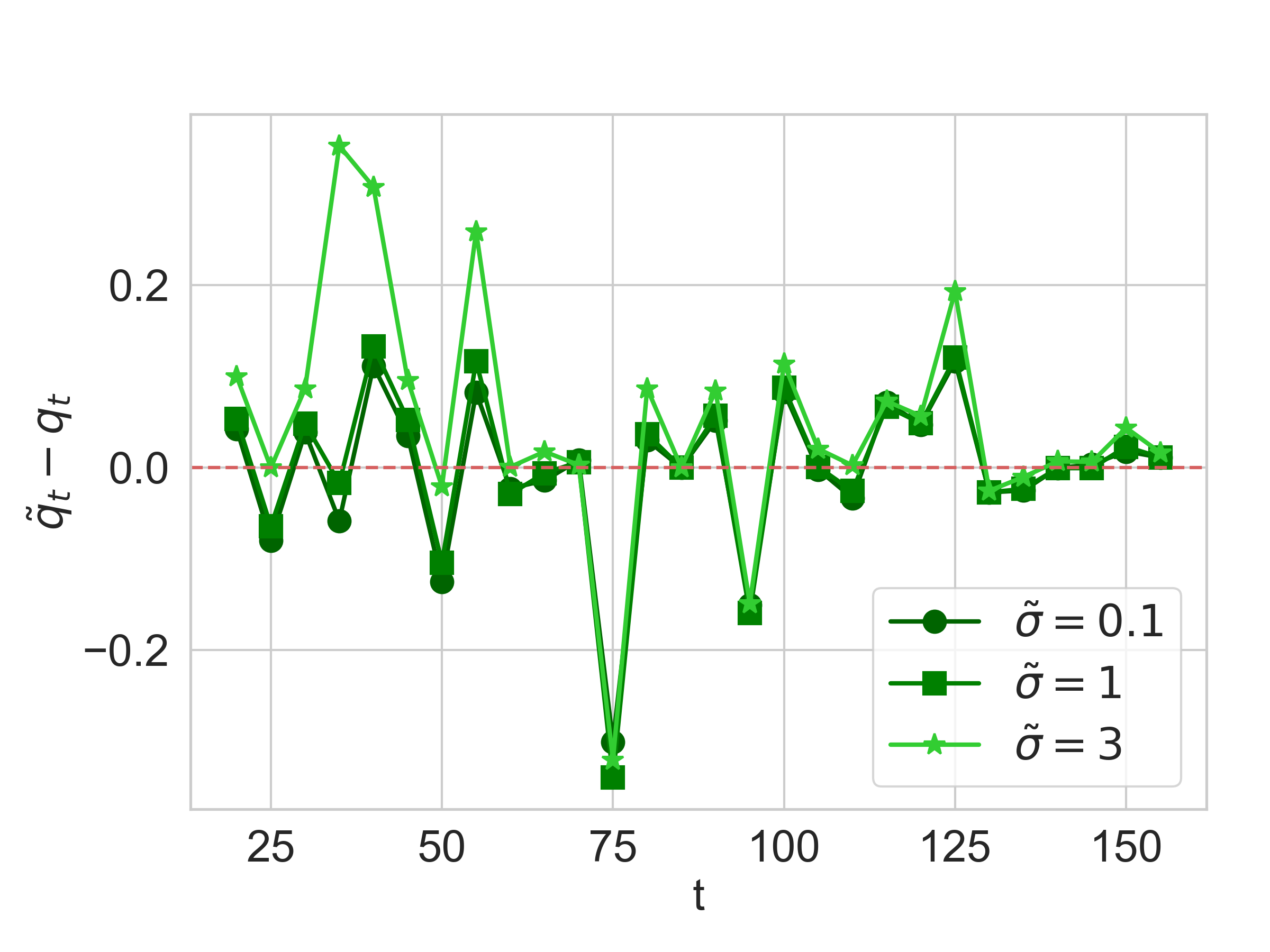}
         \caption{}
         \label{fig:mse_sigma_ecrt}
     \end{subfigure}
        \caption{\textbf{Robustness experiments with simulated data.} The dummies $\tilde{X}_t$ are generated from a misspecify model of $P_{X \mid Z}$, where the farther $\hat{\sigma}$ from $\sigma = 1$ the larger the estimation error of $P_{X \mid Z}$. (a) Type-I error of e-CRT evaluated over 1000 realizations of the \texttt{null data} model, where each curve represents a different number of samples used for initial training of the predictive model. (b) $\tilde{q}_t-q_t$ of a single realization of the \texttt{null data} model as a function of $t$, with $n_\text{init}=20$.}
        \label{fig:sigma_ecrt_plots}
\end{figure}

\subsubsection{Non-Parametric Estimation of $P_{X \mid Z}$}
\label{supp:non_parametric_est}
In this section, we consider a more challenging scenario in which $Z_t \sim \mathcal{N}(0,I_{19})$ and $X_t \mid Z_t$ follows a Student-$t$ distribution with 5 degrees of freedom and mean equals to $Z_t^{(1)}Z_t^{(2)}$. Here, $Z_t^{(i)}$ refers to the $i$'th covariate of $Z_t$. We generate the response $Y_t$ as described in Section~\ref{sec:exp_setup}. To estimate $P_{X \mid Z}$, we first generate 3000 unlabeled samples $(X,Z)$ and use the non-linear density estimation method proposed by \cite{rosenberg2022fast}, called NL-VQR. For hyperparameter tuning, we train NL-VQR on 2000 of the unlabeled samples and subsequently evaluated its goodness-of-fit on the remaining 1000 unlabeled samples using the KDE-L1 metric as described by \cite{rosenberg2022fast}. We then train the NL-VQR with all the 3000 unlabeled samples using the chosen hyperparameters. We deploy the e-CRT on 3000 new, labeled data points, sampled from the same distribution, with a kernel ridge regression model $\hat{f}_t$ with a polynomial kernel of degree 2, whose parameters are tuned by 5-fold cross-validation. The model is fitted on $\{(X_s,Y_s,Z_s)\}_{s=1}^{t-1}$, at each time step $t$. Figure~\ref{fig:stud_t} presents the empirical power and type-I error obtained by the e-CRT. Observe how the type-I error rate grows slowly, but it is controlled even for a relatively large number of samples. Observe also how the power reaches 1 when the sample size is relatively large. 

\begin{figure}
     \centering
     \begin{subfigure}[b]{0.49\textwidth}
         \centering
         \includegraphics[width=\textwidth]{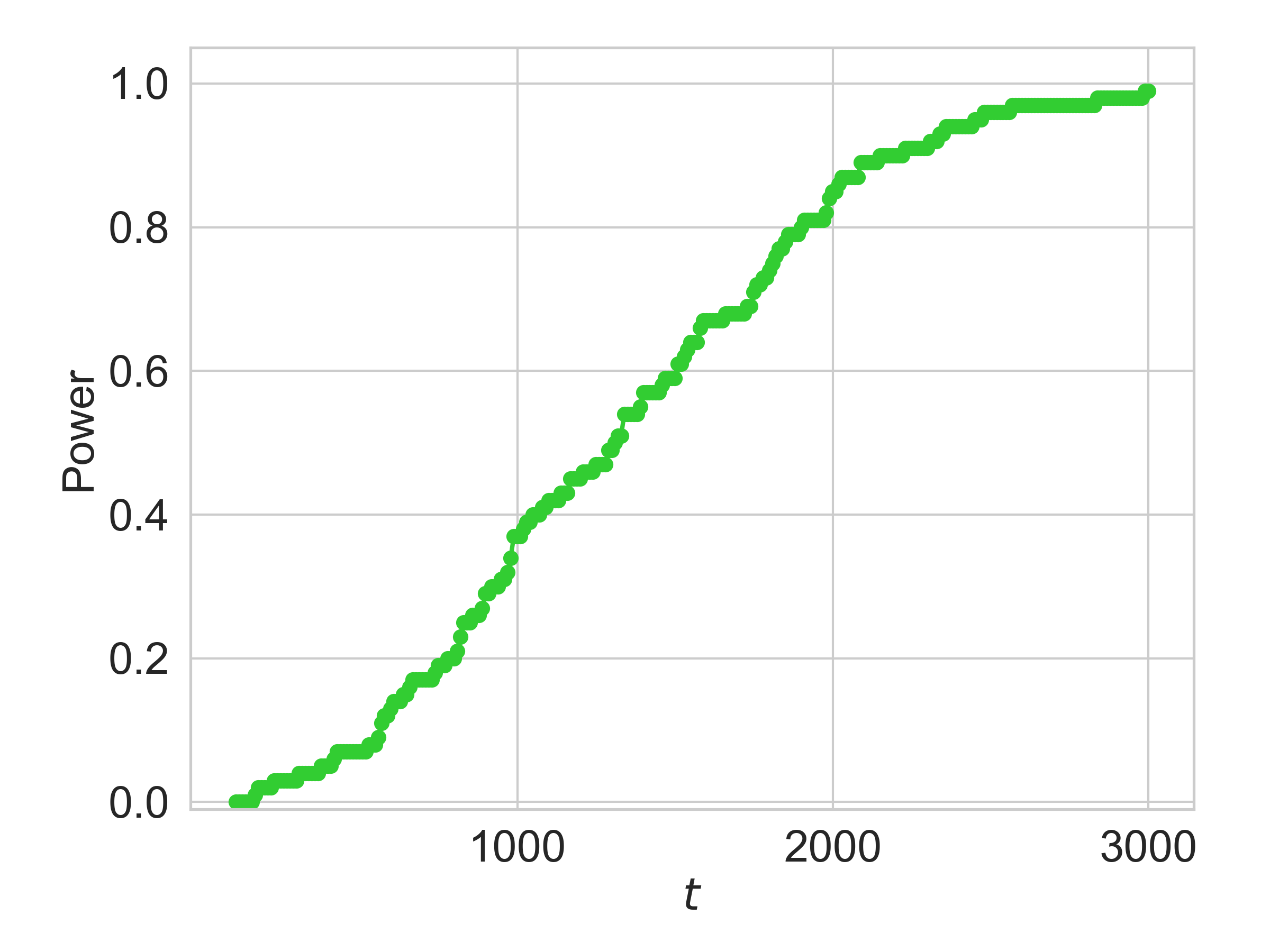}
         \caption{Empirical power}
         \label{fig:stud_t_pow}
     \end{subfigure}
     \hfill
     \begin{subfigure}[b]{0.49\textwidth}
         \centering
         \includegraphics[width=\textwidth]{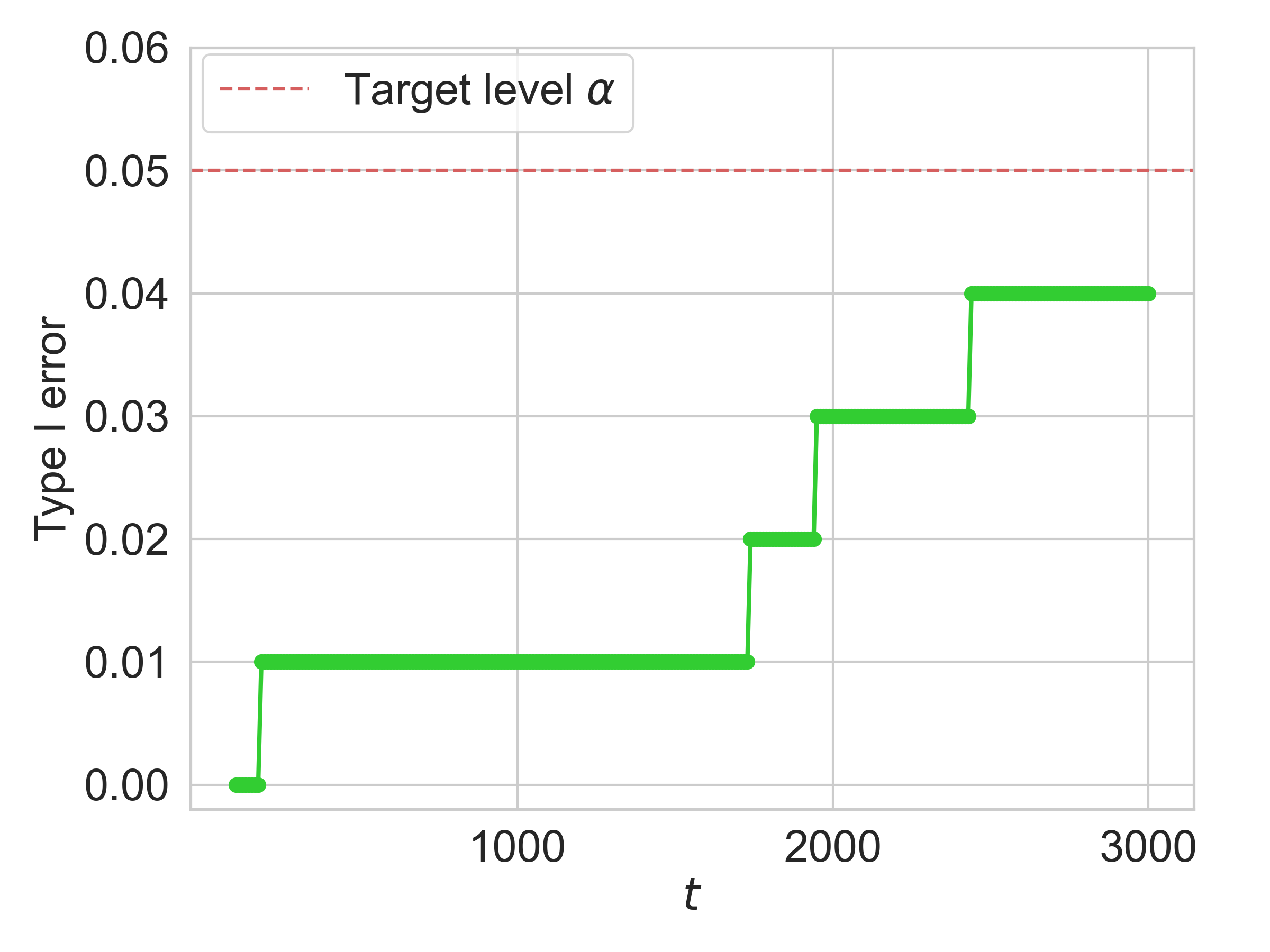}
         \caption{Type-I error}
         \label{fig:stud_t_err}
     \end{subfigure}
    \caption{\textbf{Robustness experiments with simulated data where $X \mid Z$ follows a Student-$t$ distribution.} The dummies $\tilde{X}_t$ are generated from an estimated $P_{X \mid Z}$, using the density estimation method proposed by \cite{rosenberg2022fast}. The empirical power and type-I error are evaluated over 100 realizations of the data.}
    \label{fig:stud_t}
\end{figure}

\section{SUPPLEMENTARY DETAILS ON REAL DATA EXPERIMENTS}
\label{supp:real_exp}

\subsection{Fund Replication Experiment}

\subsubsection{Supplementary Implementation Details}
\label{supp:fund_imp}
Here we provide supplementary details on the implementation of the fund replication experiment, described in Section~\ref{sec:fund_rep}. We denote by $X_t^j \in \mathbb{R}$ the $t$th log return of the $j$th stock, and by $Z_t^j \in \mathbb{R}^{d-1}$ the vector of the $t$th log returns of all the stocks except $X_t^j$.
We deploy the e-CRT on the above data stream as in the synthetic experiments from Section~\ref{sec:exp_setup}, but with the following adaptations. 
Since real data sets tend to have outliers, we choose to work with larger batches and a more moderate betting function that takes into account the magnitude of the error, not only its direction as happen in the sign function used in the synthetic experiments. Specifically, we form the betting function as $g(a,b)=\textrm{tanh}(20 \cdot (b-a)/\textrm{max}\{a,b\})$, and implement our method with an ensemble over batches of size $\{5,10,20\}$. As a strong baseline for reference, we apply the offline CRT and offline HRT on the whole data set, and use lasso regression model with 5-fold cross-validation to tune its hyper-parameter.  In contrast to the controlled synthetic experiments from Section~\ref{sec:syn_experiments}, here $P_{X^j \mid Z^j}$ is unknown and thus we must estimate it from the data to generate $\tilde{X}^j$, both for e-CRT and for CRT and HRT. For the offline tests, we approximate it by fitting a multivariate Gaussian on all the samples. For e-CRT, we set $n_{\text{init}}=500$, fit a multivariate Gaussian on the first 500 samples, and use the rest samples for testing.

\subsubsection{Supplementary Table}
\label{supp:fund_table}


\subsection{Supplementary Details on the HIV Drug Resistance Experiment}

\subsubsection{Data and Implementation Details}
\label{supp:hiv_imp}



In Section~\ref{sec:hiv} we present an experiment of detection mutations in HIV that are associated with drug resistance. The data set\footnote{Data is available online at \url{https://hivdb.stanford.edu/pages/published_analysis/genophenoPNAS2006}} we consider there has not been collected sequentially, and thus it is not ideal to present the strength of our sequential testing procedure. Yet, we choose this task because of its importance, and since it has been studied in depth in the knockoff literature \cite{barber2015controlling,lu2018deeppink,romano2020deep,MRD}. In particular, this data set is convenient to analyze as the effect of each mutation on drug resistance---reported by previous scientific works---is summarized in \url{https://hivdb.stanford.edu/dr-summary/comments/PI/}.

We denote by $(X^{j}_t, Z^{j}_t, Y_t)$ the $t$th sample, where $X^j_t \in \{0,1\}$ indicates the presence or absence of the $j$th mutation, and $Z^j_t \in \mathbb{R}^{d-1}$ is a vector that contains all the remaining measured mutations in locations $1, 2, \dots, j-1, j+1, \dots, d$. The response $Y_t$ represents the log-fold increase in drug resistance. We deploy the e-CRT the same way as described in Section~\ref{sec:fund_rep}, with an additional adaptation; we fit a 5-fold cross-validated lasso model $\hat{f}_t$ on $\{(X_s,Y_s,Z_s)\}_{s=1}^{t-1}$ at each time step $t$, in contrast to the 1-fold cross-validation approach we used in the previous experiments; we use the latter to reduce computational cost, illustrating how to combine e-CRT with online learning algorithms. Naturally, the 5-fold cross validation approach leads to a better choice of lasso's hyper-parameter, and thus obtaining more accurate predictive models. 
Next, we approximate $P_{X^j \mid Z^j}$ as follows. Since $X^j$ is binary, we sample $\tilde{X}^j_t$ from a Bernoulli distribution with probability of success $\hat{\pi}^j(Z_t^j)$, where $\hat{\pi}^j(Z_t^j)$ estimates $P_{X^j \mid Z^j}(X^j = 1\mid Z^j)$. We formulate this estimator by fitting a logistic regression model on the unlabeled data $\{(X_t^j, Z_t^j)\}_{t=1}^n$, with an $l_2$ regularization whose penalty parameter is tuned via 10-fold cross-validation.

\subsubsection{Supplementary Results}
\label{supp:hiv_res}

\begin{longtable}{
    c|c|
    >{\collectcell\pvalf}c<{\endcollectcell}| 
    >{\collectcell\pvalf}c<{\endcollectcell}| 
    >{\collectcell\evalf}c<{\endcollectcell}
    c
}
\caption{Summary of the output of CRT, HRT and e-CRT applied to each of the HIV mutations in the data. The type of each mutation (Major, Minor, Accessory, Other, Unknown) represents the effect of the feature on drug resistance as reported by previous studies.  The other details are as in Table~\ref{tab:fund_rep}.}\label{tab:real_full_results}\\
\hline
\multirow{2}{*}{ Feature Name } & \multirow{2}{*}{ Mutation Type } & CRT & HRT & \multicolumn{2}{c}{e-CRT} \\
& & p-value & p-value & \ccell{$S_{t_\text{stop}}$} & \ccell{$t_\text{stop}$} \\
\hline
10F & Accessory & 0.005 & 0.001 & 25.5 & 680 \\ 
10I & Other & 0.005 & 0.001 & 31.8 & 520 \\ 
10V & Other & 0.01 & 0.001 & 25.5 & 1280 \\ 
11I & Other & 0.005 & 0.009 & 5.5 & 1555 \\ 
11L & Other & 0.98 & 1 & 1 & 1555 \\ 
12A & Unknown & 0.94 & 1 & 1 & 1555 \\ 
12I & Unknown & 0.716 & 1 & 1 & 1555 \\ 
12K & Unknown & 0.806 & 1 & 0.9 & 1555 \\ 
12N & Unknown & 0.711 & 1 & 1 & 1555 \\ 
12P & Unknown & 0.458 & 1 & 1 & 1555 \\ 
12S & Unknown & 0.204 & 0.556 & 1 & 1555 \\ 
13V & Unknown & 0.005 & 0.001 & 23.8 & 1540 \\ 
14R & Unknown & 0.03 & 0.042 & 1.2 & 1555 \\ 
15V & Unknown & 0.925 & 0.929 & 0.2 & 1555 \\ 
16A & Unknown & 0.005 & 0.001 & 20.7 & 1020 \\ 
16E & Unknown & 0.647 & 1 & 0.8 & 1555 \\ 
18H & Unknown & 0.925 & 1 & 0.5 & 1555 \\ 
19I & Unknown & 0.438 & 1 & 0.4 & 1555 \\ 
19P & Unknown & 0.637 & 1 & 1 & 1555 \\ 
19Q & Unknown & 0.826 & 1 & 1 & 1555 \\ 
19T & Unknown & 0.816 & 1 & 1 & 1555 \\ 
19V & Unknown & 0.776 & 1 & 0.7 & 1555 \\ 
20I & Other & 0.01 & 0.033 & 1 & 1555 \\ 
20M & Other & 1 & 1 & 1 & 1555 \\ 
20R & Other & 0.005 & 0.004 & 26 & 1240 \\ 
20T & Accessory & 0.005 & 0.002 & 40.9 & 1160 \\ 
20V & Other & 0.254 & 1 & 0.8 & 1555 \\ 
22V & Unknown & 0.025 & 0.116 & 5.7 & 1555 \\ 
23I & Accessory & 1 & 0.357 & 0.5 & 1555 \\ 
24F & Accessory & 0.005 & 0.001 & 20.9 & 1000 \\ 
24I & Accessory & 0.005 & 0.002 & 24.9 & 940 \\ 
30N & Major & 0.03 & 0.148 & 3.2 & 1555 \\ 
32I & Major & 0.01 & 0.007 & 29 & 880 \\ 
33F & Accessory & 0.005 & 0.001 & 27.3 & 600 \\ 
33I & Other & 0.045 & 0.096 & 1.9 & 1555 \\ 
33V & Other & 0.622 & 1 & 1 & 1555 \\ 
34D & Other & 0.09 & 1 & 1 & 1555 \\ 
34Q & Other & 0.03 & 0.002 & 27.3 & 1540 \\ 
35D & Other & 0.388 & 0.282 & 0.4 & 1555 \\ 
35G & Other & 0.826 & 0.887 & 1.2 & 1555 \\ 
35N & Other & 0.945 & 1 & 0.9 & 1555 \\ 
35Q & Other & 1 & 1 & 1 & 1555 \\ 
36I & Other & 0.925 & 0.777 & 0.4 & 1555 \\ 
36L & Other & 0.08 & 0.125 & 1.2 & 1555 \\ 
36V & Other & 0.667 & 0.423 & 2.7 & 1555 \\ 
37C & Other & 0.662 & 1 & 1 & 1555 \\ 
37D & Other & 0.015 & 0.1 & 3.2 & 1555 \\ 
37E & Other & 0.398 & 1 & 0.8 & 1555 \\ 
37H & Other & 0.672 & 1 & 1 & 1555 \\ 
37S & Other & 0.672 & 0.481 & 0.4 & 1555 \\ 
37T & Other & 0.776 & 1 & 0.5 & 1555 \\ 
37X & Other & 0.975 & 0.248 & 1 & 1555 \\ 
39Q & Other & 0.985 & 1 & 0.8 & 1555 \\ 
39S & Other & 0.965 & 0.392 & 1 & 1555 \\ 
41K & Other & 0.368 & 1 & 0.4 & 1555 \\ 
43T & Accessory & 0.005 & 0.001 & 21 & 880 \\ 
45R & Unknown & 0.209 & 0.223 & 0.8 & 1555 \\ 
46I & Major & 0.005 & 0.001 & 23.3 & 520 \\ 
46L & Major & 0.005 & 0.001 & 21 & 980 \\ 
46V & Accessory & 0.075 & 1 & 0.8 & 1555 \\ 
47A & Major & 0.005 & 0.028 & 8 & 1555 \\ 
47V & Major & 0.005 & 0.001 & 22.4 & 240 \\ 
48M & Major & 0.005 & 0.067 & 6.5 & 1555 \\ 
48V & Major & 0.005 & 0.003 & 23.6 & 660 \\ 
50L & Major & 0.005 & 0.001 & 20 & 700 \\ 
50V & Major & 0.005 & 0.001 & 21.3 & 440 \\ 
53L & Accessory & 0.025 & 1 & 37 & 1320 \\ 
54A & Major & 0.005 & 0.001 & 29.5 & 760 \\ 
54L & Major & 0.02 & 0.071 & 20.6 & 1160 \\ 
54M & Major & 0.005 & 0.003 & 21.5 & 960 \\ 
54S & Major & 0.095 & 0.363 & 1.8 & 1555 \\ 
54T & Major & 1 & 0.998 & 1 & 1555 \\ 
54V & Major & 0.005 & 0.001 & 33.1 & 180 \\ 
55R & Unknown & 1 & 1 & 0.6 & 1555 \\ 
57G & Unknown & 0.478 & 1 & 0.3 & 1555 \\ 
57K & Unknown & 0.1 & 0.07 & 0.9 & 1555 \\ 
58E & Accessory & 0.02 & 0.002 & 1.9 & 1555 \\ 
60E & Unknown & 0.577 & 1 & 0.5 & 1555 \\ 
61E & Unknown & 0.025 & 0.352 & 0.9 & 1555 \\ 
61H & Unknown & 0.756 & 0.881 & 0.5 & 1555 \\ 
61N & Unknown & 0.637 & 0.642 & 1.1 & 1555 \\ 
62V & Unknown & 0.129 & 0.057 & 1.1 & 1555 \\ 
63A & Unknown & 0.174 & 1 & 0.8 & 1555 \\ 
63C & Unknown & 0.925 & 1 & 1 & 1555 \\ 
63H & Unknown & 0.915 & 1 & 1 & 1555 \\ 
63P & Unknown & 0.005 & 0.072 & 28.4 & 160 \\ 
63Q & Unknown & 0.642 & 1 & 0.8 & 1555 \\ 
63S & Unknown & 0.701 & 1 & 1.3 & 1555 \\ 
63T & Unknown & 0.328 & 1 & 0.4 & 1555 \\ 
63V & Unknown & 0.826 & 1 & 1 & 1555 \\ 
63X & Unknown & 0.766 & 1 & 1 & 1555 \\ 
64L & Unknown & 0.005 & 0.011 & 28.6 & 1300 \\ 
64M & Unknown & 0.214 & 0.706 & 0.4 & 1555 \\ 
64V & Unknown & 0.95 & 0.442 & 0.5 & 1555 \\ 
65D & Unknown & 0.726 & 0.642 & 0.6 & 1555 \\ 
66F & Unknown & 0.01 & 0.144 & 3.1 & 1555 \\ 
66V & Unknown & 0.652 & 1 & 0.9 & 1555 \\ 
67E & Unknown & 0.91 & 1 & 0.9 & 1555 \\ 
67F & Unknown & 0.159 & 0.557 & 1.1 & 1555 \\ 
69Q & Unknown & 0.811 & 1 & 0.7 & 1555 \\ 
69R & Unknown & 0.279 & 1 & 0.8 & 1555 \\ 
69Y & Unknown & 1 & 1 & 0.8 & 1555 \\ 
70E & Unknown & 0.736 & 1 & 1 & 1555 \\ 
70R & Unknown & 0.085 & 0.061 & 1.4 & 1555 \\ 
70T & Unknown & 0.846 & 1 & 1 & 1555 \\ 
71I & Other & 0.488 & 0.546 & 1.6 & 1555 \\ 
71L & Other & 0.114 & 1 & 1 & 1555 \\ 
71T & Other & 0.03 & 0.097 & 0.8 & 1555 \\ 
71V & Other & 0.03 & 0.356 & 7.3 & 1555 \\ 
72E & Unknown & 0.716 & 1 & 0.6 & 1555 \\ 
72M & Unknown & 0.055 & 1 & 5.2 & 1555 \\ 
72R & Unknown & 0.756 & 1 & 0.8 & 1555 \\ 
72T & Unknown & 0.07 & 0.034 & 1 & 1555 \\ 
72V & Unknown & 0.612 & 0.699 & 0.6 & 1555 \\ 
73A & Accessory & 0.721 & 0.721 & 0.6 & 1555 \\ 
73C & Accessory & 0.005 & 0.004 & 20.3 & 1080 \\ 
73S & Accessory & 0.005 & 0.001 & 22.4 & 1260 \\ 
73T & Accessory & 0.01 & 1 & 3.4 & 1555 \\ 
74A & Unknown & 0.756 & 1 & 1 & 1555 \\ 
74S & Other & 0.284 & 0.404 & 2.5 & 1555 \\ 
76V & Major & 0.005 & 0.001 & 33.1 & 400 \\ 
77I & Unknown & 0.005 & 0.003 & 6.3 & 1555 \\ 
79A & Unknown & 0.806 & 0.341 & 1 & 1555 \\ 
79S & Unknown & 0.726 & 1 & 0.9 & 1555 \\ 
82A & Major & 0.005 & 0.001 & 21.1 & 480 \\ 
82C & Major & 0.005 & 0.001 & 9.2 & 1555 \\ 
82F & Major & 0.005 & 0.012 & 21.4 & 800 \\ 
82I & Other & 1 & 0.941 & 0.7 & 1555 \\ 
82L & Major & 0.836 & 1 & 0.6 & 1555 \\ 
82M & Major & 0.94 & 1 & 0.5 & 1555 \\ 
82S & Major & 0.229 & 0.544 & 1.3 & 1555 \\ 
82T & Major & 0.005 & 0.003 & 25.4 & 880 \\ 
83D & Accessory & 0.075 & 1 & 0.7 & 1555 \\ 
84A & Major & 0.169 & 0.328 & 1.9 & 1555 \\ 
84C & Major & 0.746 & 0.683 & 0.9 & 1555 \\ 
84V & Major & 0.005 & 0.001 & 35.4 & 220 \\ 
85V & Other & 0.866 & 0.874 & 0.4 & 1555 \\ 
88D & Accessory & 0.005 & 0.059 & 20.1 & 1300 \\ 
88G & Major & 0.657 & 1 & 1 & 1555 \\ 
88S & Major & 1 & 0.172 & 0.7 & 1555 \\ 
88T & Major & 0.975 & 1 & 1 & 1555 \\ 
89I & Unknown & 0.005 & 0.044 & 1.4 & 1555 \\ 
89M & Unknown & 0.328 & 0.669 & 0.3 & 1555 \\ 
89V & Accessory & 0.01 & 0.026 & 20.6 & 600 \\ 
90M & Major & 0.005 & 0.001 & 35.7 & 780 \\ 
91S & Unknown & 0.189 & 1 & 1.1 & 1555 \\ 
92K & Unknown & 0.289 & 0.555 & 0.6 & 1555 \\ 
92R & Unknown & 0.736 & 0.62 & 0.7 & 1555 \\ 
93L & Unknown & 0.005 & 0.086 & 3.2 & 1555 \\ 
95F & Unknown & 0.408 & 0.514 & 0.9 & 1555 \\ 
\hline
\end{longtable}

\begin{figure}[t]
     \centering
     \begin{subfigure}[b]{0.329\textwidth}
         \centering
         \includegraphics[width=\textwidth]{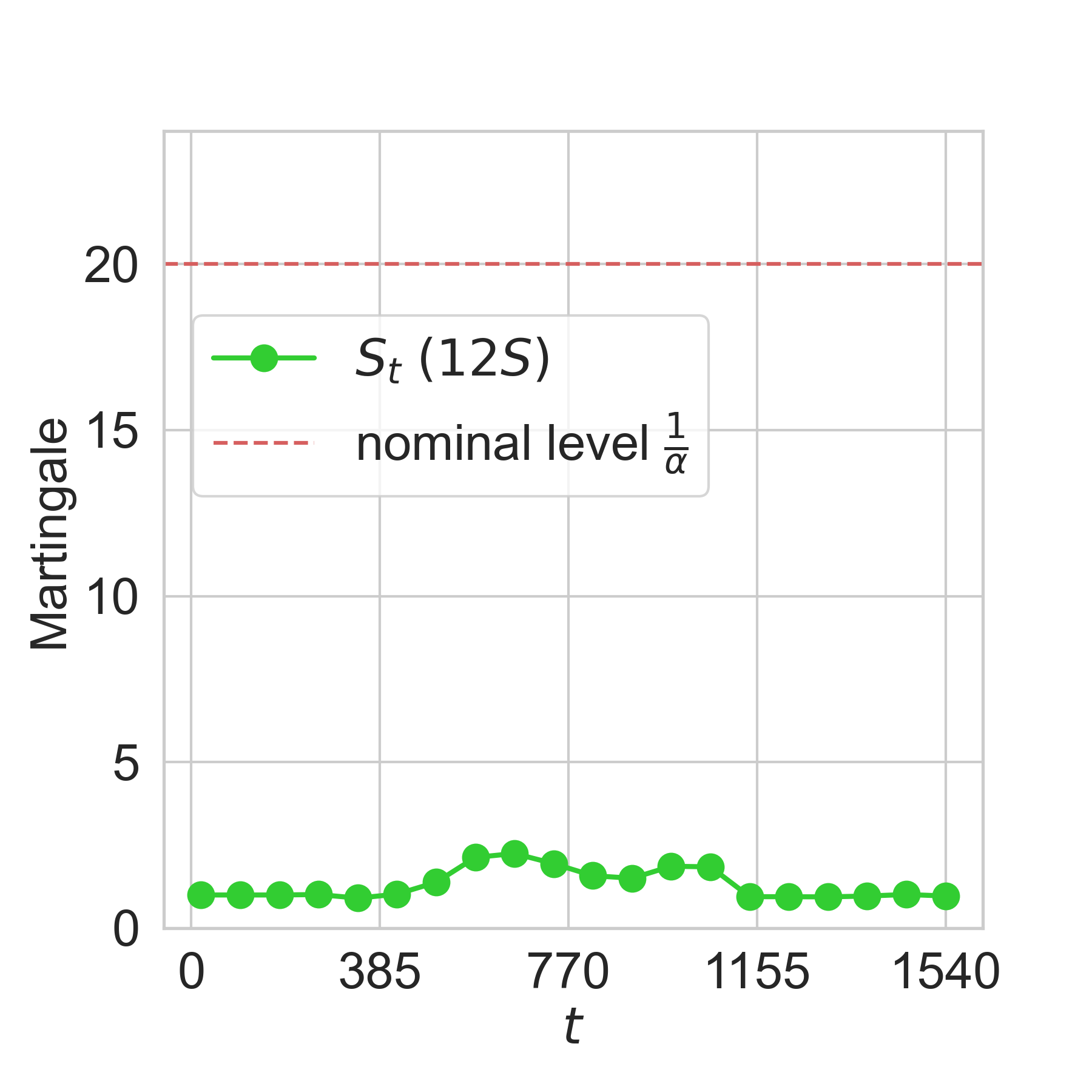}
         \caption{}
         \label{fig:real_data_ecrt_12s}
     \end{subfigure}
     \hfill
     \begin{subfigure}[b]{0.329\textwidth}
         \centering
         \includegraphics[width=\textwidth]{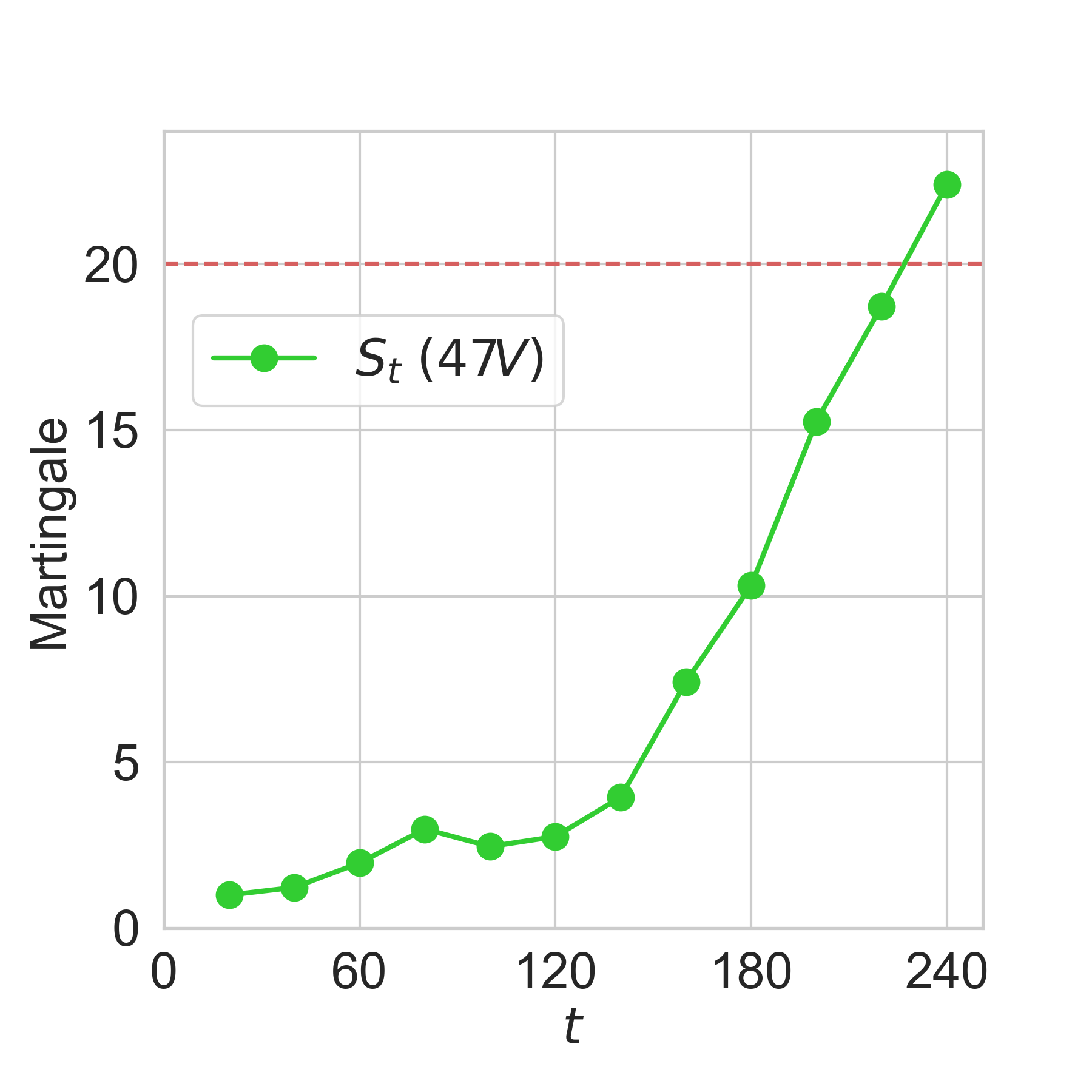}
         \caption{}
         \label{fig:real_data_ecrt_47v}
     \end{subfigure}
     \hfill
     \begin{subfigure}[b]{0.329\textwidth}
         \centering
         \includegraphics[width=\textwidth]{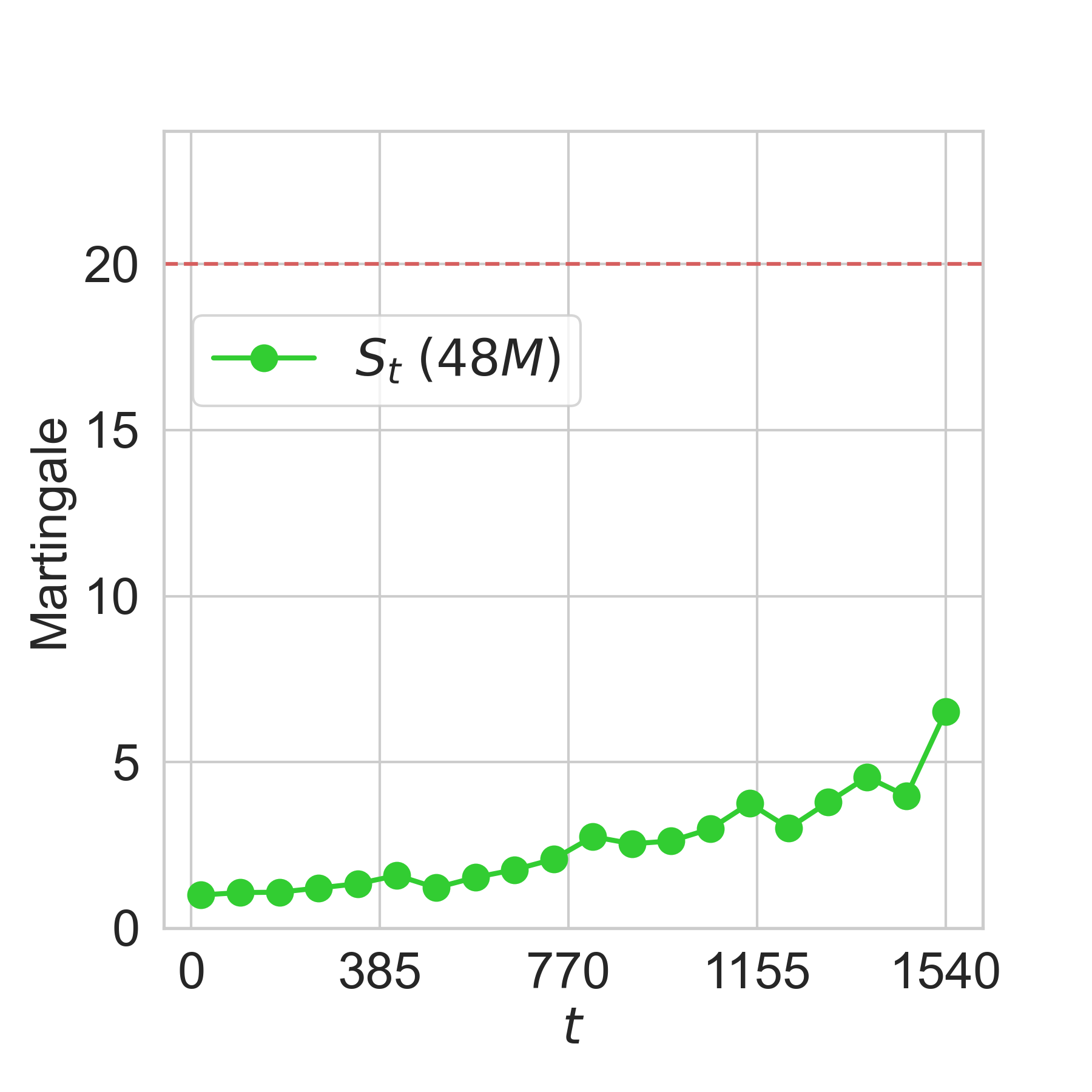}
         \caption{}
         \label{fig:real_data_ecrt_48m}
     \end{subfigure}
     
        \caption{\textbf{Real HIV data experiment.} The test martingale $S_t$ as a function of $t$, evaluated on three representative mutations of HIV. (a) Mutation `12S', which has not been reported in previous studies to have an effect on drug resistance. (b) Mutation `47V', which has been reported to have a major effect. (c) Mutation `48M', which has been reported to have a major effect.}
        \label{fig:real_data_martingales}
\end{figure}

\end{appendix}

\end{document}


%

%

\onecolumn
\aistatstitle{Instructions for Paper Submissions to AISTATS 2022: \\
Supplementary Materials}
\section{Supplementary proofs}
\label{supp:proofs}


\begin{proof}
Observe that it is equivalent to show that
\begin{align}
\label{eq:XXtY_Z}
(X_t,\tilde{X}_t,Y_t) \mid Z_t \overset{d}{=} (\tilde{X}_t,X_t,Y_t) \mid Z_t,
\end{align}
since the conditional distribution 
$P_Z$ is identical on both sides of \eqref{eq:XXtY_Z}. Below, we use discrete random variables for simplicity, as the continuous case can be proved analogously. From the law of total probability, we can write relation \eqref{eq:XXtY_Z} as follows:
\begin{align}
\label{eq:law_of_tot}
    \mathbb{P}_{Y\mid X \tilde{X} Z}(y \mid a, b, z) \cdot \mathbb{P}_{X \tilde{X} \mid Z}(a, b \mid z) &= \mathbb{P}_{Y\mid \tilde{X} X Z}(y \mid a, b,z) \cdot \mathbb{P}_{\tilde{X} X \mid Z}(a, b \mid z).
\end{align}
Now, recall that $(X_t,\tilde{X}_t) \mid Z_t \overset{d}{=} (\tilde{X}_t,X_t) \mid Z_t$ by construction, therefore the conditional distributions $\mathbb{P}_{\tilde{X} X \mid Z}$ and $\mathbb{P}_{ X \tilde{X}\mid Z}$ on both sides of \eqref{eq:law_of_tot} are the same. As a result, it is suffices to show that 
\begin{equation}
    \label{eq:Y_XX}
    Y_t \mid (X_t,\tilde{X}_t,Z_t) \overset{d}{=} Y_t \mid (\tilde{X}_t,X_t,Z_t).
\end{equation}
The above relation holds once observing that
\begin{align}
    \mathbb{P}_{Y \mid X \tilde{X} Z }(y \mid a,b,z) &= \mathbb{P}_{Y \mid Z }(y \mid z) \\
    &= \mathbb{P}_{Y \mid \tilde{X} {X} Z }(y \mid a,b,z), 
\end{align}
where the first and second equality hold since $Y_t \indep X_t \mid Z_t$, $Y_t \indep \tilde{X}_t \mid Z_t$, and $X_t \indep \tilde{X}_t \mid Z_t$, implying that $Y_t \indep (X_t, \tilde{X}_t) \mid Z_t$. This completes the proof.
\end{proof}
\begin{proof}

Observe that the predictive model $\hat{f}_t$ is a fixed function given $\mathcal{F}_{t-1}$, as it is fitted to $\{(X_s,Y_s,Z_s)\}_{s=1}^{t-1}$.
Thus, we can invoke Lemma~\ref{lemma:swap_ZY}, implying that under the null
\begin{equation}
    \label{eq:swap_g}
    g(q_t, \tilde{q}_t) \mid \mathcal{F}_{t-1} \overset{d}{=} g( \tilde{q}_t, q_t) \mid \mathcal{F}_{t-1},
\end{equation}
as $\hat{f}_t$ is a fixed function.
Now, recall that $g(\cdot)$ is an antisymmetric function, i.e.,
\begin{equation}
    \label{eq:anti_sym}
    g(q_t,\tilde{q}_t) = -g(\tilde{q}_t,q_t),
\end{equation} 
and observe that by combining~\eqref{eq:swap_g} and~\eqref{eq:anti_sym} we get the following 
\begin{equation}
    \label{eq:sym_pdf}
        g(q_t, \tilde{q}_t) \mid \mathcal{F}_{t-1} \overset{d}{=} -g(q_t,\tilde{q}_t) \mid \mathcal{F}_{t-1}.
\end{equation}
This implies that, under the null, the density function of $g(q_t, \tilde{q}_t) \mid \mathcal{F}_{t-1}$ is symmetric about $0$, and therefore
\begin{equation}
    \label{eq:expect_0}
    \mathbb{E}_{H_0}[g(q_t,\tilde{q}_t)\mid \mathcal{F}_{t-1}] = \mathbb{E}_{H_0}[W_t\mid \mathcal{F}_{t-1}] = 0.
\end{equation}
\end{proof}
\begin{proof}

Note that $S_0=1$ and $S_t$ in \eqref{eq:test_martingale} is non-negative for all $t=\mathbb{N}$ by construction. According to Ville's inequality \eqref{eq:ville}, it is enough to show that $\{S_t : {t \in \mathbb{N}_0}\}$ is a supermartingale under $H_0$ with respect to the filtration $\{\mathcal{F}_{t-1} : {t \in \mathbb{N}}\}$. This statement holds true since
\begin{align}
    \mathbb{E}_{H_0}[S_t \mid \mathcal{F}_{t-1}] &= \mathbb{E}_{H_0}\bigg[\int_0^1 \prod_{s=1}^t (1 + v \cdot W_j) \cdot h(v) dv \mid \mathcal{F}_{t-1}\bigg] \\
    &= \int_0^1 \prod_{j=1}^{t-1} (1 + v \cdot W_j)\cdot  \mathbb{E}_{H_0}[1 + v \cdot W_t\mid \mathcal{F}_{t-1}] \cdot h(v) dv \\
    &= \int_0^1 \prod_{j=1}^{t-1} (1 + v \cdot W_j)\cdot  (1+v\cdot\mathbb{E}_{H_0}[W_t \mid \mathcal{F}_{t-1}]) \cdot h(v) dv\\
&= \int_0^1 \prod_{j=1}^{t-1} (1 + v \cdot W_j) \cdot h(v) dv =S_{t-1}.
\end{align}
The second equality is due Tonelli's theorem~\cite{tonelli}, as $\prod_{j=1}^t (1 + v \cdot W_j) \cdot h(v)$ is non-negative for all $t \in \mathbb{N}$, and $h(v)$ is a probability density function. The last equality holds by invoking Lemma~\ref{lemma:R_bet}.
\end{proof}

\section{Supplementary details on the experiment from Figure~\ref{fig:density_plots}}
\label{supp:imp_fig2}
In Section~\ref{sec:formulating} we conduct an experiment that visualizes the advantage of the mixture-method $S_t$~\eqref{eq:test_martingale} over a constant $v$ in $S_t^v$~\eqref{eq:test_martingale_v}, presented in Figure~\ref{fig:density_plots}. There, we construct $S_t$ by applying Algorithm~\ref{alg:practical_eCRT} to the generated data, with the choice of $h(v)$ as the pdf of the uniform distribution on $[0,1]$. The base test martingales $S_t^v$ are constructed in the same fashion but with~\eqref{eq:test_martingale_v} in line 8 of Algorithm~\ref{alg:practical_eCRT} instead of the mixture approach. Below, we provide the implementation details of Algorithm~\ref{alg:practical_eCRT} for this experiment.

\begin{itemize}
    \item We set the betting score function in~\eqref{eq:lambda_t} to be $g(a,b) = \textrm{sign}(b-a)$.
    \item The online learning model for $\hat{f}_t$ takes the form of lasso regression using the hyper-parameter tuning approach described in Section~\ref{sec:practical}; we trained $L=20$ models, each corresponds to a different $\eta$, where the number of samples for initial training is set to be $n_{\text{init}}=20$.
    \item  The test statistic function is the mean squared error of a given batch $T(\{(X_s,Y_s,Z_s)\}_{s=1}^b;\hat{f}) = \frac{1}{b} \sum_{s=1}^b (\hat{f}(X_s)-Y_s)^2$, where we use a batch size of $b=5$.
    \item  We set the de-randomization parameter $K$, described in Section~\ref{sec:practical}, to be equal to $20$.
\end{itemize}

\section{Supplementary details on synthetic experiments}
\label{supp:synth_exp}


\subsection{Type-I error of the offline CRT and HRT}
\label{supp:type_1}
In Section~\ref{sec:exp_power_err} we present the empirical power of e-CRT compared to CRT, HRT and out-of-the-box sequential versions of them ADDIS-CRT and ADDIS-HRT, evaluated on simulated data. There, we present the type-I error only for the sequential tests: e-CRT, ADDIS-CRT and ADDIS-HRT. Here, we present in Figure~\ref{fig:type_1_crt_hrt} the type-I error of CRT and HRT evaluated on the same data as in Section~\ref{sec:exp_power_err}. Importantly, the presented type-I error is evaluated by treating the data at each presented time step as a fixed size dataset.

\begin{figure}[H]
\begin{centering}
\includegraphics[width=0.35\textwidth]{figs/fig_n__Type-1-error_hrt_crt.png}
\par\end{centering}
\vspace{-2mm}
\caption{Type-I error of CRT and HRT evaluated over 1000 realizations of the \texttt{null data} model. Other details are as in Figure~\ref{fig:comparison_graphs}.}
\label{fig:type_1_crt_hrt}
\vspace{-2mm}
\end{figure}

\subsection{Additional synthetic experiment with varying number of covariates}
\label{supp:varying_d}
In this section we evaluate the performance of e-CRT as a function of the number of covariates $d$. To do so, we follow the data generation process described in Section~\ref{sec:exp_setup} and sample $n=1000$ data points of different dimensions $d$. Then, we apply the e-CRT to each data set and we also apply CRT and HRT on the whole generated data (i.e., only once) to serve as baseline for reference. Figure~\ref{fig:d_plots} presents the empirical power and the type-I error as a function of the number of covariates $d$. It can be seen that the type-I error is controlled for all $d$, and the empirical power is decreased as we increase the dimension $d$. 

\begin{figure}[H]
     \centering
     \begin{subfigure}[b]{0.49\textwidth}
         \centering
         \includegraphics[width=\textwidth]{figs/fig_p__Power.png}
         \caption{Power}
         \label{fig:power_vs_d}
     \end{subfigure}
     \hfill
     \begin{subfigure}[b]{0.49\textwidth}
         \centering
         \includegraphics[width=\textwidth]{figs/fig_p__Type-1-error.png}
         \caption{Type-I error}
         \label{fig:error_vs_d}
     \end{subfigure}
        \caption{\textbf{Empirical power and type-I error rate of e-CRT of level $\alpha=0.05$ as a function of number of covariates $d$.} Left: empirical power evaluated on $1000$ realizations of the \texttt{non-null data} model. Right: type-I error rate evaluated on $1000$ realizations of the \texttt{null data} model.}
        \label{fig:d_plots}
\end{figure}

\section{Supplementary tables of real
data experiments}
\label{supp:real_exp}



\vfill
\bibliography{biblio}